\renewcommand{\phi}{\varphi}
\titlespacing{\section}{0pt}{1ex}{1ex}
\titlespacing{\subsection}{0pt}{1ex}{0ex}
\titlespacing{\subsubsection}{0pt}{0.5ex}{0ex}
\definecolor{darkred}{rgb}{0.5,0,0}
\definecolor{darkgreen}{rgb}{0,0.5,0}
\definecolor{darkblue}{rgb}{0,0,0.5}
\newcommand*{\bdiv}{%
  \nonscript\mskip-\medmuskip\mkern5mu%
  \mathbin{\operator@font div}\penalty900\mkern5mu%
  \nonscript\mskip-\medmuskip
}
\newcommand{\keywd}[1]{\mathtt{#1}}
\newcommand{\myread}[1]{!{#1}}
\newcommand{\wcpo}[1]{$\omega$-cpo}
\newcommand{\wcpos}[1]{$\omega$-cpos}
\newcommand{\myref}[1]{\keywd{ref}(#1)}
\newcommand{\intt}[1]{\textit{int}(#1)}
\newcommand{\myeffto}[3]{\xrightarrow[#2]{#1\,\mid\, #3}}
\newcommand{\mnond}{\texttt{?}}
\newcommand{\funn}[1]{\textit{fun}(#1)}
\newcommand{\mylet}{\keywd{let}}
\newcommand{\FV}{\textit{FV}}
\newcommand{\partfun}{\rightharpoondown}
\newcommand{\mif}{\keywd{if}}
\newcommand{\mthen}{\keywd{then}}
\newcommand{\melse}{\keywd{else}}
\newcommand{\mequals}{\keywd{=}}
\newcommand{\mtrue}{\keywd{true}}
\newcommand{\mfalse}{\keywd{false}}
\newcommand{\ints}{\mathbb{Z}}
\newcommand{\inttype}{\keywd{int}}
\newcommand{\booltype}{\keywd{bool}}
\newcommand{\unittype}{\keywd{unit}}
\newcommand{\unitval}{\keywd{()}}
\newcommand{\labs}{\mathbb{L}}
\newcommand{\dom}[1]{\mathrm{dom}({#1})}
\newtheorem{assumption}{Assumption}
\newcommand{\pause}{\vspace{1.5ex}}
\newcommand{\ba}{\begin{array}}
\newcommand{\ea}{\end{array}}
\newcommand{\idd}{\textit{id}}
\newcommand{\squelch}[1]{}
\newcommand\orth[2]{\ensuremath{#1\,\bot\, #2}\xspace}
\newcommand{\vfix}[3]{\keywd{rec}\:{#1}\:{#2} = {#3}}
\newcommand{\letin}[2]{\keywd{let}\:{#1}\!=\!{#2}\:\keywd{in}\:}
\newcommand{\opletrec}[3]{\keywd{let\ rec}}
\newcommand{\assign}[2]{{#1}:={#2}}
\newcommand{\rdsin}[1]{\mathrm{rds}({#1})}
\newcommand{\wrsin}[1]{\mathrm{wrs}({#1})}
\newcommand{\myif}[3]{\keywd{if}\ #1\ \keywd{then}\ #2\
  \keywd{else}\ #3}
\newcommand{\myatomic}[1]{\keywd{atomic}(#1)}  
\newcommand{\cas}[3]{\keywd{cas}( #1, #2, #3)}
\newcommand{\mypar}[2]{#1 \| #2}
\newcommand{\semparallel}{~|~}
\newcommand{\eff}{\varepsilon}
\newcommand{\reads}{\mathrm{rds}}
\newcommand{\rds}{\reads}
\newcommand{\writes}{\mathrm{wrs}}
\newcommand{\wrs}{\writes}
\newcommand{\concs}[1]{\mathrm{cos}(#1)}
\newcommand{\locs}[1]{\mathrm{locs}(#1)}
\newcommand{\fullonly}[1]{}
\newlength{\lruleZeroname}
\newcommand{\Locs}{\ensuremath{\labs}}
\newcommand{\new}{\ensuremath{\mathit{new}}}
\newcommand{\Stores}{\ensuremath{\mathbb{H}}}
\newcommand{\sem}[1]{\ensuremath{\llbracket {#1} \rrbracket}}
\newcommand{\semC}[1]{\ensuremath{\llbracket {#1} \rrbracket}}
\newcommand{\semV}[1]{\ensuremath{\llceil {#1} \rrceil
}}
\newcommand{\rEff}[1]{\ensuremath{\mathit{rd}_{#1}}}
\newcommand{\cEff}[1]{\ensuremath{\mathit{co}_{#1}}}
\newcommand{\wEff}[1]{\ensuremath{\mathit{wr}_{#1}}}
\newtheorem{theorem}{Theorem}[section]
\newtheorem{lemma}[theorem]{Lemma}
\newtheorem{definition}[theorem]{Definition}
\newtheorem{example}[theorem]{Example}
\newcounter{Examplecount}
\newcommand\im{\mathrm{Img}}
\newcommand{\loc}{\mathfrak{l}}
\newcommand{\locListOdd}{\ensuremath{\mathfrak{listodd}}\xspace}
\newcommand{\locListEven}{\ensuremath{\mathfrak{listeven}}\xspace}
\newcommand{\locInt}{\ensuremath{\mathfrak{int}}\xspace}
\newcommand{\locHi}{\ensuremath{\mathfrak{snd}}\xspace}
\newcommand{\locLo}{\ensuremath{\mathfrak{fst}}\xspace}
\newcommand{\locMSQ}{\ensuremath{\mathfrak{msq}}\xspace}
\newcommand{\cloc}{\ensuremath{X}\xspace}
\newcommand\E{\ensuremath{\,E\,}\xspace}
\newcommand\w{\ensuremath{\mathsf{w}}\xspace}
\newcommand\q{\ensuremath{\mathsf{q}}\xspace}
\newcommand\heap{\ensuremath{\mathsf{h}}\xspace}
\newcommand\hinit{\ensuremath{\mathsf{h}_{\mathit{init}}}\xspace}
\newcommand\h{\heap}
\renewcommand\k{\ensuremath{\mathsf{k}}\xspace}
\newcommand\heapp{\ensuremath{\mathsf{q}}}
\newcommand{\Values}{\mathbb{V}}
\newcommand{\ValuesB}{\mathbb{VB}}
\newcommand\vval{\ensuremath{a}\xspace}
\newcommand\cval{\ensuremath{\mathsf{c}}\xspace}
\newcommand{\myety}[4]{{#1}\mathrel{\&}{#2} \mid #3 \mid #4}
\newcommand{\valty}[1]{#1}
\newcommand\inR[2]{\ensuremath{#1:#2}}
\newcommand\rloc[3]{\ensuremath{#1 \stackrel{#3}{\sim} #2}}
\newcommand\rrloc[3]{\ensuremath{#1 \stackrel{#3}{=} #2}}
\newcommand\gloc[3]{\ensuremath{#1 \xrightarrow{#3}#2}}
\newcommand\ie{\emph{i.e.}\xspace}
\newcommand\etal{\emph{et al.}\xspace}
\newcommand{\blue}[1]{\textcolor{blue}{#1} }
\newcommand{\tup}[1]{(#1)}
\title{Effect-Dependent Transformations for Concurrent Programs}
\begin{document}
\maketitle
\begin{abstract}  
We describe a denotational semantics for an abstract effect
system for a higher-order, shared-variable concurrent programming
language. We prove the soundness of a number of general effect-based
program equivalences, including a parallelization equation that
specifies sufficient conditions for replacing sequential composition
with parallel composition. Effect annotations are relative to abstract 
locations specified by contracts rather than physical footprints allowing us 
in particular to show the soundness of
some transformations involving fine-grained concurrent data structures, such as
Michael-Scott queues, that allow concurrent access to different parts
of mutable data structures.

Our semantics is based on refining a trace-based semantics for
first-order programs due to Brookes. By moving from concrete to
abstract locations, and adding type refinements that capture the
possible side-effects of both expressions and their concurrent
environments, we are able to validate many equivalences that do not
hold in an unrefined model.  The meanings of types are expressed using
a game-based logical relation over sets of traces. Two programs $e_1$
and $e_2$ are logically related if one is able to solve a two-player
game: for any trace with result value $v_1$ in the semantics of $e_1$
(challenge) that the player presents, the opponent can present an
(response) equivalent trace in the semantics of $e_2$ with a logically related result value $v_2$.

\end{abstract}
\squelch{
\category{F.3.2}{Logic and Meanings of Programs}{Semantics of Programming Languages -- Denotational semantics, Program analysis}
\category{F.3.2}{Logic and Meanings of Programs}{Studies of Program Constructs -- Type structure}
\terms
Languages, Theory
\keywords
Type and effect systems, region analysis, logical relations, parametricity, program transformation
}
\noindent
\section{Introduction}
\label{sec:intro}
Type-and-effect systems refine conventional types with extra
static information capturing a safe upper bound on the possible
side-effects of expression evaluation. Since their introduction by
Gifford and Lucassen \cite{DBLP:conf/lfp/GiffordL86},
effect systems have been used for many purposes, including
region-based memory management \cite{birkedaltoftevejlstrup}, tracking
exceptions \cite{pessauxleroy,bentonbuchlovsky}, communication
behaviour \cite{amtoftnielsons} and atomicity
\cite{flanaganqadeerpldi03} for concurrent programs, and information
flow \cite{brobergsands:flowlocks}. 

A major reason for
tracking effects is to justify program
transformations, most obviously in optimizing compilation \cite{DBLP:conf/icfp/BentonKR98}. For example,
one may remove computations whose results are
unused, \emph{provided} that they are sufficiently pure, or commute two
state-manipulating computations, \emph{provided} that the locations they may
read and write are suitably disjoint. Several groups 
have recently studied the semantics of effect systems, with a focus on
formally justifying such effect-dependent equational reasoning \cite{DBLP:conf/popl/KammarP12,DBLP:conf/aplas/BentonKHB06,DBLP:dblp_conf/popl/Benton0N14,birkedal,DBLP:conf/icfp/ThamsborgB11}. A common approach, which we follow here, is to
interpret effect-refined types using a
logical relation over the (denotational or operational) semantics of
the unrefined (or untyped) language, simultaneously identifying both the
subset of computations that have a particular effect type and a
coarser notion of equivalence (or approximation) on that subset. Such
a semantic approach decouples the meaning of effect-refined types from
particular syntactic rules: one may establish that a term
has a type using various more or less approximate inference systems, or
by detailed semantic reasoning. 

For sequential computations with global state,  denotational
models already provide significant
abstraction. For example, the denotations of \verb|skip| and
\verb|X++;X--| are typically equal, so it is immediate that the
second is semantically pure. More generally, 
the meaning of a
judgement $\Gamma\vdash e:\tau \& \eff$ guarantees that the result of
evaluating $e$ will be of type $\tau$ with side-effects at most $\eff$,
under assumptions $\Gamma$ (a `rely' condition), on the behaviour of
$e$'s free variables. The possible interaction points
between $e$ and its environment are restricted to initial states and
parameter values, and final states and results, of $e$ itself and its
explicitly-listed free variables. Furthermore, all those interaction
points are visible in the term and are governed by specific
annotations appearing in the typing judgement.

For shared-variable concurrency, there are many more possible
interactions. An expression's environment now also includes anything
that may be running concurrently and, moreover, atomic steps of $e$
and its concurrent environment may be arbitrarily interleaved, so it
is no longer sufficient to just consider initial and final states. A
priori, this leads to far fewer equations between programs. For
example, \verb|X++;X--| may be distinguished from \verb|skip| by being
run concurrently with a command that reads or writes \verb|X|. But few
programs do anything useful in the presence of unconstrained
interference, so we need ways to describe and control
it. 
Fine-grained, optimistic algorithms, which rely on
custom protocols being followed by multiple threads with concurrent
access to a shared data structure, can significantly
outperform ones based on coarse-grained locking, but are notoriously
challenging to write and verify.

There is a huge literature on shared-variable
concurrency, from type systems ensuring race-freedom of programs with
locks \cite{abadi:typessafelock} to sophisticated semantic models for
reasoning about refinement of fine-grained concurrent datastructures
\cite{dreyer}. This paper explores effect types as a
straightforward, lightweight interface language for modular reasoning
about equivalence and refinement, e.g. for safely transforming
sequential composition into parallelism.  We show how the semantics of
a simple effect system scales smoothly to the concurrent setting,
allowing us to control interference and prove non-trivial
equivalences, extending (somewhat to our surprise) to the correctness
of some fine-grained algorithms.

We build on a trace semantics for concurrent programs, due to Brookes
\cite{brookes96ic}, which explicitly describes possible interference
by the environment. We extend Brookes's semantics to a higher-order
language and then refine it by a semantically-formulated effect system
that separately tracks: (1) the store effects of an expression during
evaluation; (2) the assumed effects of transitions by the environment;
and (3) the overall end-to-end effect.  
Rather than tracking effects at
the level of individual concrete heap cells, we view the heap as a set
of abstract data structures, each of which may span several locations,
or parts of locations \cite{DBLP:dblp_conf/popl/Benton0N14}. Each abstract location has its own notion of
equality, and its own notion of legal mutation. Write effects, for
example, need only be flagged when the equivalence class of an
abstract location may change. Both typing and refinement judgements may be established by a combination of generic type-based rules and semantic reasoning in the model.






\label{examples}
We begin with some motivating examples. 

\paragraph{Equivalence modulo non-interference:}
Our semantics justifies the following equation \emph{at} the  effect type 
$\unittype\ \&\ \{\cEff{\cloc}\} \mid \eff \mid \eff \cup \{\rEff{\cloc}, \wEff{\cloc}\}$:
\[\begin{array}{l}
 (\assign{\cloc}{\myread{\cloc} + 1}; \assign{\cloc}{\myread{\cloc}+ 1})   = 
 (\assign{\cloc}{\myread{\cloc} + 2}) 
\end{array}
\]
This says that the two commands are equivalent with return type \texttt{unit}, exhibit the effect $\cEff{\cloc}$, signifying concurrent or `chaotic' access to $\cloc$ along the way, and have an overall end-to-end effect of $\eff$ plus reading and writing $\cloc$, \emph{provided} that the effect, $\eff$, of the concurrent environment does not involve $\cloc$.
\nopagebreak

\paragraph{Overlapping References:}
Let $p$,$p^{-1}$ implement a bijection $\ints \to \ints\times\ints$, and consider the following functions:
\[ 
\begin{array}{l}
      \mathsf{readFst}~() = p(\myread{\cloc}).1 \\
       \mathsf{readSnd}~() = p(\myread{\cloc}).2 \\
\mathsf{wrtFst}~ n =   \keywd{let \ rec}~ \mathsf{try}~() = (\keywd{let}~ m = \myread{\cloc} ~ \keywd{in}~ \letin{(x,y)}{p(m)}~ 
\\ \quad \letin{m'}{p^{-1}(n,y)} ~ \keywd{if}~\cas{\cloc}{m}{m'}~\keywd{then}~()~ \keywd{else}~\mathsf{try}~ ()) \\
  ~\keywd{in}~\mathsf{try}~() \\
  \mathsf{wrtSnd}~ n =   \keywd{let \ rec}~ \mathsf{try}~() =  \keywd{let}~ m = \myread{\cloc} ~ \keywd{in}~
 \letin{(x,y)}{p(m)}~ \\ \quad 
 \letin{m'}{p^{-1}(x,n)} ~ \keywd{if}~\cas{\cloc}{m}{m'}~\keywd{then}~()~ \keywd{else}~\mathsf{try}~ () \\  
 ~\keywd{in}~\mathsf{try}~()
\end{array}
\]
which multiplex two abstract integer references onto a single concrete one. Note that the write functions, $\mathsf{wrtFst}$ and $\mathsf{wrtSnd}$, use compare-and-swap,  $\keywd{cas}$, to atomically update the value of the reference. 


Our generic rules then say that a program, $e_1$, that only reads and/or writes one abstract  reference can be commuted, or executed in parallel, with another program, $e_2$, that only reads and/or writes into a different reference. This lets one use types to, say, justify parallelizing a call to $\mathsf{wrtFst}$ followed by one to $\mathsf{wrtSnd}$, even though they read and write the same concrete location, which looks like a race.

\paragraph{Version numbers:}
\newcommand{\clocver}{{\cloc}_{\textrm{ver}}}
\newcommand{\clocval}{{\cloc}_{\textrm{val}}}
One can isolate a transaction that reads and then writes a piece of state simply by enclosing the whole thing in $\myatomic{\cdot}$. A more concurrent alternative adds a monotonic version number to the data. A transaction then works on a private copy, only committing its changes back (and incrementing the version) if the current version number is the same as that of the original copy. We can define an abstract integer reference $\mathfrak{X}$ in terms of two concrete ones, $\clocver$ and $\clocval$, governed by a specification that says  $\myread{\clocval}$ may only change when $\myread{\clocver}$ increases. We define 
\[\begin{array}{l}
\mathsf{transact}~f = \keywd{let\ rec}~\mathsf{try}()= \keywd{let}~(val,ver)=\myatomic{(\myread{\clocval},\myread{\clocver})}\\
\ \keywd{in}~\keywd{let}~res = f(val)\ \keywd{in\ if\ atomic }(\keywd{if}\ \myread{\clocver}=ver\ \keywd{then}\\
\quad \clocver := ver+1;\ \clocval := res;\ \mtrue\ \keywd{else}\ \mfalse)\\
\ \keywd{then}\ \unitval\ \keywd{else}\ \mathsf{try}\unitval\\
\keywd{in}\ \mathsf{try}()
\end{array}
\]
Under the assumption that $f$ is a pure function (has effect type $\inttype\myeffto{\emptyset}{\eff}{\eff}\inttype$ for any $\eff$), we can show
\[
\mathsf{transact}~f = \myatomic{\clocval := f(\myread{\clocval}); \clocver := \myread{\clocver}+1}
\]
at type $\unittype \& \{\rEff{\mathfrak{X}},\wEff{\mathfrak{X}}\} \mid \eff \mid \eff\cup\{\rEff{\mathfrak{X}},\wEff{\mathfrak{X}}\}$ for any $\eff$ not including chaotic access, $\cEff{\mathfrak{X}}$, to $\mathfrak{X}$. The environment effect $\eff$ here \emph{may} include reading and writing $\mathfrak{X}$, so concurrent calls to $\mathsf{transact}$ are linearizable. 

\paragraph{Loop Parallelization:}
Our next example is inspired by a loop unrolling optimization~\cite{DBLP:conf/popl/TristanL10}. Assume given a linked list of integers pointed by $head$. Consider the following functions: 
\[ 
\begin{array}{ll}
  \mathsf{map}~f = &  \keywd{let \ rec }~\mathsf{applyf}~n = \\ 
 & \quad  n.ele := f(n.ele); \keywd{if}~n.next = null~\keywd{then}~\unitval\\
  &  \quad \keywd{else}~\mathsf{applyf}~(n.next)\\
  & \keywd{in}~\keywd{if}~!head = null~\keywd{then}~\unitval~\keywd{else}~\mathsf{applyf}~(!head)\\[2pt]
  \mathsf{map2Par}~f = & \keywd{let \ rec }~\mathsf{applyf2}~n = \\ 
  & \quad \blue{n.ele := f(n.ele) ~ {\underline{~\|~}} ~ {n.next.ele := f(n.next.ele)}}; \\
  & \quad \keywd{if}~n.next.next = null~\keywd{then}~\unitval\\
  & \quad \keywd{else}~\keywd{if}~n.next.next.next = null~\keywd{then} \\ 
  & \qquad \quad n.next.next.ele = f(n.next.next.ele)\\
  & \qquad \keywd{else}~ \mathsf{applyf2}~(n.next.next)\\
  & \keywd{in}~ \keywd{if}~!head = null~\keywd{then}~\unitval \\
  & \quad~ \keywd{else}~\keywd{if}~!head.next = null~ \keywd{then}\\
  & \qquad  !head.next.ele := f(!head.next.ele)\\
  & \quad ~ \keywd{else}~\mathsf{applyf2}~(!head) 
\end{array}
\]  
The function $\mathsf{map}$ simply applies a pure function $f$ to each element of the list, each element per iteration. The function $\mathsf{map2Par}$, on the other hand, applies $f$ to two consecutive elements of the list in parallel, potentially allowing one to exploit multiple cores. Our effect-based reasoning will soundly transform $\mathsf{map}$ into $\mathsf{map2Par}$ (under the assumption that the environment does not interfere with the list).  

\begin{figure}[t]
\begin{center}
 \includegraphics[width=8cm]{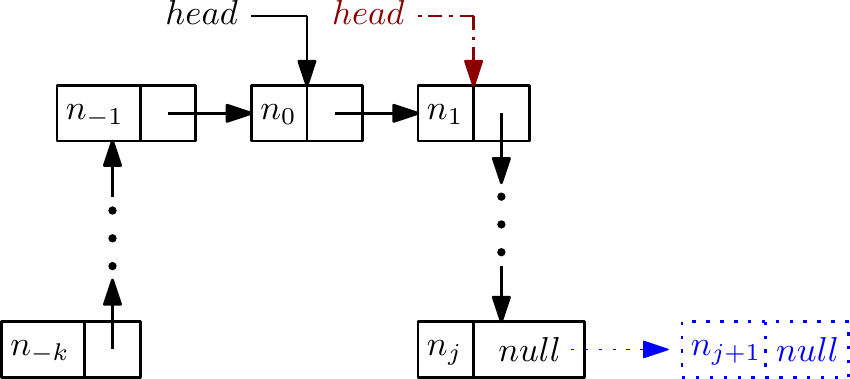} 
 \end{center}
 \caption{Illustration of a Michael-Scott Queue. The list resulting from the pointer to the element $n_0$ (the $head$ pointer with the continuous arrow in black) contains the list of elements $[n_1, \ldots, n_j]$. The enqueueing operation is illustrated by the dotted arrow and the box with the element $n_{j+1}$ (in blue), while the dequeueing operation is illustrated by the dot dashed head pointer (in red).}
 \label{fig:MSQ}
  \[
  \begin{array}{ll}
      \mathsf{dequeue}~ () = & \keywd{let\ rec}~ \mathsf{try}~() = \\&
   \quad \keywd{let}~ n_0 = \myread{head} ~\keywd{in}~
   \keywd{if}~\myread{n_0}.next = null~ \keywd{then}~ null~ 
   \\ & \qquad \keywd{else}~ \keywd{let}~ n_1 = \myread{n_0}.next ~\keywd{in}~
   \\ & \quad \qquad \keywd{if}~\cas{\myread{head}}{n_0}{n_1}~\keywd{then}~\myread{n_1}.ele~ \keywd{else}~ \mathsf{try}~() \\&
   \keywd{in}~\mathsf{try}~() \\[5pt]


   \mathsf{enqueue}(x) =& 
  \keywd{let\ rec}~\mathsf{try}~(p) = \\
 &\ \keywd{if}\ !p.next=null\ \keywd{then}\\
   & ~~~ \keywd{if}~ \keywd{atomic}( \keywd{if}~!p.next=null~ \keywd{then}~\\
   & \qquad !p.next:=\myref{x,null}; \mtrue~\keywd{else}~\mfalse)\\
  &~~~ \keywd{then}~ ()~ \keywd{else}~ \mathsf{try}~ (\myread{p}.next)\\ 
 &\ \keywd{else}\ \mathsf{try}\ (\myread{p}.next)\\
 &\keywd{in}~ \mathsf{try}~(\myread{head})
  \end{array}
 \] 
  \caption{Enqueue and Dequeue programs for a Michael-Scott Queue at location $head$.}
 \label{fig:deqenq}
 \vspace{-3mm}
 \end{figure}

 \paragraph{Michael-Scott Queue:} The Michael-Scott
 Queue~\cite{michael-scott} (MSQ) is a fine grained concurrent data
 structure, allowing threads to access and modify
 different parts of a queue safely and simultaneously. We present a version like that of Turon et al  \cite{dreyer}, which is an idealized version of the MSQ, without a tail pointer. 
 
 An MSQ maintains a pointer $head$ to a non-empty linked list as depicted in Figure~\ref{fig:MSQ}. The first node, the node containing the element $n_0$ in the figure, is not an element of the queue, but is a ``sentinel''. Hence the queue in the figure holds $[n_1, \ldots, n_j]$. 
 
 The enqueue and dequeue operations are defined in Figure~\ref{fig:deqenq} and illustrated in Figure~\ref{fig:MSQ}.
 Elements are dequeued from the beginning of the linked list, and enqueued at the end, which involves a traversal that is done without locking.
Once the end, $p$, of the linked list is found, the program atomically attempts to insert the new element. This is necessary because other programs may have enqueued elements to the end of the list, meaning that $p$ is no longer the end of the list. 
   
 The dequeue operation should move the $head$ pointer from the current
 sentinel, $n_0$, to the following element $n_1$. However, as other
 programs may also be attempting to dequeue an element, we use
 compare-and-swap to atomically update  the $head$ pointer if $head$
 still points to the same sentinel. Notice that the dequeued elements
 can still reach the sentinel of the queue. (In Figure~\ref{fig:MSQ},
 these are the nodes containing $n_{-k}, \ldots, n_{-1}$.)  This is
 necessary because there might be other (slower) threads that want to
 enqueue an element and are still searching for the end of the list
 by traversing the portion of the queue that has already been
 dequeued. 
 
 We prove  that the enqueue and dequeue of Figure~\ref{fig:deqenq} are equivalent to $\myatomic{\mathsf{enqueue}}$ and $\myatomic{\mathsf{dequeue}}$, their atomic versions 
which perform all operations in a single step, at a type that allows the environment to be concurrently reading and writing the queue.
  So the fine-grained MSQ behaves like a synchronized queue, as might also be implemented using locks.
\section{Syntax}
\label{sec:syntax}
In this section we define the syntax of a metalanguage for concurrent, stateful
computations and higher-order functions. 
Communication between parallel computations is via a shared heap
mapping dynamically allocated locations to structured values, which
include pointers. To keep the model simple, we do not allow functions
to be stored in the heap (no higher-order store).


\paragraph{Memory model}
We assume a countably infinite set $\labs$ of physical locations
$\cloc_1, \ldots, \cloc_n, \ldots$ and a set $\ValuesB$ of
``R-values'' that can be stored in those references including
integers, booleans, locations, and tuples of R-values, written $(v_1,
\ldots, v_n)$.  We assume that it is possible to tell of which form a
value is and to retrieve its components in case it is a tuple.  A heap $\h$, then, is a \emph{finite map} from $\labs$ to $\ValuesB$, written
$\{(\cloc_1, \cval_1),(\cloc_2, \cval_2), \ldots, (\cloc_n,
\cval_n)\}$, specifying that the value stored in location $\cloc_i$ is $\cval_i$. We write $\dom{\h}$ for the domain of $\h$ and
write $\h[\cloc{\mapsto} \cval]$ for the heap that agrees with $\h$
except that it gives the variable $\cloc$ the value $\cval$. The set
of heaps is denoted by $\Stores$.  We also assume that $\new(\heap,v)$
yields a pair $(\cloc,\heap')$ where $\cloc\in\Locs$ is a fresh
location and $\heap'\in\Stores$ is $\heap[\cloc{\mapsto}v]$. 

\paragraph{Syntax of expressions}
The syntax of untyped values and computations is:
\[
 \begin{array}{lcl}
v & ::= & x \mid (v_1,v_2) \mid v_r\mid c \mid
\vfix{f}{x}{t} \\
e &::=& v\mid \letin{x}{e_1}{e_2}\mid v_1\,v_2 \mid
\myif{v}{e_1}{e_2} \\&&
\mid \myread{v}\mid \assign{v_1}{v_2}\mid \myref{v}
 \mid \mypar{e_1}{e_2} \mid
\myatomic{e}
 \end{array}
\]
Here, $x$ ranges over variables, $v_r$ over R-values, and $c$
over built-in functions, which include arithmetic, testing
whether a value is an integer, function, pair or reference,
equality on simple values, etc. Each $c$ has a corresponding semantic
partial function $F_c$, so for example $F_+(n,n')=n+n'$ for integers $n,n'$.

The construct $\vfix{f}{x}{e}$ defines a recursive function with body
$e$ and recursive calls made via $f$; we use $\lambda x.e$ as
syntactic sugar in the case when 
$f$ is not free in $e$. Next, $\myread{v}$
(reading) returns the contents of location $v$, $\assign{v_1}{v_2}$
(writing) updates location $v_1$ with value $v_2$, and $\myref{v}$
(allocating) returns a fresh location initialized with $v$. The
metatheory is simplified by using ``let-normal form'', in which the
only elimination for computations is $\mylet$, though we sometimes nest
computations as shorthand for let-expanded versions in examples.
 
The construct $\mypar{e_1}{e_2}$ is evaluated by arbitrarily
interleaving evaluation steps of $e_1$ and $e_2$
until each has produced a value, say $v_1$ and $v_2$; the result is then
$(v_1,v_2)$. Assignment, dereferencing and allocation are atomic, but
evaluation of nested expressions is generally not.
To enforce atomicity, 
$\myatomic{e}$ evaluates an arbitrary $e$ in one step, without any environmental interference.
One can then define a (more realistic) compare-and-swap operation
$\cas{\cloc}{v_1}{v_2}$:
\[
\hspace{-2mm}
  \cas{\cloc}{v_1}{v_2} = \myatomic{\keywd{if}~\myread{\cloc} = v_1~\keywd{then}~\cloc := v_2; \mtrue~\keywd{else}~\mfalse}
\]
this atomically both checks if location $\cloc$ contains $v_1$ and, if so,
replaces it with $v_2$ and returns $\mtrue$; otherwise the location is unchanged and the returned value is $\mfalse$.

We define the free variables, $\FV(e)$, of a term, closed terms, and
the substitution $e[v/x]$ of $v$ for $x$ in $e$, in the usual
way. Locations may occur in terms, but the type system will
constrain their use.



\section{Denotational Model}\label{values}
We now sketch a denotational semantics for our metalanguage based on
Brookes' trace semantics \cite{brookes96ic}. 
Fuller details can be found in a technical report (attached), which in
particular establishes computational adequacy of the model with
respect to a small-step operational semantics using interleaving.

\subsection{Preliminaries}
A \emph{predomain} is an $\omega$-cpo, \ie, a partial order with
suprema of ascending chains.  A \emph{domain} is a predomain with a
least element, $\bot$.  Recall that $f:A\rightarrow A'$ is
\emph{continuous} if it is monotone $x\leq y \Rightarrow f(x)\leq
f(y)$ and preserves suprema of chains, \ie, $f(\sup_i x_i)=\sup_i
f(x_i)$. Any set is a predomain with the discrete order (flat
predomain). If $X$ is a set and $A$ a predomain then any
$f:X\rightarrow A$ is continuous. We denote a partial (continuous)
function from set (predomain) $A$ to set (predomain) $B$ by $f:A
\partfun B$.
If $A,B$ are predomains the cartesian product $A\times B$ and the set
of continuous functions $A{\rightarrow}B$ form themselves predomains
(with the obvious componentwise and pointwise orders) and make the
category of predomains cartesian closed. Likewise, the partial continuous functions $A{\partfun}B$ between predomains $A, B$ form a domain. 

If $P\subseteq A$ and $Q\subseteq B$ are subsets of predomains $A$ and
$B$ we define $P\times Q\subseteq A\times B$ and
$P{\rightarrow}Q\subseteq A{\rightarrow}B$ in the usual way. We may
write $f:P\rightarrow Q$ for $f\in P{\rightarrow}Q$.  

A subset $U\subseteq A$ is \emph{admissible} if whenever $(a_i)_i$ is
an ascending chain in $A$ such that $a_i\in U$ for all $i$, then
$\sup_i a_i\in U$, too. If $f:X\times A\rightarrow A$ is continuous
and $A$ is a domain then one defines $f^\ddagger(x)=\sup_if_x^i(\bot)$
with $f_x(a)=f(x,a)$. One has, $f(x,f^\ddagger(x))=f^\ddagger(x)$ and
if $U\subseteq A$ is admissible and contains $\bot$ and $f:X\times
U\rightarrow U$ then $f^\ddagger:X\rightarrow U$, too.  An element $d$
of a predomain $A$ is \emph{compact} if whenever $d\leq\sup_i a_i$
then $d\leq a_i$ for some $i$. E.g.\ in the domain of partial
functions from $\mathbb{N}$ to $\mathbb{N}$ the compact elements are
precisely the finite ones.  A continuous partial function $f:A\partfun
A$ is a \emph{retract} if $f(a)\leq a$ and $f(f(a))=f(a)$ hold for all
$a\in A$. In short: $f\leq\textit{id}_A$ and $f\circ f\leq f$. If, in
addition, $f$ has a finite image then $f$ is called a \emph{deflation}
\cite{Abramsky94domaintheory}. Note that if $f$ is a retract then $\dom f=\im(f)$
and if $a\in \im(f)$ then $a=f(a)$. We also note that if $a$ is in the
image of a deflation then $a$ is compact.
 
We define the usual state monad on predomains, by taking $ SA =
\Stores\partfun \Stores\times A $. 

\begin{definition}
Let $P$ be a subset of a predomain $A$. Then $\textit{Adm}(P)$ is
the least admissible superset of $P$. Concretely, $a\in
\textit{Adm}(P)$ iff there exists a chain $(a_i)_i$ such that $a_i\in
P$ for all $i$ and $a=\sup_i a_i$.
\end{definition}
\begin{lemma}\label{funad}
If $f:A_1\times\dots\times A_n$ is continuous; $P_i\subseteq A_i$ are
arbitrary subsets and $Q\subseteq B$ is admissible then
$f:P_1\times\dots \times P_n\rightarrow Q$ implies
$f:\textit{Adm}(P_1)\times\dots \times \textit{Adm}(P_n)\rightarrow
Q$.
\end{lemma}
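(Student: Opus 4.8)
The plan is to work directly from the concrete characterization of $\textit{Adm}(\cdot)$ supplied in the Definition, rather than from its universal property, so that no induction on $n$ is needed. Given $a_j\in\textit{Adm}(P_j)$ for $j=1,\dots,n$, I would first invoke that characterization to fix, for each $j$, an ascending chain $(a_{j,i})_i$ with $a_{j,i}\in P_j$ for all $i$ and $a_j=\sup_i a_{j,i}$. Since these are all $\omega$-chains they can be taken to share the index set $\mathbb{N}$, which is the one point requiring a little care. I can then assemble the tuples $b_i=(a_{1,i},\dots,a_{n,i})$ and observe that $(b_i)_i$ is an ascending chain in the product predomain $A_1\times\dots\times A_n$, because the product order is componentwise.

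The key step is then to note that $\sup_i b_i=(\sup_i a_{1,i},\dots,\sup_i a_{n,i})=(a_1,\dots,a_n)$, using that suprema in a finite product are taken componentwise, and to apply the (joint) continuity of $f$ to obtain $f(a_1,\dots,a_n)=f(\sup_i b_i)=\sup_i f(b_i)$. Each $b_i$ lies in $P_1\times\dots\times P_n$, so the hypothesis $f:P_1\times\dots\times P_n\rightarrow Q$ gives $f(b_i)\in Q$; and since $f$ is monotone, $(f(b_i))_i$ is an ascending chain all of whose members lie in $Q$. Finally, admissibility of $Q$ yields $\sup_i f(b_i)\in Q$, that is $f(a_1,\dots,a_n)\in Q$, which is exactly the desired conclusion.

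I do not expect any serious obstacle here: the argument is in essence the observation that a continuous map carries product chains to chains, and that the supremum of a chain lying in $Q$ is back in $Q$ by admissibility. The only mild care needed is in (a) aligning the $n$ chains under a common index, and (b) using that both the order and suprema in a finite product of predomains are computed componentwise — precisely the structure recalled in the Preliminaries.

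An alternative, more abstract route would use the minimality of $\textit{Adm}$: fixing all but one argument in the respective $P_i$'s, the resulting one-variable map $g(a)=f(\dots,a,\dots)$ is continuous, so its preimage $g^{-1}(Q)$ is an admissible set (the preimage of an admissible set under a continuous map is admissible, since $Q$ is admissible and $g$ preserves chain suprema), and it contains the relevant $P_j$; minimality of $\textit{Adm}(P_j)$ then forces $\textit{Adm}(P_j)\subseteq g^{-1}(Q)$. Iterating this coordinate by coordinate over the $n$ arguments upgrades each $P_j$ to $\textit{Adm}(P_j)$ in turn. This avoids the explicit diagonal chain but costs an induction on $n$ plus the preimage lemma, so the direct approach above seems cleaner.
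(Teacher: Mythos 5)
Your proof is correct: the paper states Lemma~\ref{funad} without giving a proof (details are deferred to the technical report), and your direct argument --- pairing the witnessing chains componentwise via the concrete characterization of $\textit{Adm}$, using that order and suprema in a finite product are componentwise, then applying continuity of $f$ and admissibility of $Q$ --- is exactly the standard argument the paper's definition of $\textit{Adm}$ is set up to support. Your alternative route via minimality of $\textit{Adm}$ and preimages of admissible sets is also sound, and has the mild advantage of not depending on the one-step sup-closure characterization, but the direct version suffices given that the paper takes that characterization as definitional.
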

\begin{lemma}\label{func}
Let $A,B$ be predomains and let $(p_i)_i$ be a chain of retracts on
$B$ such that $p_i(b)$ is compact for each $i$ and $\sup_i p_i =
\idd_B$ and $b\in Q$ implies $p_i(b)\in Q$ for all $i$.
Then $P{\rightarrow}\textit{Adm}(Q)=\textit{Adm}(P\rightarrow Q)$.
\end{lemma}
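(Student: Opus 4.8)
The plan is to prove the two inclusions of the claimed equality separately. The inclusion $\textit{Adm}(P\to Q)\subseteq P\to\textit{Adm}(Q)$ is routine, while the converse $P\to\textit{Adm}(Q)\subseteq\textit{Adm}(P\to Q)$ carries all the content and is where the hypotheses on the chain $(p_i)_i$ are used.

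For the easy inclusion, I would first observe that $P\to\textit{Adm}(Q)$ is an admissible subset of the function space $A\to B$. Indeed, if $(f_j)_j$ is an ascending chain in $P\to\textit{Adm}(Q)$, its supremum is computed pointwise, so for each $a\in P$ we have $(\sup_j f_j)(a)=\sup_j f_j(a)$, a supremum of a chain lying in the admissible set $\textit{Adm}(Q)$, hence itself in $\textit{Adm}(Q)$. Since moreover $P\to Q\subseteq P\to\textit{Adm}(Q)$ (because $Q\subseteq\textit{Adm}(Q)$) and $\textit{Adm}(P\to Q)$ is by definition the least admissible superset of $P\to Q$, the inclusion $\textit{Adm}(P\to Q)\subseteq P\to\textit{Adm}(Q)$ is immediate.

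For the substantive inclusion, I would take an arbitrary $f\in P\to\textit{Adm}(Q)$ and exhibit it as the supremum of a chain in $P\to Q$. The natural candidate chain is $(p_i\circ f)_i$: each $p_i\circ f$ is continuous; since the $p_i$ form a chain the $p_i\circ f$ form an ascending chain as well; and from $p_i\leq\idd_B$ and $\sup_i p_i=\idd_B$ we get $p_i\circ f\leq f$ and, pointwise, $\sup_i(p_i\circ f)(a)=\sup_i p_i(f(a))=f(a)$, so $\sup_i(p_i\circ f)=f$. It then remains to verify that each $p_i\circ f$ actually lands in $P\to Q$, i.e.\ that $p_i(f(a))\in Q$ for every $a\in P$, which would place $f$ in $\textit{Adm}(P\to Q)$.

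This last verification is the main obstacle, and is exactly where compactness of the $p_i(b)$ is decisive. The difficulty is that we only know $f(a)\in\textit{Adm}(Q)$, whereas the closure assumption is stated for $Q$, not for its admissible closure. I would therefore prove the auxiliary fact that $b\in\textit{Adm}(Q)$ implies $p_i(b)\in Q$ for all $i$. Writing $b=\sup_j b_j$ with $b_j\in Q$, continuity of $p_i$ gives $p_i(b)=\sup_j p_i(b_j)$, and each $p_i(b_j)\in Q$ by the closure hypothesis. Now compactness of $p_i(b)$ yields, from $p_i(b)\leq\sup_j p_i(b_j)$, some index $j_0$ with $p_i(b)\leq p_i(b_{j_0})$; monotonicity of $p_i$ gives the reverse inequality $p_i(b_{j_0})\leq p_i(b)$, whence $p_i(b)=p_i(b_{j_0})\in Q$. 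Applying this with $b=f(a)$ completes the argument. I expect the compactness step to be the delicate point of the write-up, since it is precisely what transports the closure property from $Q$ to $\textit{Adm}(Q)$ and thereby makes the approximating functions $p_i\circ f$ land in $P\to Q$.
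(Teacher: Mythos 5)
Your proof is correct and is essentially the intended argument: the paper itself defers this proof to its technical report, but the hypotheses of the lemma (compactness of each $p_i(b)$, $\sup_i p_i = \idd_B$, and closure of $Q$ under the $p_i$) are tailored exactly to the approximation scheme $f = \sup_i (p_i \circ f)$ that you use. Your auxiliary step---using compactness of $p_i(b)$ to transport the closure property from $Q$ to $\textit{Adm}(Q)$, so that each $p_i \circ f$ lands in $P \rightarrow Q$---is precisely the delicate point the hypotheses are designed to handle, and your treatment of it is sound.
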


\subsection{Traces}
A trace models a terminating run of a concurrent computation as
a sequence of pairs of heaps, each representing
pre- and post-state of one or more atomic actions. The semantics of a
program then is a (typically large) set of traces (and
final values), accounting
for all possible environment interactions. 

\begin{definition}[Traces]
A trace is a finite sequence of the form $(\h_1,\k_1) (\h_2,\k_2) \cdots (\h_n,\k_n) $ where for $1 \leq j \leq i \leq n$, we have $\h_i,\k_i \in\Stores$  and $\dom{\h_j} \subseteq \dom{\h_i}, \dom{\h_j} \subseteq \dom{\k_i}, \dom{\k_j} \subseteq \dom{\h_i}, \dom{\k_j} \subseteq \dom{\k_i}$. We write $\textit{Tr}$ for the set of traces. 
\end{definition}

Let $ t$ be a trace. A trace of the form $u\,(\h,\h)\,v$ where $ t= u v$ is said to
arise from $t$ by stuttering. A trace of the form $u(\h,\k)v$ where
$t=u(\h,\heapp)(\heapp,\k)v$ is said to arise from t by mumbling. For example, if
$t=(\h_1,\k_1)(\h_2,\k_2)(\h_3,\k_3)$ then $(\h_1,\k_1)(\h,\h)(\h_2,\k_2)(\h_3,\k_3)$ 
arises from $t$
by stuttering. In the case where $\k_1=\h_2$ the trace $(\h_1,\k_2)(\h_3,\k_3)$
arises from $t$ by mumbling. A set of traces $U$ is closed under stuttering and
mumbling if whenever $t'$ arises from $t$ by stuttering or mumbling and
$t\in U$ then $t'\in U$, too. 

Brookes~\cite{brookes96ic} gives a fully-abstract semantics for while-programs
with parallel composition using sets of traces closed under stuttering
and mumbling. We here extend his semantics to higher-order functions
and general recursion.

\begin{definition}[Trace Monad]
\label{def:monad}
Let $A$ be a predomain. Elements of the domain $TA$ 
  are sets $U$ of pairs $\tup{t,a}$ where $t$ is a
 trace and $a\in A$ such that the following properties are satisfied:
\begin{itemize}
\item \textit{[S\&M]}: if $t'$ arises from t by stuttering or mumbling and
 $\tup{t,a}\in U$ then $\tup{t',a} \in U$. 
\item \textit{[Down]}: if $\tup{t,a_1}\in U$ and $a_2\leq a_1$ then
  $\tup{t,a_2}\in U$.
\item \textit{[Sup]}: if $(a_i)_i$ is a chain in $A$ and $\tup{t,a_i}\in U$ for all $i$ then  $\tup{t,\sup_i a_i}\in U$. 
\end{itemize}
The elements of $TA$ are partially ordered by inclusion.  
\end{definition}
\begin{lemma}
If $A$ is a predomain then $TA$ is a domain. 
\end{lemma}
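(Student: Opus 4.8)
The plan is to show that $TA$ is a complete lattice under $\subseteq$; since a complete lattice is in particular an $\omega$-cpo with a least element, this immediately yields that $TA$ is a domain. The order is inherited from the powerset $\powerset(\textit{Tr}\times A)$, so the only real work is producing a bottom element and suprema of ascending chains, and the cleanest route is to exhibit $TA$ as a Moore family, \ie a collection of sets closed under arbitrary intersection and containing the top.

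First I would record the extremal elements. The empty set lies in $TA$: each of [S\&M], [Down] and [Sup] is a universally quantified implication ranging over the pairs in $U$, so all three hold vacuously for $\emptyset$, and since $\emptyset\subseteq U$ for every $U\in TA$ this is the least element $\bot$. Dually, the full set $\textit{Tr}\times A$ satisfies all three conditions (for [Sup] using that $\sup_i a_i\in A$), so it is a top element.

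The core step is to check that $TA$ is closed under arbitrary intersections. Given a family $\{U_j\}_{j\in J}\subseteq TA$ and $V=\bigcap_j U_j$, I would verify each condition by pushing the quantifier over $j$ outside: for [Sup], if $(a_k)_k$ is a chain with $\tup{t,a_k}\in V$ for all $k$, then for each fixed $j$ we have $\tup{t,a_k}\in U_j$ for all $k$, so $\tup{t,\sup_k a_k}\in U_j$ by [Sup] for $U_j$; as this holds for every $j$, $\tup{t,\sup_k a_k}\in V$. The cases [S\&M] and [Down] have exactly the same shape. Hence $TA$ is a Moore family, so it is a complete lattice in which the join of a family $\{U_j\}$ is $\bigcap\{V\in TA : U_j\subseteq V \text{ for all } j\}$; in particular every ascending chain has a supremum, and $TA$ is a domain.

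The step I expect to be the main obstacle — or at least the main temptation to get wrong — is the suprema of chains. The naive guess that the supremum of $U_0\subseteq U_1\subseteq\cdots$ is the union $\bigcup_i U_i$ fails: the union always satisfies [S\&M] and [Down], but [Sup] need not be preserved, because the admissibility of the fibers $\{a : \tup{t,a}\in U_i\}$ can be destroyed in the limit (an increasing union of admissible downward-closed sets need not be admissible, \eg $\bigcup_i\{0,\dots,i\}=\nats$ inside $\nats\cup\{\infty\}$). This is precisely why I route through closure under intersection rather than computing the supremum by hand: the join produced by the Moore-family construction automatically re-closes the union under [Sup], delivering a genuine element of $TA$ that is the least upper bound.
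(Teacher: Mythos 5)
Your proof is correct. There is actually no in-text proof to compare against: the paper states this lemma bare, deferring details to its technical report. But your Moore-family route is precisely the argument the paper implicitly relies on, since the closure operator $U^\dagger$ introduced just below the lemma --- ``the least subset of $TA$ containing $U$'' --- is well defined exactly because $TA$ contains the top element $\textit{Tr}\times A$ and is closed under arbitrary intersections, which is what you verify; the supremum of an ascending chain $U_0\subseteq U_1\subseteq\cdots$ is then $(\bigcup_i U_i)^\dagger$, and the bottom is $\emptyset$, which the paper indeed uses to model non-termination. You have also isolated the one genuine subtlety correctly: [S\&M] and [Down] pass to unions of chains, but [Sup] does not, because the fibers $\{a\mid \tup{t,a}\in U\}$ are Scott-closed and an increasing union of Scott-closed sets need not be Scott-closed --- your example of $\bigcup_i\{0,\dots,i\}=\nats$ inside the predomain $\nats\cup\{\infty\}$ is the standard counterexample --- so the re-closure step in the join is not a formality but the actual content of the lemma, and your argument handles it properly (noting, as you do, that the [Sup] condition for the top element is the one place where the hypothesis that $A$ is a predomain is used).
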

An element $U$ of $TA$ represents the possible outcomes of a
nondeterministic, interactive computation with final result in
$A$. Thus, if $\tup{t,a}\in U$ for $t=(\h_1,\k_1)\dots(\h_n,\k_n)$
then there could be $n$ interactions with the
environment with heaps $\h_1,\dots,\h_n$ being ``played'' by the
environment and ``answered'' with heaps $\k_1,\dots,\k_n$ by the
computation. After that, this particular computation ends and $a$ is
the final result value.

For example, the semantics of 
$\cloc:=\myread{\cloc}+1;\cloc:=\myread{\cloc}+1;\myread{X}$ 
contains many traces, including the following, where we write $[n]$ for
the heap in which $\cloc$ has value $n$:

\begin{tabular}{l}
$\tup{([10],[12]),12}$, \\
$\tup{([10],[11])([15],[16]),16}$,\\
$\tup{([10],[11])([15],[16])([17,17]),17}$,\\
$\tup{([10],[11])([15],[16])([17,17]),16}$,\\
$\tup{([10],[11])([17],[17])([15],[16]),16},\ldots$
\end{tabular}

Axiom [S\&M] is taken from Brookes. It ensures that the semantics does
not distinguish between late and early choice \cite{dreyer} and
related phenomena which are reflected, e.g., in resumption semantics
\cite{plotkin76siam}, but do not affect observational equivalence.
Note that non-termination is modelled by the empty set, so we are
working with an `angelic' notion of equivalence
(`may semantics'
\cite{DBLP:dblp_conf/icalp/NicolaH83}). For example, the semantics of
 $\assign{X}{0};\myif{X\mequals 0}{0}{\keywd{diverge}}$
is the same as that of $\assign{X}{0};0$ and contains, for example
$\tup{([10],[0]),0}$ but also (stuttering)
$\tup{(([10],[0]),([34],[34])),0}$.  Note that it is not possible to
tell from a trace whether an external update of $\cloc$ has happened
before or after the reading of $\cloc$.

Let us also illustrate how traces iron out some intensional
differences that show up when concurrency is modelled using transition
systems or resumptions. Consider the following two programs where
$\mnond{}$ denotes a nondeterministically chosen boolean value.
\[
\begin{array}{lll}
e_1 &\equiv& \mif{~\mnond{}}\ \mthen\ \assign{X}{0};\mtrue\ \melse\ 
\assign{X}{0};\mfalse\\ 
e_2 &\equiv&\assign{X}{0};~\mnond{}
\end{array}
\]
Both $e_1$ and $e_2$ admit the same traces, namely
$(([x],[0]),\mtrue)$ and $(([x],[0]),\mfalse)$ and stuttering variants
thereof.  In semantic models based on transition systems or
resumptions and bisimulation, these are distinguished, which
necessitates the use of special mechanisms such as history and
prophecy variables \cite{DBLP:journals/tcs/AbadiL91}, forward-backward
simulation \cite{DBLP:dblp_journals/iandc/LynchV96}, or speculation
\cite{dreyer} in reasoning.

Axioms [Down] and [Sup] are known from the Hoare powerdomain
\cite{plotkin76siam}. Recall that the Hoare
powerdomain $PA$ contains the subsets of $A$ which are downclosed
([Down]) and closed under suprema of chains ([Sup]). Such subsets are
also known as Scott-closed sets.  Thus, $TA$ is the restriction of
$P(\textit{Tr}\times A)$ to the sets closed under stuttering and
mumbling. Axiom [Down] ensures that the ordering is indeed a partial
order and not merely a preorder. 
Additional nondeterministic outcomes that are less
defined than existing ones are not recorded in the semantics.

\begin{definition}
If $U\subseteq\textit{Tr}\times A$ then $U^\dagger$ is the
least subset of $TA$ containing $U$, i.e.\ $U^\dagger$ is the closure
of $U$ under [S\& M], [Down], [Sup].
\end{definition}

\begin{definition}
Let $A,B$ be a predomains.  We define the continuous
functions $\textit{rtn} : A \rightarrow TA$ and $\textit{bnd} :
(A{\rightarrow} TB)\times TA\rightarrow TB$ by:
\[\begin{array}{l}
 \textit{rtn}(a):= (\{\tup{(\h,\h),a} \mid \h\in\Stores\})^\dagger\\ 
\textit{bnd}(f,g) := (\{\tup{uv,b}\mid \tup{u,a}\in g\wedge \tup{v,b}\in f(a)\})^\dagger\\
\end{array}\]
\end{definition}
\noindent
These endow $TA$ with the structure of a strong monad. The continuous function
%
$\textit{fromstate} : SA\rightarrow TA$ is defined by:
\[
\begin{array}{l}
 \textit{fromstate}(c) := \{\tup{(\h,\k),a}\mid 
c(\h)=\tup{\k,a}\}^\dagger
\end{array}
\]
If $t_1,t_2,t_3$ are traces, we write $\textit{inter}(t_1,t_2,t_3)$ to
mean that $t_3$ can be obtained by interleaving $t_1$ and $t_2$ in
some way, i.e., $t_3$ is contained in the shuffle of $t_1$ and $t_2$.
In order to model parallel composition we introduce the following
helper function
\[
\begin{array}{l}
 \semparallel :~ TA\times TB\rightarrow T(A\times B)\\
U\semparallel V := \{\tup{t_3,(a,b)}\mid \textit{inter} (t_1,t_2,t_3), \tup{t_1,a}\in U, \tup{t_2,b}\in V\}^\dagger
\end{array}
\]
The continuous map $\textit{at} : TA\rightarrow TA$ is defined by:
\[
\begin{array}{l}
 \textit{at}(U) := \{\tup{(\h,\k),v}\mid ((\h,\k),v)\in U\}^\dagger
\end{array}
\]
\noindent
Notice that due to mumbling $\tup{(\h,\k),v}\in U$ iff
there exists an element $\tup{(\h_1,\h_2)(\h_2,\h_3)\dots
  (\h_{n-2},\h_{n-1})(\h_{n-1},\h_n),v}\in U$ where $\h=\h_1$ and
$\h_n=\k$. The presence of such an element, however, models an atomic
execution of the computation represented by $U$.

\subsection{Semantic values}
The predomain $\Values$ of untyped values
is the least solution of the following domain equation: 
\[
\Values \simeq \ValuesB + (\Values\rightarrow T\Values) + \Values^*.
\]
That is, values are either R-values, continuous functions from values to computations ($T\Values$), or tuples of values.
We tend to identify the summands of the right hand side with subsets
of $\Values$ but may use tags like $\funn f\in\Values$ when
$f:\Values\rightarrow T\Values$ to avoid ambiguity.

We have families of deflations $p_i:\Values\partfun\Values$ and $q_i:
T\Values\rightarrow T\Values$, referred to as canonical deflations, so
that $(p_i)_i$ and $(q_i)_i$ are ascending chains converging to the
identity. The definition is entirely standard and may be found in the
accompanying material. It shows in particular that $\Values$ and
$T\Values$ are \emph{bifinite} (equivalently SFP) (pre-)domains
\cite{Abramsky94domaintheory} and as such also Scott (pre-)
domains. The presence of these deflations allows us to apply
Lemma~\ref{func} and simplifies reasoning in general.

The semantics of values $\semV{v}\in\Values\rightarrow\Values$ and
terms $\semC{t}\in\Values\rightarrow T\Values$ are given by the
recursive clauses in Figure~\ref{seme}. Environments, $\rho$, are properly tuples of values; we abuse notation slightly by treating them as maps from variables, $x$, to values, $v$, (and write $\rho[x{\mapsto}v]$ for functional update) to avoid mentioning an explicit context in which untyped terms are well-formed.


\begin{figure*}[tph]
\vspace{-10mm}
\[
\begin{array}{@{}rcl}
 \semV{x}\rho &=& \rho(x)\\
\semV{v_r}\rho &=& v_r\\
\semV{(v_1,v_2)}\rho &=& (\semV{v_1}\rho,\semV{v_2}\rho)\\
\semV{v.i}\rho &=& d_i\ \mbox{if $i=1,2$, $\semV{v}\rho = (d_1,d_2)$}\\
\semV{c}\rho &=& \funn f\\
&\multicolumn{2}{l}{\mbox{ where
  $f(v)=\textit{rtn}(F_c(v))$ if $F_c(v)$ is defined}}
  \\ & \multicolumn{2}{l}{\textrm{  and $f(v)=\emptyset$, otherwise.}}
\\
\semV{\vfix{f}{x}e}\rho &=& \funn
{g^\ddagger(\rho)}\\&\multicolumn{2}{l}{\mbox{ where
  $g(\rho,u)=\lambda d.\semV{e}\rho[f{\mapsto}u,x{\mapsto}d]$}}
\\
\semV{v}\rho &=& 0\mbox{, otherwise}
\end{array}
\begin{array}{rcl}
 \semC{v}\rho&=&\textit{rtn}(\semV{v}\rho)\\
\semC{\letin{x}{e_1}{e_2}}\rho &=& \textit{bnd}(\lambda d.\semC{e_2}\rho[x{\mapsto}d],\semC{e_1}\rho)\\
\semC{v_1\ v_2}\rho &=& \semV{v_1}\rho(\semV{v_2}\rho)\\
\semC{\myif{v}{e_1}{e_2}}\rho &=& \semC{e_1}\rho\mbox{, if }\semV{v}\rho=\mtrue\\  
\semC{\myif{v}{e_1}{e_2}}\rho &=& \semC{e_2}\rho\mbox{, if }\semV{v}\rho=\mfalse\\ 
\semC{\myread{v}}\rho\  &=& \textit{fromstate}(\lambda \heap.(\heap,\heap(\cloc)))\mbox{, when
$\semV{v}\rho=\cloc$}\\
\semC{\assign{v_1}{v_2}}\rho\  &=&\textit{fromstate}(\lambda \heap.
(\heap[\cloc{\mapsto}\semV{v_2}\rho],()))\mbox{, if $\semV{v_1}\rho=
\cloc$}\\
\semC{\myref{v}}\rho\  &=& \textit{fromstate}(\lambda \heap.\textit{new}(\heap,\semV{v}\rho))\\
\semC{\myatomic{e}}\rho &=& \textit{at}(\semC{e})\\
\semC{\mypar{e_1}{e_2}}\rho &=& \semC{e_1}\rho\semparallel \semC{e_2}\rho\\
\semC{e}\rho\ &=& \emptyset \mbox{, otherwise}
\end{array}
\]
\caption{Denotational semantics \label{seme}}
\end{figure*}
\section{Abstract Locations}
We build on the concept of abstract locations defined by Benton,
Hofmann, and Nigam \cite{DBLP:dblp_conf/popl/Benton0N14}. These allow
complicated data structures that span several concrete locations,
or only parts of them, to be a regarded as a single ``location'' that
can be written to and read from. Essentially, an abstract location is
given by a partial equivalence relation on heaps modelling well-formedness and 
 equality together with a transitive relation modelling allowed
modifications of the abstract location. Abstract locations then allow
certain commands that modify the physical heap to be treated as 
read-only or even pure if they respect the contracts. Abstract locations are
related to \emph{islands} \cite{DBLP:conf/popl/AhmedDR09} which also
  allow one to specify heap allocated data structures and use
  transition systems for that purpose. An important difference is that
  abstract locations do not require physical footprints in the form of
  sets of concrete locations.

Due to the absence of dynamic allocation at the level of abstract
locations in the present paper, we can slightly simplify the original definition 
\cite{DBLP:dblp_conf/popl/Benton0N14}, dropping those axioms that involve the interaction with dynamic allocation.\footnote{Though our examples do all satisify these axioms, leaving the way open to a future extension with dynamically allocation of abstract locations and concurrency.}
On the other hand, in the presence of concurrency, we need \emph{two}
partial equivalence relations: one that  models semantic
equivalence and well-formedness and a finer one that constrains the
heap modifications that other concurrent computations that are
independent of the given abstract locations are allowed to do
\emph{while} an operation on the abstract location is ongoing, but
temporarily preempted.

\begin{definition}[Concurrent Abstract Location]\label{absloc}
  A \emph{concurrent abstract location} $\loc$ consists of the following data:

(1) a partial equivalence relation  $\rloc{}{}{\loc}$ on
  $\Stores$ modeling the ``semantic equivalence'' on the bits of the
  store that $\loc$ uses.
If $\heap\rloc{}{}{\loc}\heap'$ then the same computation started on $\heap$ and $\heap'$, respectively, will yield related or even equal results.
 
(2) a partial equivalence relation  $\rrloc{}{}{\loc}$ on
  $\Stores$ refining $\rloc{}{}{\loc}$ and modeling the ``strict equivalence'' on the bits of the
  store that $\loc$ uses. 
If a concurrent computation on $\loc$ has reached $\heap$ and is preempted, then another computation may replace $\heap$ with $\heap'$ where $\heap\rrloc{}{}{\loc}\heap'$ and then the original computation on $\loc$ may resume on $\heap'$ without the final result being compromised. 

(3) a transitive  (and reflexive on the support of $\rloc{}{}{\loc}$) 
 relation $\gloc{}{}{\loc}$ modeling how exactly the
  heap may change upon writing the abstract location and in particular
  what bits of the store such writes leave intact. In other words, if
  $\gloc{\heap}{\heap_1}{\loc}$ then $\heap_1$ might arise by writing
  to $\loc$ in $\heap$ and all possible writes are specified by
  $\gloc{}{}{\loc}$. We call $\gloc{}{}{\loc}$ the \emph{step relation} of $\loc$. 

In addition, we require the following 
conditions where $\inR{\heap}{\loc}$ stands for $\rloc{\heap}{\heap}{\loc}$.
\begin{enumerate}
  \item If $\inR{\heap}{\loc}$ then $\rrloc{\heap}{\heap}{\loc}$;
  \item if $\gloc{\heap}{\heap_1}{\loc}$ then $\inR{\heap}{\loc}$ and $\inR{\heap_1}{\loc}$.
\end{enumerate}
If $\gloc{\h}{\h_1}{\loc}$ and at the same time
$\rrloc{\h}{\h_1}{\loc}$, then we say that $\h_1$ arises from $\h$ by a \emph{silent move} in $\loc$. Our semantic framework will permit silent
moves at all times.
\end{definition}
We now introduce some examples of abstract locations. 

\label{sec:abs-examples}
\paragraph{Single Integer} For our simplest example, consider the following abstract location parametric with respect to concrete location $X$ as follows:
\[
 \begin{array}{lcl}
 \rloc{\h}{\h'}{\locInt(\cloc)}
  &\iff& \exists n.\heap(X) = \intt n \land \heap'(X) = \intt n\\
\rrloc{\h}{\h'}{\locInt(\cloc)} &\iff&  \rloc{\h}{\h'}{\locInt(\cloc)}\\
  
  \gloc{\h}{\h_1}{\locInt (\cloc)} &\iff& \\ 
  \multicolumn{3}{r}{\qquad \inR{\heap}{\locInt(\cloc)}, \inR{\heap_1}{\locInt(\cloc)} \textrm{ and } \forall \cloc' \in \labs. \cloc' \neq X \Rightarrow \heap(\cloc') = \heap_1(\cloc)}

 \end{array}
\]
Two heaps are semantically equivalent (w.r.t.\ $\locInt(\cloc)$ that
is) if the values stored in $X$ are integers and equal; the step
relation requires all other concrete locations to be unchanged.

We will sometimes abuse notation and write $\rEff{\cloc}, \wEff{\cloc}, \cEff{\cloc}$ for $\rEff{\locInt(\cloc)}, \wEff{\locInt(\cloc)}, \cEff{\locInt(\cloc)}$. 

\paragraph{Overlapping references}
Let $\cloc$ be a concrete location encoding a pair of integer values using a bijection $p$.
We define the abstract location $\locLo(\cloc)$ as below. We omit $\locHi(\cloc)$ which is similar, but only looks at the second projection, instead of the first. 
\[
 \begin{array}{l}
  \rloc{\h}{\h'}{\locLo(\cloc)}  \iff 
   \exists \vval_1\vval_2\vval_1'\vval_2'\in\ints.
\heap(X) = p^{-1}(\vval_1, \vval_2) \land ~ \\ \qquad \qquad \qquad \quad \heap'(X) = p^{-1}(\vval_1', \vval_2') \land \vval_1 = \vval_1'\\[2pt]


\rrloc{\h}{\h'}{\locLo(\cloc)} \iff  \rloc{\h}{\h'}{\locLo(\cloc)} \\[2pt] 
  \gloc{\h}{\h_1}{\locLo(\cloc)}  \iff \inR{\heap}{\locLo(\cloc)}, \inR{\heap_1}{\locLo(\cloc)} \textrm{ and }\\
  \quad (\forall \cloc' \neq \cloc.\heap(\cloc')=\heap_1(\cloc'))\wedge 
(\forall\vval_1\vval_2\vval_1'\vval_2'\in\ints.
\heap(X) = p^{-1}(\vval_1,\vval_2) \land~ \\ \qquad \qquad  \heap_1(X) = p^{-1}(\vval_1', \vval_2') \Rightarrow\vval_2 = \vval_2')\\[2pt]
  
 \end{array}
\]
The semantic (and strict) equivalence of $\locLo(\cloc)$ (respectively,
$\locHi(\cloc))$ specifies that two heaps $\heap$ and $\heap'$ are
equivalent whenever they both store a pair of values in $\cloc$ and
the first projections (respectively, second projection) of these pairs
are the same.  The step relation of $\locLo(\cloc)$ (respectively,
$\locHi(\cloc)$) specifies that it keeps all other locations alone and
does not change the second projection (respectively, first projection)
of the pair stored at location $\cloc$.

\paragraph{Version Numbers} 
The abstract location $\mathfrak{X}$ consists of two concrete locations $X_{Val}$ and $X_{Ver}$ and its relations are specified as follows:
\[
\hspace{-4mm}
 \begin{array}{ll}
 \rloc{\h}{\h'}{\mathfrak{X}}
  \iff &   \h(\cloc_{Val}) = \h'(\cloc_{Val})\\
\rrloc{\h}{\h'}{\mathfrak{X}} \iff&  \rloc{\h}{\h'}{\mathfrak{X}}\\
    \gloc{\h}{\h_1}{\mathfrak{X}} \iff& \forall \cloc' \notin \{\cloc_{Ver},\cloc_{Val}\}. \h(\cloc') = \h_1(\cloc') ~ \land~\\
    & \inR{\h}{\mathfrak{X}} \land \inR{\h_1}{\mathfrak{X}} \land \h(X_{Ver}) <= \h_1(X_{Ver})~  \land~ \\
    &[\h(X_{Val}) \neq \h_1(X_{Val}) \Rightarrow \h(X_{Ver}) < \h_1(X_{Ver})]
 \end{array}
\]
Two heaps are semantically equivalent if they have the same value (independent of the version number). The step relation specifies that the version number does not descrease and it increases if the value changes.

\paragraph{Loop Parallelization} For a concrete location $\cloc$, we introduce two concurrent abstract locations $\locListEven(\cloc)$ and $\locListOdd(\cloc)$, which only look, respectively, at the elements in the the even and odd positions of the linked list pointed by $\cloc$. Formally, let $L(\cloc,\h)$ denote that $\h(\cloc)$ points to a well formed linked list of integers of length $L(\cloc,\h).len$ and locations $L(\cloc,\h).locs$ and that $L(\cloc,\h)[i]$ is the $i^{th}$ node of the list for $1 \leq i \leq L(\cloc,\h).len$. The relations for $\locListEven(\cloc)$ are as below. We omit the relations for $\locListOdd(\cloc)$, which are similar.
\[
\hspace{-4mm}
 \begin{array}{ll}
 \rloc{\h}{\h'}{\locListEven(\cloc)}
  \iff &   L(\cloc,\h) \land L(\cloc,\h') \land L(\cloc,\h).len = L(\cloc,\h').len ~\land\ \\ 
  & L(\cloc,\h)[2i] = L(\cloc,\h')[2i]\\ 
  &  \textrm{for } 0 \leq i \leq \lfloor L(\cloc,\h).len / 2 \rfloor
  \\
\rrloc{\h}{\h'}{\locListEven(\cloc)} \iff&  \rloc{\h}{\h'}{\locListEven(\cloc)}\\
    \gloc{\h}{\h_1}{\locListEven (\cloc)} \iff& 
  \inR{\h}{\locListEven (\cloc)} \land \inR{\h_1}{\locListEven (\cloc)}~ \land \\
  & L(\cloc,\h) \land L(\cloc,\h_1) \land \textrm{for } 0 \leq i \leq \lfloor L(\cloc,\h).len / 2 \rfloor\  \\
  & \quad L(\cloc,\h)[2i+1] = L(\cloc,\h_1)[2i+1]~ \land \\ 
  & \quad L(\cloc,\h)[2i].next = L(\cloc,\h_1)[2i].next~\land \\
  & \forall \cloc' \notin L(\cloc,\h).locs. \h(\cloc') = \h_1(\cloc)
 \end{array}
\]

\paragraph{Michael-Scott queue}
For concrete location $\cloc$ we introduce a concurrent abstract location $\locMSQ(\cloc)$ first informally as follows: we have $\rloc{\heap}{\heap'}{\locMSQ(\cloc)}$ if both $\heap$ and $\heap'$ contain a well-formed MSQ rooted at $\cloc$ and these queues contain the same entries in the same order. They may, however, use different locations for the nodes and also have different garbage tails. 

The relation $\rrloc{\heap}{\heap'}{\locMSQ(\cloc)}$ asserts that $\heap$ and $\heap'$ are identical on the part reachable and co-reachable from $\cloc$ via $\textit{next}$ pointers. This means that while an MSQ operation is working on the queue no concurrent operation working elsewhere is allowed to relocate the queue or remove the garbage trail which would be the case if we merely required that such operations do not change the 
$\rloc{}{}{MSQ(\cloc)}$-class.

The relation $\gloc{}{}{\locMSQ(\cloc)}$, finally, is defined as the
transitive closure of the actions of operations on the MSQ: adding
nodes at the tail and moving nodes from the head to the garbage tail.

We now give a formal definition. We represent pointers \emph{head},
\emph{next}, \emph{elem} using some layout convention,
e.g.\ $v.\emph{head}=v.1$, etc. We then define
 \[
 \begin{array}{ll}
  \h,\cloc\stackrel{\mathit{next}}{\to} \cloc' \iff & \cloc' \textrm{ can be reached 
  from $\cloc$ in $\h$} \\ &
  \textrm{by following a chain of next pointers}  
 \end{array}
 \]
We use
$\textit{List}(\cloc,\h,(\cloc_0,\ldots,\cloc_n),(v_1\ldots,v_n))$ to
signal that $\h(\cloc)$ points to a linked list with nodes
$X_0,\dots,X_n$ and entries $v_1,\dots v_n$. Note that the first node
$X_0$ acts as a sentinel and its $elem$ component is
ignored. Formally:
 \[
 \hspace{-3mm}
  \begin{array}{l@{\quad}l}
   \h(\cloc).head = \cloc_0 &
   \h(\cloc_i).elem = v_i \textrm{ for $i=1,\ldots, n$}\\
   \h(\cloc_i).next = \cloc_{i+1}  \textrm{ for $i=0,\ldots, n-1$}&
   \h(\cloc_n).next = null
  \end{array}
 \]
We define $\textit{fp}(\cloc,\h)$ as the set of locations reachable and co-reachable from $\cloc$ via $\textit{next}$, formally: 
\[
\textit{fp}(\cloc,\h)=\{\cloc'\mid \cloc\stackrel{\mathit{next}}{\to} \cloc'\vee\cloc'\stackrel{\mathit{next}}{\to} \cloc\}
\]
Finally, we define $\textit{snoc}(\h,\h',\cloc,v)$ to mean that $\h'$ arises from $\h$ by attaching a new node containing $v$ at the end of the list pointed to by $\cloc$ in $\h$. Thus, in particular, $\textit{List}(\cloc,\h,(\cloc_0,\ldots,\cloc_n),(v_1\ldots,v_n))$ implies $\textit{List}(\cloc,\h',(\cloc_0,\ldots,\cloc_n,\cloc_{n+1}),(v_1\ldots,v_n,v))$ for some $\cloc_{n+1}\not\in\dom{\h}$. We omit the obvious frame conditions. 
We now define
\[
 \begin{array}{lcl}
  \rloc{\h}{\h'}{\locMSQ(\cloc)} &\iff&  \exists \vec{X}\ \vec{X'}\ \exists \vec{v}.\textit{List}(\cloc,\h,\vec{X},\vec{v}) \land \textit{List}(\cloc,\h',\vec{X'},\vec{v})\\[1pt]
\rrloc{\h}{\h'}{\locMSQ(\cloc)} &\iff&  \rloc{\h}{\h'}{\locMSQ(\cloc)} \wedge
\forall\cloc'\in\textit{fp}(\cloc,\h).\h(\cloc')=\h'(\cloc')\\[1pt]
\gloc{\h}{\h_1}{\locMSQ(\cloc)} &\iff& \inR{\h}{\locMSQ(\cloc)}\wedge
\inR{\h_1}{\locMSQ(\cloc)}\wedge \textit{step}^*(\h,\h_1)\\[1pt]
\textit{step}(\h,\h_1)&\iff& \forall \cloc'\neq\cloc.\h(\cloc')=\h_1(\cloc')~ \land \\ && [\h_1(\cloc)=\h(\cloc).\textit{next}\vee \exists v.\textit{snoc}(\h,\h_1,\cloc,v)]
\end{array}
\]

In all of these examples, the only silent moves are identity moves.  This is not so in the examples from \cite{DBLP:dblp_conf/popl/Benton0N14} which contained  data-structures that would reorganize during lookups and also patterns like late initialisation. 
\subsection{Worlds}
We will group the abstract locations used to describe a program into a
\emph{world}. In this paper we do not model dynamic evolution of
worlds; all abstract locations ever used must be set up upfront. While
allocation of concrete locations may happen to increase a data structure modelled by
an abstract location, e.g.\ in the Michael-Scott Queue example, no new
such datastructures can appear. It is possible, however, to extend our
work in this direction by using (proof-relevant) Kripke logical
relations
\cite{DBLP:dblp_conf/popl/Benton0N14,DBLP:conf/popl/AhmedDR09}.
\begin{definition}[world]
A \emph{world} is a set of abstract locations. 

The relation $\heap\models \w$ (heap $\h$ satisfies world $\w$) is defined as the largest relation such that $\h\models \w$ implies 
\begin{itemize}
\item \inR{\h}{\loc} for all $\loc\in\w$; 
\item if $\loc\in\w$ and $\gloc{\h}{\h_1}{\loc}$ then 
$\rrloc{\heap}{\heap_1}{\loc'}$ holds for all 
$\loc'\in\w$ with $\loc'\neq\loc$ and $\h_1\models\w$. 
\end{itemize}
\end{definition}
The original account of abstract locations
\cite{DBLP:dblp_conf/popl/Benton0N14} also has a notion of
independence of locations which facilitates reasoning in the presence
of dynamic allocation, and in particular permitted relocation of
abstract locations. Since we are not currently treating dynamic
allocation of abstract locations, we can avoid this notion here.

We remark that if our world $\w$ contains two obviously ``dependent'' abstract locations, e.g.\ has  both an integer location and a boolean location
placed at the same physical location, then there will be no heap $\h$ such that $\h\models\w$. 

We assume a fixed \emph{current} world $\w$ which may appear in definitions without being notationally reflected. See also Assumption~\ref{assi}.
\section{Effects}
For each abstract location $\loc$ we have three elementary effects
$\rEff\loc$ (reading from $\loc$), $\wEff\loc$ (writing to $\loc$),
and $\cEff\loc$ (chaotic or concurrent access). The chaotic access is
similar to writing, but allows writes that are not in sync. For
example, $e_1 = \assign{X}{1}$ and $e_2 = \assign{X}{2}$ both have
individually the $\wEff{X}$ effect, but $e_1$ and $e_2$ are
distinguishable with a context that assumes the
$\wEff{X}$-effect. Thus, $e_1$ and $e_2$ are not equal ``at type''
$\wEff{X}$. At type $\cEff{X}$ they are, however, equal, because a
context that copes with this effect may not assume that both produce
equal results.

We use the $\cEff{\loc}$ effect to tell the environment not to look at
a particular location during a concurrent computation. For example, we
will be able to show that $\assign{X}{\myread{X}+1};\assign{X}{
  \myread{X}+1}$ is equivalent to $\assign{X}{ \myread{X}+2}$ ``at
type'' $\myety{\unittype}{\cEff{X}}{\eff}{\eff\cup\{\rEff{X},\wEff{X}\}}$ whenever $X\notin\locs{\eff}$. This means that
the two computations are indistinguishable by environments that do not
read, let alone modify $X$ during the computation and assume regular
read-write access once it is completed.
It would alternatively be possible to replace the $\cEff{}$-effect
using a special set of private locations akin to the private regions
from \cite{birkedal}.

We use the notation $\rds(\eff)$, $\wrs(\eff)$, $\concs{\eff}$ to
refer to the abstract locations $\loc$ for which $\eff$ contains
$\rEff\loc$, $\wEff\loc$, and $\cEff\loc$, respectively. We write
$\locs\eff:=\rds(\eff)\cup\wrs(\eff)\cup\concs{\eff}$.  We also write
$\eff^C$ for $\eff$ with all read effects removed and each $\wEff\loc$ in $\eff$ replaced by
$\cEff\loc$.

\begin{definition}
An effect $\eff$ is well-formed (with respect to the current world) if  $\locs\eff\subseteq\w$ and $\rdsin\eff\cap\concs\eff=\emptyset$ and $\concs\eff\subseteq\wrsin\eff$. 
An effect specification is a triple $(\eff_1,\eff_2,\eff_3)$ of well-formed effects such that $\eff_2\subseteq\eff_3$. 
\end{definition}
An effect specification $(\eff_1,\eff_2,\eff_3)$ approximates the
behaviour of a computation $e$ in the following way: the effect
$\eff_1$ summarizes side effects that may occur during the execution
of $e$ (corresponding to a guarantee condition in the rely-guarantee
formalism \cite{DBLP:journals/logcom/ColemanJ07}); the effect $\eff_2$ summarizes effects of the interacting
environment that $e$ can tolerate while still functioning as expected
(corresponding to a rely condition). Finally, $\eff_3$
summarizes the side effects that may occur between start and completion of $e$. All the effects that the
environment might introduce must be recorded in $\eff_3$ because they are
not under ``our'' control and might happen at any time even as the
very last thing before the final result is returned. The effects
flagged in $\eff_1$, on the other hand, do not necessarily show up in
$\eff_3$, for a computation might be able to clean up those effects
prior to returning the final result. The requirement that $\rdsin{\eff}\cap \concs{\eff}=\emptyset$ is owed to the fact that all effects should preserve their own precondition, however the precondition of $\rEff{\loc}$ is agreement on $\loc$ which is not preserved by $\cEff{\loc}$. The requirement $\concs{\eff}\subseteq\wrs{\eff}$ reflects the fact that $\concs{\loc}$ includes $\wEff{\loc}$ as a special case. 

Note that if $\eff^C \cup \eff_1$ is a (well-formed) effect, then it
is the case that $\rdsin{\eff_1} \cap (\wrsin{\eff} \cup \concs{\eff})
= \emptyset$. We will use this observation to simplify some side
conditions.

In our concrete examples, we abbreviate $\{\cEff{\loc}\}\cup\{\wEff{\loc}\}$ by just $\cEff{\loc}$, in other words, the chaotic effect silently implies the write effect.

Consider the computations $e_1 =
\assign{X}{\myread{X}+1};\assign{X}{\myread{X}+1}$ and $e_2 =
\assign{X}{\myread{X}+2}$.  Let $\eff_X$ stand for
$\{\rEff{X},\wEff{X}\}$ and analogously $\eff_Y$.  Each of the two
computations can be assigned the effect $(\eff_X,\eff_Y,\eff_X\cup
\eff_Y)$, but they are distinguishable at that effect typing. Under
the looser specification $(\{\cEff{\eff_X}\},\eff_Y,\eff_X\cup
\eff_Y)$, however, they are indistinguishable, and our semantics is
able to validate this equivalence, see Example~\ref{jife}.

Finally, consider the program $e = \myread{\cloc}$ that simply reads a location storing an integer. We can show that this program has type $\myety{\ints}{\emptyset}{\eff}{\eff, \rEff{X}}$, where the read effect on $\cloc$ is only in the global effects. 
 
\paragraph{Notations.}
For any well-formed effects $\eff, \eff'$ we use the notation
$\eff\perp \eff'$ to mean that $\rdsin{\eff} \cap \wrsin{\eff'} =
\rdsin{\eff'} \cap \wrsin{\eff} = \wrsin{\eff} \cap \wrsin{\eff'} =
\emptyset$. Note that this implies in particular
$\concs{\eff}\cap\rdsin{\eff'}=\emptyset$, etc. Intuitively, two
programs exhibiting effects $\eff$ and $\eff'$, respectively, commute
with each other.  We write $\rloc{\h}{\h'}{\rdsin{\eff}}$ to mean
$\rloc{\h}{\h'}{\loc}$ for each $\loc\in \rdsin{\eff}$.  We write
$\gloc{}{}{\eff}$ for the transitive closure of
$\bigcup_{\loc\in\wrsin{\eff}}\gloc{}{}{\loc}\cup\bigcup_{\loc\in\w}\gloc{}{}{\loc}\cap\rrloc{}{}{\loc}$. Thus,
$\gloc{}{}{\eff}$ allows steps by locations recorded as writing in
$\eff$ and silent steps by all locations in the current world.

We define the notation $\eff_1\sqcup\eff_2$ which appears in the parallel congruence rule by
\[
\eff_1\sqcup\eff_2 = \eff_1\cup\eff_2\setminus\{\wEff\ell\mid 
\wEff\ell\not\in\eff_1\cap\eff_2\}\setminus\{\cEff\ell\mid 
\cEff\ell\not\in\eff_1\cap\eff_2\}
\]
\section{Typing and congruence rules}

Types are given by the grammar 
\[
\tau ::= \unittype \mid \inttype\mid \booltype \mid A
\mid\tau_1\times\tau_2\mid 
\tau_1\myeffto{\eff_1}{\eff_2}{\eff_3} \tau_2
\] 
where $A$ ranges over user-specified abstract types. They will typically 
 include reference types such as $\keywd{intref}$
and also types like lists, sets, and even
objects. In $\tau_1\myeffto{\eff_1}{\eff_2}{\eff_3} \tau_2$ the triple of 
effects $(\eff_1,\eff_2,\eff_3)$ must be an effect specification. 

We use two judgments:
\begin{itemize}
\item $\Gamma \vdash v\leq v' : \tau$ specifying that values $v$ and $v'$ have
 type $\tau$ and that $v$ approximates $v'$,
 \item $\Gamma \vdash e\leq e' : \myety{\tau}{\eff_1}{\eff_2}{\eff_3}$ specifying that the programs $e$ and $e'$ 
under the context $\Gamma$ have type $\tau$, with the effect specification $(\eff_1, \eff_2, \eff_3)$ specifying, respectively, the effects during execution, the effects of the interacting environment and the start and completion effects. Moreover, $e$ approximates $e'$ at this specification. 
\end{itemize}

We assume an ambient set of \emph{axioms} each having the form
$(v,v',\tau)$ where $v, v'$ are values in the metalanguage and $\tau$
is a type meaning that $v$ and $v'$ are claimed to be of type $\tau$
and that $v$ approximates $v'$. This must then be proved ``manually''
using the semantics rather than using the rules. The

We also define typing judgements $\Gamma\vdash v:\tau$ and
$\Gamma\vdash e:\myety{\tau}{\eff_1}{\eff_2}{\eff_3}$ which denote the
special case when $\Gamma\vdash v\leq v:\tau$ and $\Gamma\vdash e\leq
e:\myety{\tau}{\eff_1}{\eff_2}{\eff_3}$ can be derived from the rules
from Figure~\ref{tycrule}. We do not formulate explicit typing rules
to save space.

The plan is to justify all the rules semantically using a logical relation (Section~\ref{logrel}) and to then conclude their soundness w.r.t.\ typed observational appoximation and equivalence (Section~\ref{obseq}). 

The parallel composition rule states that two programs $e_1$ and $e_2$
can be composed when their internal effects are not conflicting in the
sense that the internal effects of one program appear as environment
interaction effects of the other program. Note the relationship to the
parallel composition rule of the rely-guarantee formalism
\cite{DBLP:journals/logcom/ColemanJ07}. Also note that the effects of
computations $e_1$ and $e_2$ are not required to be independent from
each other as we do in the parallization rule further down.

 The appearance of the
$\sqcup$-operation deserves special mention. It might be, for example,
that $e_1$ modifies $\cloc$ on the way, thus $\wEff{\cloc}\in\eff_1$
but cleans up this modification by eventually restoring the old value
of $\cloc$. This would be reflected by
$\wEff{\cloc}\notin\eff\cup\eff'\cup\eff_2$. In that case, we would
not expect to see $\wEff{\cloc}$ in the end-to-end effect of the
parallel composition and that is precisely what $\sqcup$ achieves.

The rules labelled (Sem) make available all kinds of program
transformations that are valid on the level of the \emph{untyped}
denotational semantics, including commuting conversions for let and
if, fixpoint unrolling, and beta and eta equalities. 

Finally, we have several effect-dependent (in)equalities: the
parallelization rule generalises a similar rule from
\cite{birkedal}. The other ones are concurrent version of analogous
rules for sequential computation that have been analysed in previous
work
\cite{DBLP:conf/aplas/BentonKHB06,benton07ppdp,DBLP:conf/icfp/ThamsborgB11,DBLP:dblp_conf/popl/Benton0N14}
and are at the basis of all kinds of compiler optimizations.  The side
conditions on the effects are rather subtle and much less obvious than
those found in a sequential setting. The parallelization rule is similar to the parallel congruence rule in that it requires the participating computations to mutually tolerate each other. This time, however, since the two computations being compared will do rather different things temporarily they must be oblivious against chaotic access, hence the $(-)^C$ strengthenings in the premise. 

The reason for the appearance of $(-)^C$ in the other rules is similar. The rule for pure lambda hoist seems unusual and will thus be explained in more detail. First, the computation $e_1$ to be hoisted may indeed have side effects $\eff_1$ so long as they are cleaned up by the time $e_1$ completes and the intervening environment does not notice (modelled by the conditions 
$\eff_1\perp\eff$ and final effect $\eff^C=\eff^C\cup\emptyset$). In the conclusion the transient effect $\eff_1$ shows up again, but $(-)^C$-ed since it only appears in different sides. Also in the other rules like commuting etc.\ 
it is the case that the familiar side conditions on applicability only affect the end-to-end effects whereas the transient effects are merely required not to interfere with the environment. 


\begin{figure*}[t]
\vspace{-3mm}
 \[
\infer{\Gamma\vdash \mtrue\leq\mtrue : \valty{\booltype}}{}
\qquad
\infer{\Gamma\vdash \mfalse\leq\mfalse : \valty{\booltype}}{}
\qquad 
\infer{\Gamma\vdash n\leq n : \valty{\inttype}}{}
\qquad
\infer{\Gamma,x:\tau\vdash x\leq x : \valty{\tau}}{}
\qquad 
\infer{\Gamma\vdash v.i\leq v'.i:\valty{\tau_i}}{\Gamma\vdash
v\leq v':\valty{\tau_1\times\tau_2}}
\]
\[
\infer{\Gamma\vdash e_1\leq e_3:\myety{\tau}{\eff_1}{\eff_2}{\eff_3}}{\Gamma\vdash e_1\leq e_2:\myety{\tau}{\eff_1}{\eff_2}{\eff_3}\quad
\Gamma\vdash e_1\leq e_2:\myety{\tau}{\eff_1}{\eff_2}{\eff_3}
}
\qquad
\infer{\Gamma\vdash v\leq v':\myety{\tau}{\eff_1}{\eff_2}{\eff_3}}{\Gamma\vdash v\leq v':\valty{\tau}}
\qquad
\infer{\Gamma\vdash
(v_1,v_2)\leq(v_1',v_2') :\valty{\tau_1\times\tau_2}}
{\Gamma\vdash v_i\leq v_i':\valty{\tau_1}\mbox{ $i=1,2$}}
\]
\[
\infer{\Gamma\vdash v_1\ v_2\leq v_1'\ v_2' : \myety{\tau_2}{\eff_1}{\eff_2}{\eff_3}}
{\Gamma\vdash v_1\leq v_1':\valty{\tau_1\myeffto{\eff_1}{\eff_2}{\eff_3}\tau_2} & \Gamma\vdash
v_2\leq v_2':\valty{\tau_1}}\quad 
 \infer{\Gamma\vdash
\myif{v}{e_1}{e_2}\leq \ \myif{v'}{e_1'}{e_2'}
   : \myety{\tau}{\eff_1}{\eff_2}{\eff_3}}
{\begin{array}{c}
\Gamma\vdash v\leq v':\valty{\booltype} \\ 
    \Gamma\vdash e_1\leq e_1':\myety{\tau}{\eff_1}{\eff_2}{\eff_3}\ \ 
    \Gamma\vdash e_2\leq e_2':\myety{\tau}{\eff_1}{\eff_2}{\eff_3}
  \end{array}
}
\]\[
\infer{\Gamma\vdash \letin{x}{e_1}{e_2}\leq \letin{x}{e_1'}{e_2'} :\myety{\tau_2}{\eff_1}{\eff_2}{\eff_3}}
{\begin{array}{c}\Gamma\vdash 
e_1\leq e_1':\myety{\tau_1}{\eff_1}{\eff_2}{\eff_3}\\
  \Gamma,x{:}\tau_1\vdash e_2\leq e_2':\myety{\tau_2}{\eff_1}{\eff_2}{\eff_3}
  \end{array}
}
\quad
\infer{\Gamma\vdash \vfix{f}{x}{e}\leq\vfix{f}{x}{e'}  : \valty{\tau_1\myeffto{\eff_1}{\eff_2}{\eff_3} \tau_2}}
{\Gamma,f{:}\tau_1\myeffto{\eff_1}{\eff_2}{\eff_3} \tau_2,x{:}\tau_1\vdash
e\leq e':\myety{\tau_2}{\eff_1}{\eff_2}{\eff_3}}
\]

\[
 \infer{\Gamma \vdash \mypar{e_1}{e_2} \leq \mypar{e_1'}{e_2'} : \myety{\tau_1 \times \tau_2}{\eff_1 \cup \eff_2}{\eff}{\eff \cup \eff' \cup (\eff_1 \sqcup \eff_2)}}
{\Gamma \vdash e_1 \leq e_1' : 
  \myety{\tau_1}{\eff_1}{\eff \cup \eff_2}{\eff \cup \eff_2 \cup \eff'}\quad \Gamma \vdash e_2 \leq e_2' : 
  \myety{\tau_2}{\eff_2}{\eff \cup \eff_1}{\eff \cup \eff_1 \cup \eff'}}
 \]

 \[
  \infer[\textrm{Sem}_1]{\Gamma \vdash e' \leq e' : \myety{\tau}{\eff_1}{\eff_2}{\eff_3}}{\Gamma \vdash e \leq e : \myety{\tau}{\eff_1}{\eff_2}{\eff_3} \quad \sem{e} = \sem{e'}}
  \qquad
  \infer[\textrm{Sem}_2]{\Gamma \vdash e \leq e' : \myety{\tau}{\eff_1}{\eff_2}{\eff_3}}{\Gamma \vdash e \leq e : \myety{\tau}{\eff_1}{\eff_2}{\eff_3} \quad \sem{e} = \sem{e'}}
\qquad 
\infer[\textrm{Ax}_1]{\Gamma\vdash v\leq v:\tau}{(v,v',\tau)\mbox{ an axiom}}
\]

\[ \infer{\Gamma\vdash e\leq e':\myety{\tau}{\eff_1'}{\eff_2'}{\eff_3'}}{\Gamma\vdash
  e\leq e':\myety{\tau}{\eff_1}{\eff_2}{\eff_3} & \eff_1
  \subseteq  \eff_1' & \eff_2' \subseteq  \eff_2 & \eff_3
  \subseteq  \eff_3'}
\quad 
\infer[\textrm{Atom}]{\Gamma\vdash\myatomic{e}\leq\myatomic{e'}:\myety{\tau}{\eff_3}{\eff_2}{\eff_2\cup\eff_3}}{\Gamma\vdash e\leq e':\myety{\tau}{\eff_1}{\emptyset}{\eff_3}} \quad 
\infer[\textrm{Ax}_2]{\Gamma\vdash v'\leq v':\tau}{(v,v',\tau)\mbox{ an axiom}}
\]
\vspace{-2mm}
\label{tycrule}
\caption{Typing and congruence rules}
\vspace{-4mm}
\end{figure*}
\begin{figure*}
\[
\infer[\textrm{Parallelization}]{\Gamma\vdash \mypar{e_1}{e_2} \leq
(\letin{x}{e_1}{\letin{y}{e_2} (x,y)}) : 
\myety{\tau_1 \times \tau_2}
{\eff_1^C \cup \eff_2^C}
{\eff}{ \eff \cup \eff_1' \cup \eff_2'}}
{\begin{array}{l}
 \Gamma\vdash e_1:\myety{\tau_1}
{\eff_1}{\eff^C \cup \eff_2^C}{\eff^C \cup \eff_2^C \cup \eff_1'} \qquad \Gamma\vdash e_2:\myety{\tau_2}
{\eff_2}{\eff^C \cup \eff_1^C}{\eff^C \cup \eff_1^C \cup \eff_2'} \qquad \qquad 
\eff_1 \perp \eff_2 \quad \eff_1 \perp \eff \quad \eff_2 \perp \eff
  \end{array}
}
\]

\[
\infer[\textrm{Commuting}]{
\Gamma\vdash (\letin{x}{e_1}{\letin{y}{e_2} (x,y)}) = (\letin{y}{e_2}{\letin{x}{e_1} (x,y)}) : 
\myety{\tau_1 \times \tau_2}
{\eff_1^C \cup \eff_2^C} {\eff}{ \eff \cup \eff_1' \cup \eff_2'}}
{\begin{array}{l}
 \Gamma\vdash e_1:\myety{\tau_1} {\eff_1}{\eff^C}{\eff^C \cup \eff_1'} \qquad 
\Gamma\vdash e_2:\myety{\tau_2}
{\eff_2}{\eff^C}{\eff^C \cup \eff_2'} \qquad \qquad
\eff_1' \perp \eff_2' \quad \eff_1 \perp \eff \quad \eff_2 \perp \eff
  \end{array}
}
\]

\[
\begin{array}{c}
 \infer[\textrm{Duplicated}]{\Gamma\vdash
(\letin{x}{e} (x,x)) \leq
(\letin{x}{e} {\letin{y}{e} (x,y))}) : \myety{\tau\times \tau} {\eff_1^C} {\eff_2}{\eff_2 \cup \eff'}}
{\Gamma\vdash e:\myety{\tau} {\eff_1}{\eff_2^C}{\eff_2^C \cup \eff'} 
 \quad
\rdsin{\eff'} \cap \wrsin{\eff'} = \emptyset 
\quad \eff_2 \perp \eff_1 
}
\end{array}
\]

\[
\infer[\textrm{Ax}]{\Gamma\vdash v\leq v':\tau}{(v,v',\tau)\mbox{ an axiom}}\quad 
\begin{array}{c}
\infer[\textrm{Lambda Hoist}]{
 \Gamma\vdash \letin{y}{e_1} {\lambda x. e_2} 
\leq \lambda x . \letin{y}{e_1}{e_2} 
: 
\myety{\tau_3 \myeffto{\eff_1^C\cup\eff_2}{\eff}{\eff\cup\eff_3} \tau_2}{\eff_1^C}{\eff}{\eff}}
{\Gamma \vdash e_1: \myety{\tau_1} {\eff_1}{\eff^C}{\eff^C} 
 \qquad
 \Gamma, x:\tau_3, y:\tau_1 \vdash e_2: \myety{\tau_2} {\eff_2}{\eff}{\eff\cup\eff_2}\qquad \eff\perp\eff_1}
\end{array}
\]
\[
\begin{array}{c}
\infer[\textrm{Deadcode}]{
\Gamma\vdash e_2 \leq (\letin{x}{e_1}{e_2}) : \myety{\tau_2}{\eff_1^C\cup\eff_2}{\eff}{\eff \cup \eff_2'} }
{\Gamma \vdash e_1: \myety{\tau_1} {\eff_1}{\eff^C}{\eff^C \cup \eff_1'} 
 \qquad
 \Gamma \vdash e_2: \myety{\tau_2} {\eff_2}{\eff}{\eff_2'} 
 \qquad 
 \eff_1 \perp \eff
 \qquad 
 \textrm{$\wrsin{\eff_1'}=\emptyset$}
}
\end{array}  
\]

\caption{Effect-dependent transformations\label{eqth}.}
\vspace{-3mm}
\end{figure*}
The following definitions provide the semantics of our effect annotations. 
\begin{definition}[Tiling]
Let $\w\vdash\eff$. We write $[\eff](\h,\h',\h_1,\h_1')$ to mean that (i) $\h\models\w\Rightarrow \gloc{\h}{\h_1}{\eff}$ and 
(ii) $\h'\models \w\Rightarrow \gloc{\h'}{\h_1'}{\eff}$ and (iii) $\rloc{\h}{\h'}{\rdsin{\eff}}$ and  $\loc\in\wrs(\eff)\setminus\cos(\eff)$ imply $(\rrloc{\h}{\h_1}{\loc}\wedge\rrloc{\h'}{\h_1'}{\loc})\vee \rloc{\h_1}{\h_1'}{\loc}$.
\end{definition}
Thus, assuming semantic consistency of heaps, $\h$ and $\h'$ evolve to $\h_1$ and $\h_1'$ according to the modifying (writing or chaotic) locations in $\eff$, and if $\h,\h'$ agree on the reads of $\eff$ then written locations will either be identicallly modified or left alone.  

If the step relations of all abstract locations commute with each
other then tiling admits an alternative characterisation in terms of
preservation of binary relations \cite{DBLP:conf/aplas/BentonKHB06}. The present more
operational version is inspired by the treatment of effects in
\cite{birkedal}.

\begin{lemma}\label{tillem}
Suppose that $\w\vdash\eff$, $\w\vdash\eff_1$, $\w\vdash\eff_2$. 
The following hold whenever well-formed. 
\begin{compactenum}
\item \label{tiltrans}
If $[\eff](\h,\h',\h_1,\h_1')$ and $[\eff](\h_1,\h_1',\h_2,\h_2')$ then $[\eff](\h,\h',\h_2,\h_2')$;  
\item $[\eff](\h,\h',\h,\h')$ 
\item\label{tilmon} If $\eff_1\subseteq \eff_2$ then $[\eff_1](\h,\h',\h_1,\h_1')\Rightarrow 
[\eff_2](\h,\h',\h_1,\h_1')$
\item\label{tilmcon} $[\eff](\h,\h',\h_1,\h_1')\Rightarrow 
[\eff^C](\h,\h',\h_1,\h_1')$
\item \label{tilrd} If $[\eff](\h,\h',\k,\k')$ and $\rloc{\h}{\h'}{\rdsin{\eff}}$ then  $\rloc{\k}{\k'}{\rdsin{\eff}}$. (this relies on $\rdsin{\eff}\cap\concs{\eff}=\emptyset$.)
\item\label{tilwf} Suppose $[\eff](\h,\h',\h_1,\h_1')$. If $\h\models\w$ then $\h_1\models\w$; if $\h'\models\w$ then $\h_1'\models\w$. 
\end{compactenum}
\end{lemma}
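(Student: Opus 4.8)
The plan is to prove the six items in a dependency-respecting order, since transitivity (\ref{tiltrans}) rests on read-preservation (\ref{tilrd}) and well-formedness preservation (\ref{tilwf}), and those in turn rest on a single \emph{frame property} of the step relation. I would first establish this property: if $\h\models\w$, $\gloc{\h}{\h_1}{\eff}$, and $\loc\in\w$ with $\loc\notin\wrsin{\eff}$, then $\rrloc{\h}{\h_1}{\loc}$ and $\h_1\models\w$. The proof is an induction on the transitive-closure structure of $\gloc{}{}{\eff}$: each generator is either a write step $\gloc{\h}{\h_1}{\loc'}$ with $\loc'\in\wrsin{\eff}\subseteq\w$, whence $\loc'\neq\loc$ and the world condition yields $\rrloc{\h}{\h_1}{\loc}$ and $\h_1\models\w$; or a silent step for some $\loc''\in\w$, where the world condition handles $\loc''\neq\loc$ and the defining intersection with $\rrloc{}{}{\loc''}$ handles $\loc''=\loc$. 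Transitivity of $\rrloc{}{}{\loc}$ and propagation of $\h\models\w$ close the induction; letting $\loc$ range over all of $\w$ gives \ref{tilwf} outright.

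I would then dispatch read-preservation \ref{tilrd}. Fix $\loc\in\rdsin{\eff}$; well-formedness gives $\loc\notin\concs{\eff}$. If $\loc\in\wrsin{\eff}$ then $\loc\in\wrsin{\eff}\setminus\concs{\eff}$ and clause (iii) of the hypothesis fires, giving either $\rloc{\k}{\k'}{\loc}$ directly or $\rrloc{\h}{\k}{\loc}\wedge\rrloc{\h'}{\k'}{\loc}$; in the latter case, refinement of $\rloc{}{}{\loc}$ by $\rrloc{}{}{\loc}$ combined with the hypothesis $\rloc{\h}{\h'}{\loc}$ and transitivity yields $\rloc{\k}{\k'}{\loc}$. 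If instead $\loc\notin\wrsin{\eff}$, the frame property applied to clauses (i),(ii) gives $\rrloc{\h}{\k}{\loc}$ and $\rrloc{\h'}{\k'}{\loc}$, and the same transitivity step concludes. This is precisely where the side condition $\rdsin{\eff}\cap\concs{\eff}=\emptyset$ is used: it keeps read locations out of the chaotic regime, so each falls under clause (iii) or the frame property.

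The main obstacle is transitivity \ref{tiltrans}. For clauses (i),(ii) I chain the two step relations by transitivity of $\gloc{}{}{\eff}$, inserting \ref{tilwf} to upgrade $\h\models\w$ to $\h_1\models\w$ so the second tile's hypothesis fires. For clause (iii) the precondition $\rloc{\h}{\h'}{\rdsin{\eff}}$ must first be promoted to $\rloc{\h_1}{\h_1'}{\rdsin{\eff}}$ using \ref{tilrd}, so that clause (iii) of the second tile becomes available; I then run a four-way case split on the two disjunctions. Whenever either tile contributes the equality disjunct $\rloc{}{}{\loc}$, symmetry and transitivity of $\rloc{}{}{\loc}$ (with $\rrloc{}{}{\loc}$ refining it) collapse the chain through $\h_1,\h_1'$ to $\rloc{\h_2}{\h_2'}{\loc}$; only when both tiles give the strict disjunct do I invoke transitivity of $\rrloc{}{}{\loc}$ to obtain $\rrloc{\h}{\h_2}{\loc}\wedge\rrloc{\h'}{\h_2'}{\loc}$. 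Getting this bookkeeping right — re-establishing the read precondition before the second tile, and merging the disjuncts correctly — is the delicate heart of the lemma.

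The remaining items are routine. For monotonicity \ref{tilmon}, $\wrsin{\eff_1}\subseteq\wrsin{\eff_2}$ gives $\gloc{}{}{\eff_1}\subseteq\gloc{}{}{\eff_2}$, settling (i),(ii); for (iii) a write location of $\eff_2$ is either already in $\wrsin{\eff_1}$ (apply the $\eff_1$-hypothesis, noting $\concs{\eff_1}\subseteq\concs{\eff_2}$ so the $\setminus\concs{}$ guard transfers) or new, in which case the frame property shows it untouched and supplies the strict disjunct. For \ref{tilmcon}, the step relation depends only on $\wrsin{}$, which $(-)^C$ leaves fixed, so $\gloc{}{}{\eff^C}=\gloc{}{}{\eff}$ and clauses (i),(ii) transfer verbatim, while (iii) is vacuous since $\wrsin{\eff^C}\setminus\concs{\eff^C}=\emptyset$. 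Finally, reflexivity $[\eff](\h,\h',\h,\h')$ uses reflexivity of each $\gloc{}{}{\loc}$ on the support of $\rloc{}{}{\loc}$ for (i),(ii), and for (iii) observes that with $\h_1=\h$ the strict disjunct reduces to $\inR{\h}{\loc}\wedge\inR{\h'}{\loc}$, which holds for the $\w$-consistent heaps on which tiling is applied by condition (1) of Definition~\ref{absloc}.
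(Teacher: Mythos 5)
Your proof is correct, but there is nothing in the paper to measure it against: Lemma~\ref{tillem} is stated without proof (the appendix proves only Theorems~\ref{main} and~\ref{mainzwei}), so the paper treats it as routine and your argument fills a genuine gap. Your factorization through the frame property --- if $\h\models\w$, $\gloc{\h}{\h_1}{\eff}$ and $\loc\in\w\setminus\wrsin{\eff}$, then $\rrloc{\h}{\h_1}{\loc}$ and $\h_1\models\w$, proved by induction on the transitive-closure structure with the two generator cases (write steps at locations in $\wrsin{\eff}\subseteq\w$, and silent steps, splitting on whether the stepping location equals $\loc$) --- is exactly the right lemma to isolate: it yields item~\ref{tilwf} outright, supplies the ``untouched location'' halves of items~\ref{tilmon} and~\ref{tilrd}, and is the semantic content behind the paper's remark (after Definition~\ref{defn:crucial}) that world-satisfaction is a global invariant. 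Your treatment of clause~(iii) in transitivity is the delicate point and you get it right: promote the read precondition via item~\ref{tilrd} before invoking the second tile, collapse to $\rloc{\h_2}{\h_2'}{\loc}$ whenever either tile contributes the coarse disjunct (using $\rrloc{}{}{\loc}\subseteq\rloc{}{}{\loc}$ together with PER symmetry and transitivity), and compose $\rrloc{}{}{\loc}$ only in the strict/strict case. (One phrasing quibble: when the \emph{second} tile gives $\rloc{\h_2}{\h_2'}{\loc}$ no chain through $\h_1,\h_1'$ is needed at all; the chain is only required when the first tile gives the coarse disjunct and the second the strict one.)

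Two tacit dependencies are worth making explicit. First, clauses~(i) and~(ii) of tiling are conditional on $\h\models\w$ and $\h'\models\w$, so your frame property --- and hence items~\ref{tiltrans}, \ref{tilmon} and \ref{tilrd} --- genuinely needs world-consistency of the heaps on both sides; item~\ref{tilrd} is simply false for heaps violating $\w$, since then nothing constrains locations in $\rdsin{\eff}\setminus\wrsin{\eff}$. You acknowledge this only in passing at the very end; the paper licenses the assumption via the qualifier ``whenever well-formed'' and the global-invariant remark, but a careful write-up should state it up front. Second, your claim in item~\ref{tilmcon} that $(-)^C$ leaves $\wrsin{}$ fixed rests on the paper's convention that $\cEff{\loc}$ subsumes $\wEff{\loc}$ (equivalently, well-formedness of $\eff^C$ forces $\concs{\eff^C}\subseteq\wrsin{\eff^C}$); without that convention, $\gloc{}{}{\eff^C}$ would degenerate to silent steps and clauses~(i),(ii) would fail. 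With these two points spelled out, your proof is complete.
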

\section{Logical Relation}\label{logrel}

\begin{definition}[Specifications] A value 
specification is a relation $E\subseteq \Values \times\Values$ such that 
\begin{itemize}
\item if $x_1\leq x$ and $y\leq y_1$ and $x\E y$ then $x_1\E y_1$; 
\item if $(x_i)_i$ and $(y_i)_i$ are chains such that $x_i \E y_i$ then $\sup_i x_i \E \sup_i y_i$, i.e., $E$ is admissible qua relation; 
\item if $x\E y$ then $p_i(x)\E p_i(y)$ for each $i$, i.e.\ $E$ is closed under the canonical deflations. 
\end{itemize}
Similarly,  a computation specification is a 
relation $Q\subseteq T\Values\times T\Values$ such that 
${\leq};Q;{\leq}\subseteq Q$ and $Q$ is admissible qua relation and $Q$ is closed under the canonical deflations $q_i$. 
\end{definition}
The requirement ${\leq};E;{\leq}\subseteq E$ ensures smooth interaction with the down-closure built into our trace monad. Admissibility is needed for the soundness of recursion and closure under the canonical deflations, finally is needed so that Lemma~\ref{func} can be applied. 
\begin{definition}
If $E\subseteq \Values\times\Values$ and $Q\subseteq T\Values\times T\Values$
then 
the relation $E{\rightarrow}Q\subseteq \Values\times\Values$ is defined by 
\[
f E{\rightarrow}Q f' \iff \forall x\ x'.(x \E x') \Rightarrow (f(x)\,Q\, f'(x'))
\] 
In particular, for $f E{\rightarrow}Q f'$ to hold, both $f,f'$ must be functions (and not elements of base type or tuples). 
\end{definition}
\begin{lemma}
If $E$ and $Q$ are specifications so is $E{\rightarrow}Q$. 
\end{lemma}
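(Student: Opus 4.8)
The goal is to verify that $E{\rightarrow}Q$ satisfies the three closure conditions of a value specification. The essential preliminary observation is that $f\mathrel{E{\rightarrow}Q}f'$ can hold only when both $f$ and $f'$ lie in the function summand $\Values{\rightarrow}T\Values$ of $\Values$. Since the sum defining $\Values$ is a coproduct of predomains, distinct summands are order-incomparable; hence any element below or above a function is again a function, and any chain meeting the function summand lies entirely within it, with supremum again a function. This is what lets us stay inside the function summand throughout and apply $f,f'$ pointwise.

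For the $\leq$-closure condition, suppose $f_1\leq f$, $g\leq g_1$ and $f\mathrel{E{\rightarrow}Q}g$. By the observation $f,g$ are functions, hence so are $f_1,g_1$, and for functions $f_1\leq f$ means $f_1(x)\leq f(x)$ pointwise (likewise $g(x')\leq g_1(x')$). Given $x\E x'$ we have $f(x)\,Q\,g(x')$, so $f_1(x)\leq f(x)\,Q\,g(x')\leq g_1(x')$ yields $f_1(x)\,Q\,g_1(x')$ by ${\leq};Q;{\leq}\subseteq Q$; thus $f_1\mathrel{E{\rightarrow}Q}g_1$. For admissibility, let $(f_i)_i,(g_i)_i$ be chains with $f_i\mathrel{E{\rightarrow}Q}g_i$; their suprema are functions and application is pointwise, so $(\sup_i f_i)(x)=\sup_i(f_i(x))$ and similarly for $g$. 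Fixing $x\E x'$, the pairs $f_i(x)\,Q\,g_i(x')$ form related chains in $T\Values$, so admissibility of $Q$ gives $\sup_i(f_i(x))\,Q\,\sup_i(g_i(x'))$, i.e.\ $\sup_i f_i\mathrel{E{\rightarrow}Q}\sup_i g_i$.

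The remaining, most delicate, condition is closure under the canonical deflations $p_i$. The point is that these are defined on the function summand by restricting a function's argument by a $\Values$-deflation and its result by the matching $T\Values$-deflation: up to the index shift fixed by the accompanying construction, $p_i(f)$ sends $x$ to $q_{i'}(f(p_{i'}(x)))$ for an appropriate $i'$. Granting this identity, from $x\E x'$ we obtain $p_{i'}(x)\E p_{i'}(x')$ since $E$ is closed under its deflations; then $f\mathrel{E{\rightarrow}Q}g$ gives $f(p_{i'}(x))\,Q\,g(p_{i'}(x'))$; and since $Q$ is closed under $q_{i'}$ we conclude $q_{i'}(f(p_{i'}(x)))\,Q\,q_{i'}(g(p_{i'}(x')))$, which is exactly $p_i(f)(x)\,Q\,p_i(g)(x')$. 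Hence $p_i(f)\mathrel{E{\rightarrow}Q}p_i(g)$; the lowest deflation is degenerate and checked directly. The only real obstacle is bookkeeping—pinning down the precise form and indexing of the canonical deflations on the function space—after which the closure properties of $E$ (under $p$) and of $Q$ (under $q$) compose mechanically.
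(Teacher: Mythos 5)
Your proof is correct: the paper states this lemma without giving a proof, and your argument --- pointwise verification of the three closure conditions, using the order-incomparability of the summands of $\Values$ for the $\leq$-closure and admissibility parts, and the standard identity $p_{i+1}(\mathit{fun}(f))=\mathit{fun}(q_i\circ f\circ p_i)$ for the canonical deflations on the function summand --- is exactly the routine argument the paper implicitly relies on. The only loose end is the deflation index bookkeeping, which you correctly flag and which is harmless since $E$ and $Q$ are closed under their deflations at every index.
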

The following is the crucial definition of this paper; it gives a semantic counterpart to observational approximation and, due to its game-theoretic flavour, allows for very intuitive proofs. 
\begin{definition}
\label{defn:crucial}
Let $E\subseteq \Values\times\Values$ be a value specification 
and $(\eff_1, \eff_2, \eff_3)$ an effect specification. We
define the relations $T_0(E,\eff_1,\eff_2,\eff_3)$ and 
$T(E,\eff_1,\eff_2,\eff_3)$ between sets of trace-value pairs, i.e.\ on $\mathcal{P}(\mathit{Tr}\times\textit{Values})$: 

$(U,U')\in T_0(E,\eff_1,\eff_2,\eff_3)$ if and only if 
\[
\left[\begin{array}{l}
\forall \tup{(\h_1,\k_1) \ldots (\h_n,\k_n),a}\in U.\h_1\models\w\Rightarrow\\
\quad \forall \h_1'. \h_1'\models\w\Rightarrow
\rloc{\h_1}{\h_1'}{\rds(\eff_3)}\Rightarrow\\
\qquad \exists \k_1'. [\eff_1](\h_1,\h_1',\k_1,\k_1') \land
\forall \h_2'. [\eff_2](\k_1,\k_1',\h_2,\h_2') \Rightarrow \\
\quad \qquad \exists \k_2'. [\eff_1](\h_2,\h_2',\k_2,\k_2') \land
\forall \h_3'. [\eff_2](\k_2,\k_2',\h_3,\h_3') \Rightarrow \\
\qquad \qquad \cdots\\
\qquad \qquad\exists \k_n'. [\eff_1](\h_n,\k_n,\h_n',\k_n') \land~
[\eff_3](\h_1,\h_1',\k_n,\k_n') \land\\ 
\qquad \qquad \exists a' \in\Values. 
(a,a')\in E \land  
\tup{(\h_1',\k_1')\ldots (\h_n',\k_n'),a'}\in U'
 \end{array}\right] 
 \]
We define the relation $T(E,\eff_1,\eff_2,\eff_3)\subseteq T\Values\times T\Values$ as the admissible closure of $T_0$, i.e.\ $\textit{Adm}(T_0(E,\eff_1,\eff_2,\eff_3))$. 
\end{definition}
The game-theoretic view of $T_0(E,\eff_1,\eff_2,\eff_3)$ may be
understood as follows. Given $U,U'\in T\Values$ we can consider a
game between a proponent (who believes $(U,U')\in T\Values$)
and an opponent who believes otherwise. The game begins by the
opponent selecting an element $\tup{(\h_1,\k_1) \ldots
  (\h_n,\k_n),a}\in U$ and $\h_1\models\w$, the \emph{pilot trace} and a 
start heap $\h_1'\models\w$ such that $\rloc{\h_1}{\h_1'}{\rdsin{\eff_3}}$
 to begin a trace in $U'$. Then, the proponent answers with a matching heap $\k_1'$ so that $[\eff_1](\h_1,\h_1',\k_1,\k_1')$. If $\rloc{\h_1}{\h_1'}{\rdsin{\eff_1}}$ does not hold, proponent does not need to ensure that writes are in sync. 
The opponent
then plays a heap $\h_2'$ so that
$[\eff_2](\k_1,\k_1',\h_2,\h_2')$. At this point, it is in the proponents interest to make sure that $\rloc{\k_1}{\k_1'}{\rdsin{\eff_2}}$ for otherwise opponent may make ``funny'' moves.

 Then, again, proponent plays a heap
$\k_2'$ such that $[\eff_1](\h_2,\h_2',\k_2,\k_2')$ and so on until,
proponent has played $\k_n'$ so that
$[\eff_1](\h_{n},\h_n',\k_n,\k_n')$. After that final heap has been played, it is checked that $[\eff_3](\h,\h',\k_n,\k_n')$ holds. If not, proponent loses. If yes, then  proponent must also
play a value $a'$ and it is then checked whether or not
$\tup{(\h_1',\k_1')\ldots (\h_n',\k_n'),a'}\in U'$ and $(a \E a')$. If
this is the case or if at any one point in the game the opponent was
unable to move because there exists no appropriate heap then the
proponent has won the game. Otherwise the opponent wins and we have
$(U,U')\in T_0(E,\eff_1,\eff_2,\eff_3)$ iff the proponent has a
winning strategy for that game.

We notice that by Lemma~\ref{tillem}(\ref{tilwf}) well-formedness of
heaps w.r.t.\ the ambient world is a global invariant which allows us
to refrain form explicitly assuming and asserting it in subsequent
proofs and statements.

We now illustrate the game with a few examples. 
\begin{example}
\label{jife}
\normalfont
 Consider the following programs:
 \(
 e_1 = (\assign{\cloc}{\myread{\cloc} + 1}; \assign{\cloc}{\myread{\cloc}+ 1})  \quad \textrm{and} \qquad  e_2 = 
 (\assign{\cloc}{\myread{\cloc} + 2}). 
 \)

 \noindent
Let $\loc = \locInt(\cloc)$ be the abstract location for a single integer stored at $\cloc$ (see Section~\ref{sec:abs-examples}). Let $E=\sem{\unittype}=
\{((),())\}$ be the value specification for the unit type.  

We  show that  $(\sem{e_1}, \sem{e_2}) \in T(E, 
\{\cEff{\loc}\}, \eff, \eff \cup \{\rEff{\loc}, \wEff{\loc}\}\}$ under the assumption that $\orth{\{\cEff{\loc}\}}{\eff}$, that is, when the environment does not 
read nor write $\cloc$.  This condition is clearly necessary, for $e_1$ and $e_2$ can be distinguished by an environment allowed to read or write $\cloc$. 

Let us now prove the claim when $\orth{\{\cEff{\loc}\}}{\eff}$. The opponent picks a pilot trace in the semantics of $e_1$, for example, 
\(
 \tup{(\h_1,\k_1)(\h_2,\k_2),()}
\)
\noindent
where $\h_1(\cloc)=n$ and $\k_1(\cloc)=n+1$ and $\h_2(\cloc)=n'$ and
$\k_2(\cloc)=n'+1$. The other possible traces are stuttering or
mumbling variants of this one and do not present additional
difficulties. The opponent also chooses a heap $\h_1'$ such that
$\rloc{\h_1}{\h_1'}{\loc}$, i.e., $\h_1'(\cloc)=n$.  Now the proponent
will choose to stutter for the time being and thus selects
$\k_1':=\h_1'$. Indeed, $[\cEff{\loc}](\h_1,\h_1',\k_1,\k_1')$ holds,
so this is legal. The opponent now presents $\h_2'$ such that
$[\eff](\k_1,\k_1',\h_2,\h_2')$. By the assumption on $\eff$ we know
that $n'=\h_2(\cloc)=\k_1(\cloc)=n+1$ and also
$\h_2'(\cloc)=\k_1'(\cloc)=n$. The proponent now answers with
$\k_2':=\h_2'[\cloc{\mapsto}n+2]$. It follows that
$[\cEff{\loc}](\h_2,\h_2',\k_2,\k_2')$ and also
$[\rEff{\loc},\wEff{\loc}](\h_1,\h_1',\k_2,\k_2')$. Finally, by
stuttering $(\h_1',\h_1')(\h_2',\h_2'[\cloc{\mapsto}n+2])\in\sem{e_2}$
so that proponent wins the game.
\end{example}
\begin{example}
\label{ex:parallel}
\normalfont
  Consider the following programs $e_1$ and $e_2$:

   \(
 (\mypar{\assign{\cloc}{\myread{\cloc} + 1}}{\assign{Y}{\myread{Y} + 1}})   \quad \textrm{and} \quad  (\assign{\cloc}{\myread{\cloc} + 1}; \assign{Y}{\myread{Y}+ 1}). 
 \)

\noindent
We show $(\sem{e_1},\sem{e_2}) \in T(E, \{\cEff{X},\cEff{Y}\}, \eff, \eff \cup \{\rEff{X},\rEff{Y},\wEff{X},\wEff{Y}\})$, provided $\eff$ does not read nor modify $X$ and $Y$. This equivalence could be deduced syntactically using our parallelization equation shown in Figure~\ref{eqth}. For illustrative purpose, however, we describe its semantic proof using a game. 

The opponent picks a pilot trace in $\sem{e_1}$, for example, the trace $([n_1|n_2],[n_1|n_2+1])([n_1|n_2+1],[n_1+1|n_2+1]) (\unitval,\unitval)$, where $[n_X|n_Y]$ denotes a heap where $X$ and $Y$ store $n_X$ and $n_Y$, respectively. Notice that in this trace, $Y$ is incremented before $X$ and since $\eff$ does not read nor modify $X$ and $Y$, the environment move does not change the values in $X$ nor $Y$. We are also given an initial heap  $\h_1'$ that agrees with the initial heap $[n_1|n_2]$ on the reads of $\eff \cup \{\rEff{X},\rEff{Y},\wEff{X},\wEff{Y}\}$. Thus, $\h_1'$ should be of the form $[n_1|n_2]$. 

We now play the move $([n_1|n_2], [n_1+1|n_2])$. This is a valid move in the game as $[\cEff{X},\cEff{Y}]([n_1|n_2],[n_1|n_2],[n_1|n_2+1],[n_1+1|n_2])$. The environment moves returning $[n_1+1|n_2]$ as it does not read nor modify $X$ and $Y$. We can now match the trace above by playing $([n_1+1|n_2],[n_1+1|n_2+1])$ and returning $(\unitval,\unitval)$, winnning the game.
\end{example}
The following is one of the main technical result of our paper and shows that
the computation specifications $T(\dots)$ can indeed serve as the
basis for a logical relation. We just show here the soundness proof for the parallel congruence rule. The missing proofs appear in the attached Appendix.
\begin{theorem}\label{main}
The following hold whenever well-formed. 
\begin{compactenum}
\item\label{eins} If $(U,U')\in T(E,\eff_1,\eff_2,\eff_3)$ then $(q_i(U),q_i(U'))\in T(E,\eff_1,\eff_2)$. 
\item\label{einsa} $T(E,\eff_1,\eff_2,\eff_3)$ is a computation specification. 
\item\label{zwei} If $(U,U')\in T(E,\eff_1,\eff_2,\eff_3)$ then $(U^\dagger,{U'}^\dagger)\in T(E,\eff_1,\eff_2,\eff_3)$. 
\item\label{drei} If $(a,a')\in E$ then $(\textit{rtn}(a),\textit{rtn}(a'))$ is in $T(E,\eff_1,\eff_2,\eff_3)$. 
\item\label{vier} Suppose that $(\eff_1,\eff_2,\eff_3)$ is an effect specification where $\eff_1\cup\eff_2\subseteq \eff_3$. Suppose that whenever $\rloc{\h}{\h'}{\rdsin{\eff_1}}$ and $c(\h)=(\h_1,a)$ then there exist $(\h_1',a')$ such that $c'(\h')=(\h_1',a')$ and $[\eff_1](\h,\h',\h_1,\h_1')$ and $aEa'$. We then have  for any $\eff_2$,  $(\textit{fromstate}(c),\textit{fromstate}(c'))\in T(E,\eff_1,\eff_2,\eff_3)$. 
\item\label{fuenf} If 
 $(f,f')\in E_1{\rightarrow} T(E_2,\eff_1,\eff_2,\eff_3)$ and $(U,U')\in T(E_1,\eff_1,\eff_2,\eff_3)$ then $(\textit{bnd}(f,U),\textit{bnd}(f',U'))\in T(E_2,\eff_1,\eff_2,\eff_3)$. 
\item\label{sechs} If $(U_1,U_1')\in T(E_1,\eff_1,\eff\cup\eff_2,\eff\cup\eff_2\cup\eff')$ and $(U_2,U_2')\in T(E_2,\eff_2,\eff\cup\eff_1,\eff\cup\eff_1\cup\eff')$ then $(U_1\semparallel U_1',U_2\semparallel U_2')\in 
T(E_1\times E_2,\eff_1\cup\eff_2,\eff,\eff\cup\eff'\cup(\eff_1\sqcup\eff_2))$.
\item\label{acht} $(U,U')\in T(E,\eff_1,\emptyset,\eff_3)$ $\Rightarrow$ 
$(\textit{at}(U), \textit{at}(U'))\in T(\eff_3,\eff_2,\eff_2\cup\eff_3)$.  
\end{compactenum}
\end{theorem}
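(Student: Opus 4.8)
The plan is to prove the parallel congruence, i.e.\ part~\ref{sechs}: from $(U_1,U_1')\in T(E_1,\eff_1,\eff\cup\eff_2,\eff\cup\eff_2\cup\eff')$ and $(U_2,U_2')\in T(E_2,\eff_2,\eff\cup\eff_1,\eff\cup\eff_1\cup\eff')$ conclude $(U_1\semparallel U_2,U_1'\semparallel U_2')\in T(E_1\times E_2,\eff_1\cup\eff_2,\eff,\eff\cup\eff'\cup(\eff_1\sqcup\eff_2))$. Since $T=\textit{Adm}(T_0)$, I would first reduce to the case where the two hypotheses are witnessed by genuine winning strategies in the game of Definition~\ref{defn:crucial} (using admissibility together with the canonical deflations of part~\ref{eins} to cut down to compact, hence essentially finite, arguments), and where the opponent's pilot trace $\tup{t_3,(a,b)}\in U_1\semparallel U_2$ is a genuine shuffle $\textit{inter}(t_1,t_2,t_3)$ with $\tup{t_1,a}\in U_1$ and $\tup{t_2,b}\in U_2$. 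The stuttering, mumbling, [Down] and [Sup] variants built into $\semparallel$ are matched by the corresponding variant of the response and absorbed by closure of $E_1\times E_2$ and of $T$.

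The core is to combine the two sub-strategies into one, played on a \emph{single} shared thread of primed heaps $\h_1',\k_1',\dots,\h_n',\k_n'$, whose projections to the $t_1$- and $t_2$-positions of the shuffle are the responses fed to the two sub-games. Processing $t_3$ left to right, at a $t_1$-step I let the sub-game-$1$ strategy produce $\k_i'$ with $[\eff_1](\h_i,\h_i',\k_i,\k_i')$, which by monotonicity (Lemma~\ref{tillem}.\ref{tilmon}) is also an $[\eff_1\cup\eff_2]$-move as the combined game demands, and symmetrically at a $t_2$-step. The point is that each computation sees the other's moves, together with the genuine environment moves of effect $\eff$, as its own tolerated environment: a $t_2$-step is $[\eff_2]\Rightarrow[\eff\cup\eff_2]$ and a genuine $\eff$-gap is $[\eff]\Rightarrow[\eff\cup\eff_2]$, so by transitivity (Lemma~\ref{tillem}.\ref{tiltrans}) each maximal stretch between two consecutive $t_1$-steps is a legal $[\eff\cup\eff_2]$ environment move for sub-game~$1$, and dually for sub-game~$2$. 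Read-agreement is propagated with Lemma~\ref{tillem}.\ref{tilrd}: since $\rdsin{\eff\cup\eff'\cup(\eff_1\sqcup\eff_2)}\supseteq\rdsin{\eff\cup\eff_2\cup\eff'}$, the opponent's initial $\rloc{\h_1}{\h_1'}{\rdsin{\eff\cup\eff'\cup(\eff_1\sqcup\eff_2)}}$ supplies the initial agreement each sub-game needs, and this is preserved along the preceding stretch because every move there is, by monotonicity, an $[\eff\cup\eff_2\cup\eff']$-tiling. The sub-games return $a',b'$ with $aE_1a'$ and $bE_2b'$, hence $(a,b)\,(E_1\times E_2)\,(a',b')$; and the constructed primed trace is the \emph{same} shuffle of the two sub-game responses (which lie in $U_1'$ and $U_2'$), so it lies in $U_1'\semparallel U_2'$ as required.

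The remaining, and genuinely hard, step is the final end-to-end check $[\eff_3](\h_1,\h_1',\k_n,\k_n')$ with $\eff_3=\eff\cup\eff'\cup(\eff_1\sqcup\eff_2)$, whose conditions (i) and (ii) demand $\gloc{\h_1}{\k_n}{\eff_3}$ and $\gloc{\h_1'}{\k_n'}{\eff_3}$ for the \emph{reduced} write set that $\sqcup$ leaves behind, keeping only locations written by \emph{both} computations. Merely composing the per-step tilings by transitivity yields the too-coarse $[\eff\cup\eff_1\cup\eff_2]$; the elimination of one-sided writes must instead be extracted from the sub-games' end-to-end guarantees $\eff\cup\eff_2\cup\eff'$ and $\eff\cup\eff_1\cup\eff'$, which already hide each computation's transient self-writes. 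I would argue location by location: if $\loc$ is written by, say, $e_1$ but not $e_2$ and lies outside $\wrs(\eff)\cup\wrs(\eff')$, then $\loc\notin\wrs(\eff\cup\eff_2\cup\eff')$, so sub-game~$1$'s end-to-end step $\gloc{\h_{first_1}}{\k_{last_1}}{\eff\cup\eff_2\cup\eff'}$ forces $\rrloc{\h_{first_1}}{\k_{last_1}}{\loc}$ (no generating move of that step relation touches $\loc$, using world satisfaction), while every $t_2$- and $\eff$-move keeps $\loc$ strict-equivalent; chaining these yields $\rrloc{\h_1}{\k_n}{\loc}$, matching $\eff_3$, and symmetrically on the primed side, whereas locations written by both lie in $\eff_1\sqcup\eff_2$ and are handled by composing the two step relations. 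The true obstacle is then to \emph{reassemble} these per-location net facts into the single relation $\gloc{\h_1}{\k_n}{\eff_3}$: because the two computations' writes are interleaved on the shared heap, sequential transitivity of tiling does not apply, and one needs the local characterisation of $\gloc{}{}{\cdot}$ (equivalently, commutation of the per-location step relations, which holds in all our examples, where the only silent moves are identities) noted just after the tiling definition. Condition (iii) of $[\eff_3]$ follows from the sub-games' own condition (iii) together with the read-agreement bookkeeping already established, and well-formedness of all heaps throughout is the global invariant of Lemma~\ref{tillem}.\ref{tilwf}.
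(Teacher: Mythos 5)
Your proposal addresses only item~\ref{sechs} of the eight claims; the paper's appendix also proves items \ref{eins}--\ref{fuenf} and \ref{acht} (several of which, e.g.\ the mumbling case of \ref{zwei} and the bind case \ref{fuenf}, carry real content), so as a proof of the theorem as stated your attempt is incomplete in scope, though you are explicit about this. For item~\ref{sechs} itself, your construction is essentially the paper's: you silently repair the statement's typo to $(U_1\semparallel U_2,\,U_1'\semparallel U_2')$, reduce to $T_0$ and genuine shuffles exactly as the paper does via Lemma~\ref{funad}, Lemma~\ref{func} and item~\ref{zwei}, and then interleave the two winning strategies $S_1,S_2$ on a single primed thread, letting each sub-strategy absorb the other computation's stretches together with the genuine $\eff$-gaps as one $[\eff\cup\eff_2]$- (resp.\ $[\eff\cup\eff_1]$-) environment move by Lemma~\ref{tillem} (\ref{tilmon}\&\ref{tiltrans}), with read-agreement propagated by Lemma~\ref{tillem}(\ref{tilrd}) using $\rds(\eff_1\sqcup\eff_2)=\rds(\eff_1)\cup\rds(\eff_2)$. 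This is the paper's proof almost move for move.

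Where you diverge is the final $\sqcup$-strengthening. The paper's device is to view the \emph{entire} play as a single instance of the game $U_1$ vs.\ $U_1'$ (padding $t_1$ by stuttering, so the $t_2$-stretches become environment interventions), whence the sub-game's own final check yields $[\eff\cup\eff_2\cup\eff'](\h,\h',\h_p,\h_p')$ for the whole play in one stroke, and symmetrically for $U_2$; your segment-by-segment chaining (sub-game end-to-end guarantee, plus $\rrloc{}{}{\loc}$-preservation across foreign stretches via the world coherence condition) reaches the same per-location conclusions but more laboriously, and you would do well to adopt the padding trick. More substantively: you flag as the ``true obstacle'' the reassembly of the per-location facts $\rrloc{\h}{\h_p}{\loc}$ (for one-sidedly written $\loc$) into the single relation $\gloc{\h}{\h_p}{\eff\cup\eff'\cup(\eff_1\sqcup\eff_2)}$ required by condition (i) of tiling, and you resolve it only under commutation of the per-location step relations --- a hypothesis the theorem does not make. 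The paper stops precisely where you stop short: it derives $\rrloc{\h}{\h_p}{\loc}$ and $\rrloc{\h'}{\h_p'}{\loc}$ and takes the strengthened tiling as thereby established, treating the decomposition as immediate. So you have correctly isolated the one step the paper glosses over; but as submitted your argument proves the claim only conditionally, and you should either supply an unconditional derivation of this intersection property from Definition~\ref{absloc} and world satisfaction, or state the commutation assumption explicitly and note that it holds for all abstract locations used in the paper, where the only silent moves are identities.
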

\begin{proof}
Ad \ref{sechs}. Suppose that $(U_1,U_1')\in
T(E_1,\eff_1,\eff\cup\eff_2,\eff\cup\eff_2\cup\eff')$ and $(U_2,U_2')\in T(E_2,\eff_2,\eff\cup\eff_1,\eff\cup\eff_1\cup\eff')$ and let $(t,(a,b))\in U_1\semparallel U_2$,
thus $\textit{inter}(t_1,t_2,t)$ (ignoring $\dagger$ by item
\ref{zwei}) where $(t_1,a)\in U_1$ and $(t_2,b)\in U_2$. Let $S_1$,
$S_2$ be corresponding winning strategies.  The idea is to use $S_1$
when we are in $t_1$ and to use $S_2$ when we are in $t_2$. Supposing
that $t$ starts with a $t_1$ fragment we begin by playing according to $S_1$. Let $t$ be of the form:
\[
\begin{array}{ll}
  t = & (\h_1,\k_1) \cdots (\h_n,\k_n) (\h_{n+1},\k_{n+1}) \cdots (\h_{n+m},\k_{n+m})\\ & (\h_{n+m+1},\k_{n+m+1}) \cdots (\h_{n+m+k},\k_{n+m+k})  \cdots (\h_p,\k_p)  
\end{array}
\]
composed of pieces of the traces $t_1$ and $t_2$. Assume w.l.o.g. that the first piece $(\h_1,\k_1) \cdots (\h_n,\k_n)$ is a part of $t_1$. We are given a initial heap $\h_1'$ such that $\rloc{\h}{\h'}{\rds(\eff\cup\eff'\cup(\eff_1\sqcup\eff_2))}$. Since $\rds(\eff_1\sqcup\eff_2)=\rds(\eff_1)\cup\rds(\eff_2)$, we can apply strategy $S_1$ to guide us through the first part of the game, obtaining:
\[
  (\h_1',\k_1') \cdots (\h_n',\k_n')
\]
Moreover, we have an environment move which forms the tile $[\eff](\k_n,\k_n',\h_{n+1},\h_{n'+1})$. Thus, we have the tile $[\eff \cup \eff_1](\h_1,\h_1',\h_{n+1},\h_{n+1}')$ which can be seen as an environment move for $t_2$. Therefore, we can use strategy $S_2$ for the $U'$ and continue the game, obtaining the trace piece:
\[
  (\h_{n+1}',\k_{n+1}') \cdots (\h_{n+m}',\k_{n+m}')
\]
Now, we can return to the $S_1$ game as the trace above is seen as an environment move for $U$. Alternating these strategies, we get a trace $t$ which is in $(U \semparallel U')$. Let $(a',b')$ be the final values reached at the end. It is clear that $[\eff\cup\eff'\cup\eff_1\cup\eff_2](\h,\h',\h_p,\h_p')$ and also $aE_1a'$ and $bE_2b'$. 

It remains to assert the stronger statement  $[\eff\cup\eff'\cup(\eff_1\sqcup\eff_2)](\h,\h',\h_p,\h_p')$. To see this suppose that $\wEff\loc\in\eff_1\setminus \eff_2\setminus\eff\setminus\eff'$. 
Since the entire game can be viewed as an instance of the game $U_1$ vs $U_1'$ with interventions by $U_2$ vs.\ $U_2'$ regarded as environment interactions we have $[\eff\cup\eff_2\cup\eff'](\h,\h',\h_p,\h_p')$ so that in fact 
$\rrloc{\h}{\h_p}{\loc}$ and $\rrloc{\h'}{\h_p'}{\loc}$. The case of $\cEff{\loc}$ and $\eff_1$,$\eff_2$ interchanged is analogous.
\end{proof}
We assign a value specification $\sem{\tau}$ 
to each refined type by
\[
\begin{array}{l}
 \textrm{\textbullet\quad} \sem{\inttype} =  \{(v,v')\mid v=v'\in
\mathbb{Z}\} \quad
\textrm{\textbullet\quad} \sem{\tau_1\times\tau_2} = \sem{\tau_1}\times\sem{\tau_2}\\
\textrm{\textbullet\quad} \sem{\tau_1\myeffto{\eff_1}{\eff_2}{\eff_3} \tau_2} = \sem{\tau_1}{\rightarrow}T(\sem{\tau_2},\eff_1,\eff_2,\eff_3)
\end{array}\]
We omit the obvious definition of the other basic types and assume
value specifications for user-specified types as given.

\begin{assumption}\label{assi}
We henceforth adopt the following \emph{soundness assumption} which must be established concretely for every concrete instance of our framework. 
\begin{itemize}
\item The initial heap satisfies the current world: $\hinit\models\w$. 
\item Each axiom is type sound: whenever $(v,v',\tau)$ is an axiom then 
 $(v,v)\in\sem{\tau}$ and $(v',v')\in\sem{\tau}$. 
\item Each axiom is inequationally 
sound: whenever $(v,v',\tau)$ is an axiom then 
 $(v,v')\in\sem{\tau}$. 
\end{itemize}
\end{assumption}
\begin{theorem}\label{tysound}
Suppose that $\Gamma\vdash v:\tau$ and $\Gamma\vdash e:
\myety{\tau}{\eff_1}{\eff_2}{\eff_3}$. Then
$(\eta,\eta')\in\sem{\Gamma}$ (interpreting a context as a cartesian
product) implies $(\semV{v}\eta,\semV{v}\eta')\in\sem{\tau}$ and
$(\semC{e}\eta,\semC{e}\eta')\in T(\sem{\tau},\eff_1,\eff_2,\eff_3)$.
\end{theorem}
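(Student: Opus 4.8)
The plan is to prove, by simultaneous induction on the two derivations, the stronger \emph{two-sided} statement: if $\Gamma\vdash v\leq v':\tau$ and $(\eta,\eta')\in\sem{\Gamma}$ then $(\semV{v}\eta,\semV{v'}\eta')\in\sem{\tau}$, and if $\Gamma\vdash e\leq e':\myety{\tau}{\eff_1}{\eff_2}{\eff_3}$ and $(\eta,\eta')\in\sem{\Gamma}$ then $(\semC{e}\eta,\semC{e'}\eta')\in T(\sem{\tau},\eff_1,\eff_2,\eff_3)$. Theorem~\ref{tysound} is then the diagonal instance $v=v'$, $e=e'$. This strengthening is forced, because the derivation of a self-typing judgement may pass through genuine approximations (via the transitivity rule, the $\mathrm{Sem}$ rules, and the axiom rules). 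The backbone of the argument is that every clause of Figure~\ref{seme} builds the denotation out of the monad combinators $\textit{rtn}$, $\textit{bnd}$, $\semparallel$, $\textit{at}$ and $\textit{fromstate}$, and that Theorem~\ref{main} already certifies that each of these preserves the relations $T(-)$; the induction therefore reduces, rule by rule, to the matching clause of Theorem~\ref{main}.

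Concretely, I would dispatch the cases as follows. The value rules for variables, constants, pairs and projections are immediate from the definition of $\sem{\tau}$ and the hypothesis $(\eta,\eta')\in\sem{\Gamma}$. For the computation congruences: coercing a value to a computation uses Theorem~\ref{main}(\ref{drei}) with $\semC{v}\rho=\textit{rtn}(\semV{v}\rho)$; the let-rule uses closure under $\textit{bnd}$, Theorem~\ref{main}(\ref{fuenf}); application unfolds the definition of $E\rightarrow Q$; the conditional splits on the (necessarily equal) related boolean and invokes the appropriate branch hypothesis; parallel congruence is exactly Theorem~\ref{main}(\ref{sechs}); the $\mathrm{Atom}$ rule is Theorem~\ref{main}(\ref{acht}); and the memory primitives $\myread{v}$, $\assign{v_1}{v_2}$, $\myref{v}$ go through Theorem~\ref{main}(\ref{vier}) once one checks, against the relevant concurrent-abstract-location definitions, that the underlying state transformer sends $\rdsin{\eff_1}$-related heaps to $[\eff_1]$-tiled, $E$-related outputs. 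Subeffecting follows from monotonicity of tiling, Lemma~\ref{tillem}(\ref{tilmon}), reflected in the game: a larger $\eff_1$ and $\eff_3$ only relax the proponent's obligations while a smaller $\eff_2$ only restricts the opponent. The two $\mathrm{Sem}$ rules are trivial since $T(-)$ is a property of denotations alone, the $\mathrm{Ax}$ rules are discharged by Assumption~\ref{assi}, and the transitivity rule requires composability of $T(-)$, obtained by chaining the two winning strategies (feeding the response trace of the first as the challenge to the second).

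The one structural case that is not a direct appeal to Theorem~\ref{main} is recursion. For $\vfix{f}{x}{e}$ we have $\semV{\vfix{f}{x}{e}}\rho=\funn{g^\ddagger(\rho)}$ with $g^\ddagger(\rho)=\sup_i g_\rho^i(\bot)$, so I would run an inner induction on $i$ showing that the approximants $(g_\rho^i(\bot),g_{\rho'}^i(\bot))$ lie in the function specification $\sem{\tau_1}\rightarrow T(\sem{\tau_2},\eff_1,\eff_2,\eff_3)$. The base case holds because $(\bot,\bot)$ maps related arguments to $(\emptyset,\emptyset)$, and $(\emptyset,\emptyset)\in T_0$ vacuously (the universal quantifier over elements of the empty challenge set is satisfied trivially); the step case applies the outer hypothesis to the body $e$ under environments extended by the related pair $(g_\rho^i(\bot),g_{\rho'}^i(\bot))$ at $f$ and a related pair at $x$. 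Passing to the supremum is legitimate because value specifications are admissible; Lemma~\ref{func} (together with Lemma~\ref{funad}) guarantees that this admissible closure commutes with the function space, i.e.\ $\sem{\tau_1}\rightarrow\textit{Adm}(T_0)=\textit{Adm}(\sem{\tau_1}\rightarrow T_0)$, which is exactly the role of the canonical deflations and of the closure property Theorem~\ref{main}(\ref{eins}).

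I expect the genuine difficulty to lie not in these congruences but in the effect-dependent transformations of Figure~\ref{eqth} --- Parallelization, Commuting, Duplicated, Lambda Hoist and Deadcode. These do not relate syntactically congruent programs, so they cannot be obtained by plugging premises into a single clause of Theorem~\ref{main}; each must be established by constructing an explicit winning strategy over traces, in the style of the proof of Theorem~\ref{main}(\ref{sechs}) and of Examples~\ref{jife} and~\ref{ex:parallel}. The delicate points are precisely those flagged in the prose around Figure~\ref{eqth}: the $(-)^C$ strengthenings that let the two differently-behaving sides tolerate each other's transient chaotic access, the $\perp$ side conditions that make the interleaved strategies composable, and the final verification that the reached heaps satisfy the end-to-end tile $[\eff_3]$ --- the same $\sqcup$-style bookkeeping, separating transient writes that are cleaned up from genuine end-to-end writes, that appears at the close of the proof of Theorem~\ref{main}(\ref{sechs}). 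Organizing these five strategy constructions uniformly, and checking their subtle effect side conditions, is where the real work of the soundness theorem is concentrated.
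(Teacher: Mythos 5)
Your handling of what Theorem~\ref{tysound} actually asserts coincides with the paper's proof. The typing judgement is \emph{defined} as derivability of the diagonal $\Gamma\vdash e\leq e$ from the rules of Figure~\ref{tycrule} alone, and the paper's (one-paragraph) proof is exactly your reduction: induction on derivations, with each congruence rule discharged by the matching clause of Theorem~\ref{main} (values-as-computations via \ref{drei}, let via \ref{fuenf}, parallel via \ref{sechs}, atomic via \ref{acht}, the state primitives via \ref{vier}), the Sem and Ax rules by equality of denotations and Assumption~\ref{assi}, and recursion ``from the definitions and from the fact that all specifications are admissible'' --- your inner induction on the approximants $g_\rho^i(\bot)$, with $(\emptyset,\emptyset)\in T_0$ vacuously at the base and Lemmas~\ref{funad} and~\ref{func} justifying passage to the supremum, is the intended expansion of that remark. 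Note, however, that the five transformations of Figure~\ref{eqth} are \emph{not} part of this theorem's burden: they are the content of Theorem~\ref{mainzwei} and enter only in Theorem~\ref{eqthm}, so the ``real work'' you locate there is real but belongs to a different statement.

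The genuine gap is in your strengthening. You propose to prove the two-sided statement in the \emph{plain} relation $T(\sem{\tau},\eff_1,\eff_2,\eff_3)$ and to handle the transitivity rule ``by chaining the two winning strategies.'' This step fails. To compose a strategy for $(U,U')$ with one for $(U',U'')$ you must, at every environment step of the composite game, interpolate a legal environment move $\h'$ for the middle trace: from the single tile $[\eff_2](\k,\k'',\h,\h'')$ relating the outer traces you need some $\h'$ with both $[\eff_2](\k,\k',\h,\h')$ and $[\eff_2](\k',\k'',\h',\h'')$, in particular $\gloc{\k'}{\h'}{\eff_2}$ together with clause (iii) of tiling on both sides. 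Nothing supplies such an $\h'$: for instance the stuttering choice $\h'=\k'$ satisfies $\gloc{\k'}{\k'}{\eff_2}$ by silent moves but violates clause (iii) whenever the environment genuinely modifies a written location, since the middle then stays put while the outer sides move, and neither $\rrloc{\k''}{\h''}{\loc}$ nor $\rloc{\k'}{\h''}{\loc}$ is guaranteed. This is precisely why the paper does \emph{not} prove your merged statement: Theorem~\ref{eqthm} is stated with the transitive closures $\sem{\tau}^+$ and $T(\dots)^+$, where the transitivity rule is free, and its proof moreover \emph{uses} the diagonal Theorem~\ref{tysound} as a prior result to pad the two closure chains to a common length $n_1=n_2$ before applying Theorem~\ref{main}(\ref{fuenf}) in the let case. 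So the diagonal-first, closure-second layering is load-bearing, not stylistic. Your induction becomes correct if you restate it with $(-)^+$ throughout --- at which point the chaining argument is unnecessary, the diagonal instance needed for padding is your (sound) congruence fragment, and the remainder of your case analysis goes through as written.
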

\begin{proof}
By induction on derivations. Most cases are already subsumed by
Theorem~\ref{main}.
The typing rules regarding functions and recursion follow from the definitions and from the fact that all specifications are admissible.  
\end{proof}
\section{Typed observational approximation}\label{obseq}

\begin{definition}[Observational approximation]
Let $v,v'$ be value expressions where $\vdash v:\tau$ and $\vdash
v':\tau$. We say that $v$ observationally approximates $v'$ at type
$\tau$ if for all $f$ such that $\vdash
f:\tau\myeffto{\eff_1}{\eff}{\eff_3} \inttype$ (``observations'') it
is the case that if $((\hinit,\k),n)\in\sem{f\ v}$ for
$v\in\mathbb{Z}$ and starting from $\hinit$ then $((\hinit,\k'),n)\in
\sem{f\ v'}$ for some $\k'$. We write $\vdash v\leq_{\mathit{obs}}v'$
in this case. We say that $v$ and $v'$ are observationally equivalent
at type $\tau$, written $\vdash v =_{\mathit{obs}}v'$ if both $\vdash
v\leq_{\mathit{obs}}v':\tau$ and $\vdash v'\leq_{\mathit{obs}}v:\tau$.
\end{definition} 
This means that for every test harness $f$ we build around $v$ and
$v'$, no matter how complicated it is and whatever environments it
sets up to run concurrently with $v$ and $v'$ it is the case that each
terminating computation of $v$ (in the environment installed by $f$)
can be matched by a terminating computation with the same result by
$v'$ in the same environment. It is important, however, that the
environment be well typed, thus will respect the contracts set up by
the type $\tau$. E.g.\ if $\tau$ is a functional type expecting, say,
a pure function as argument then, by the typing restriction, the
environment $f$ cannot suddenly feed $v$ and $v'$ a side-effecting
function as input.

We remark that observational approximation extends canonically to open
terms by lambda abstracting free variables (and adding a dummy
abstraction in the case of closed terms)
\cite{DBLP:dblp_conf/popl/Benton0N14}.

As usual, the logical relation is sound with respect to typed
observational approximation and thus can be used to deduce nontrivial
observational approximation relations. We state and prove the precise
formulation of this result.

\begin{theorem}
Let $v,v'$ be closed values and suppose that $(\sem{v},\sem{v'})\in\sem{\tau}$. 
Then $\vdash v\leq_{\mathit{obs}}v':\tau$. 
\end{theorem}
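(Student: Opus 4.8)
The plan is to instantiate the logical relation at the observing context and then read off the matching trace directly from the game semantics underlying $T_0$. First I would invoke the fundamental theorem (Theorem~\ref{tysound}): any observation $f$ with $\vdash f:\tau\myeffto{\eff_1}{\eff}{\eff_3}\inttype$ satisfies $(\semV{f},\semV{f})\in\sem{\tau\myeffto{\eff_1}{\eff}{\eff_3}\inttype}=\sem{\tau}{\rightarrow}T(\sem{\inttype},\eff_1,\eff,\eff_3)$. Feeding the hypothesis $(\sem{v},\sem{v'})\in\sem{\tau}$ into the definition of $E{\rightarrow}Q$ yields $(\semV{f}(\sem{v}),\semV{f}(\sem{v'}))\in T(\sem{\inttype},\eff_1,\eff,\eff_3)$, and since $\semC{f\,v}=\semV{f}(\sem{v})$ and $\semC{f\,v'}=\semV{f}(\sem{v'})$ by the application clause for closed values, we obtain $(\semC{f\,v},\semC{f\,v'})\in T(\sem{\inttype},\eff_1,\eff,\eff_3)$.

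It then remains to convert this relation into the required trace-matching. Suppose $\tup{(\hinit,\k),n}\in\semC{f\,v}$ with $n\in\ints$. Because $T=\textit{Adm}(T_0)$, there is a chain $(U_i,U_i')_i$ of elements of $T_0(\sem{\inttype},\eff_1,\eff,\eff_3)$ whose supremum is $(\semC{f\,v},\semC{f\,v'})$. Since integer result values are \emph{compact} in $\Values$ and the suprema in $T\Values$ only add sups of value-chains with a fixed trace (axiom [Sup]), membership of the single pair $\tup{(\hinit,\k),n}$ in the supremum forces $\tup{(\hinit,\k),n}\in U_j$ for some index $j$. I would then apply the $T_0$ clause to $(U_j,U_j')$ with this one-step pilot trace: its start heap is $\h_1=\hinit$, which satisfies $\hinit\models\w$ by Assumption~\ref{assi}, so I may legally choose the opponent's start heap to be $\h_1'=\hinit$; the side condition $\rloc{\hinit}{\hinit}{\rds(\eff_3)}$ holds because $\hinit\models\w$ gives $\inR{\hinit}{\loc}$ for each $\loc\in\w$, i.e.\ reflexivity of each read PER on its support. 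The game with $n=1$ then delivers $\k'$ and a response value $n'$ with $(n,n')\in\sem{\inttype}$ and $\tup{(\hinit,\k'),n'}\in U_j'\subseteq\semC{f\,v'}$. As $\sem{\inttype}$ equates integers, $n'=n$, so $\tup{(\hinit,\k'),n}\in\sem{f\,v'}$, which is exactly the clause defining $\vdash v\leq_{\mathit{obs}}v':\tau$.

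The main obstacle I anticipate is the passage through the admissible closure: the logical relation only guarantees membership in $\textit{Adm}(T_0)$, not in $T_0$ itself, so the game rules of Definition~\ref{defn:crucial} cannot be used verbatim. The compactness of integer result values is precisely what lets a single trace-value pair be localised to one member $U_j$ of the approximating chain, after which the $T_0$ game applies directly; I would make this compactness step explicit, using that $\ints$ is flat (so any ascending chain with supremum $n$ is eventually equal to $n$) together with the down- and sup-closure axioms of the trace monad. The remaining bookkeeping---that a one-step pilot trace reduces the nested game to a single round, and that the choice $\h_1'=\hinit$ is admissible---is routine given Assumption~\ref{assi} and the reflexivity of the read relations on well-formed heaps.
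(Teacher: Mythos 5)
Your proposal is correct and follows essentially the same route as the paper: invoke Theorem~\ref{tysound} to get $(\sem{f},\sem{f})$ in the function-space relation, apply it to $(\sem{v},\sem{v'})$ to land in $T(\sem{\inttype},\eff_1,\eff,\eff_3)$, and then play a single round of the game starting from $\h_1'=\hinit$ (legal since $\hinit\models\w$ gives $\rloc{\hinit}{\hinit}{\rds(\eff_3)}$), reading off the matching trace $((\hinit,\k'),n)$. Your explicit handling of the admissible closure---localising the pair $\tup{(\hinit,\k),n}$ to some $U_j\in T_0$ via flatness of $\ints$ together with [Down] and [Sup]---is a genuine detail the paper's two-line proof silently elides, and it is correctly argued, so it strengthens rather than diverges from the paper's argument.
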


\begin{proof}
If $\vdash f:\tau\myeffto{\eff_1}{\eff_2}{\eff_3}\inttype$ then by Thm~\ref{tysound} we have $(\sem{f},\sem{f})\in \sem{\tau\myeffto{\eff_1}{\eff_2}{\eff_3} \inttype}$, so 
$(\sem{f\ v},\sem{f\ v'})\in T(\sem{\inttype},\eff_1,\eff_2,\eff_3)^+$. 

Let $((\hinit,\k),v)\in \sem{f\ v}$. We have $\hinit\models\w$ and
thus in particular
$\rloc{\hinit}{\hinit}{\rdsin{\eff_3}\cup\rdsin{\eff_1}}$. There must
therefore exist a matching heap $\k'$ and a value $v'$ such that
$((\hinit,\k'),v')\in \sem{f\ v'}$ and $v=v'\in\mathbb{Z}$.
\end{proof}
This means that the examples from earlier on give rise to valid transformations in the sense of observational approximation. For instance, for $e_1$ and $e_2$ form Example~\ref{jife} we find that 
$\lambda \_.e_1 =_{\mathit{obs}}  \lambda\_.e_2$ at type $\unittype\myeffto{\{\cEff{\loc}\}}{\eff}{\eff \cup \{\rEff{\loc}, \wEff{\loc}}\unittype$ whenever $\cloc$ does not appear in $\eff$. 

\section{Effect-dependent transformations}
We will now establish the semantic soundness of the inequational
theory of effect-dependent program transformations given in
Figure~\ref{eqth}.  It includes concurrent versions of the
effect-dependent equations from \cite{DBLP:conf/aplas/BentonKHB06,DBLP:conf/icfp/ThamsborgB11},
but the side conditions on the environmental interaction are by no
means obvious. We also note that some equations now only hold in one
direction thus become inequations. This is in particular the case for
duplicated computations. Suppose that $\mnond{}$ is a computation that
nondeterministically chooses a boolean value and let
$e:=\letin{x}{\mnond{}}{(x,x)}$. Then, even though $\mnond{}$ does not read
nor write any location we only have $e\leq (\mnond{},\mnond{})$, but not
$(\mnond{},\mnond{})\leq e$ for $(\mnond{},\mnond{})$ admits the result
$(\mtrue,\mfalse)$ but $e$ does not. Furthermore, due to presence of
nontermination the equations for dead code elimination and pure lambda hoist also hold in one direction only. It might be possible to restore both directions of said equations by introducing special effects for nondeterminism and nontermination; we have not explored this avenue. We concentrate the individual effect-dependent transformations before
summarising the foregoing results in the general soundness Theorem
\ref{eqthm}. 

In many of the equations, co-effects play an important role. For example, in the commuting and parallelization equations, the internal effects $\eff_1$ and $\eff_2$ in the premises are replaced by  $\eff_1^C$ and $\eff_2^C$ in the internal effects of the conclusion. This makes sense intuitively because the computations are run in a different order, so for the internal moves, the locations in $\eff_1$ and $\eff_2$ can be modified in any way (see Example~\ref{ex:parallel}). However, in the global effect, we can still guarantee the effects $\eff_1'$ and $\eff_2'$ because of the $\perp$-conditions. This intuition appears directly in the soundness proofs.

The following thus constitutes the second main technical
result of our paper. We sketch the soundness proof for parallelization. The detailed proofs appear in the attached Appendix.
\begin{theorem}\label{mainzwei}
The following hold whenever well-formed. 
\begin{itemize}
\item \textbf{Commuting} \label{commusound} If $(U_1,U'_1)\in T(E_1,\eff_1,\eff^C,\eff^C \cup \eff_1')$ and 
$(U_2,U'_2)\in T(E_2,\eff_2,\eff^C,\eff^C \cup \eff_2')$ and $\eff_1\perp\eff$ and $\eff_2\perp\eff$ and $\eff_1'\perp\eff_2'$ then
\begin{small}
\[  
\begin{array}{c}
 (\{(t_1t_2,(v_1,v_2))\mid (t_1,v_1)\in U_1, (t_2,v_2)\in U_2\}^\dagger,\\ 
 \{(t_2't_1',(v_1',v_2'))\mid (t_1',v_1')\in U_1', (t_2',v_2')\in U_2'\}^\dagger)\\
 \in T(E_1\times E_2,(\eff_1\cup\eff_2)^C,\eff,\eff\cup \eff_1'\cup\eff_2')
\end{array}
\] 
\end{small}%
\item\textbf{Duplicated} \label{dupsound} 
If $(U,U') \in T(E,\eff_1,\eff_2^C,\eff_2^C \cup \eff')$ and $\rdsin{\eff'} \cap \wrsin{\eff'} = \emptyset$ and $\eff_2 \perp \eff_1$, then
\[
\begin{array}{c}
 (\{(t,(v,v))\mid (t,v)\in U\}^\dagger,
\{(t_1't_2',(v_1',v_2'))\mid 
(t_1',v_1') \in U', \\
 (t_2',v_2') \in U'\}^\dagger) \in T(E,\eff_1,\eff_2,\eff_2 \cup \eff')  
\end{array}
\]
\item \label{puresound} \textbf{Pure} Let $(U,U') \in T(E,\eff_1, \eff_2^C,\eff_2^C)$, such that $\eff_1\perp\eff_2$. If $((q_1,k_1)\dots(q_n,k_n),v) \in U$ for some \emph{arbitrary} trace $t=(q_1,k_1)\dots(q_n,k_n)$ (with $q_1\models \w$) and  value $v$, then $(\textit{rtn}(v),U') \in T(E,\eff_1^C,\eff_2,\eff_2)$;

\item\label{deadsound} \textbf{Dead} Suppose that $(U,U')\in T(\unittype, \eff_1,\eff_2,\eff_2\cup\eff_1')$, where $\wrsin{\eff_1'}=\emptyset$  and $\eff_1 \perp \eff_2$. Then $(U,\textit{rtn}(\unitval)) \in T(\unittype, \eff_1^C,\eff_2,\eff_2\cup\eff_1')$.



\item\label{parizesound} 
 \textbf{Parallelization} If $(U_1,U_1') \in T(E_1,\eff_1,\eff^C \cup \eff_2^C,\eff^C \cup \eff_2^C \cup \eff_1') $ and $(U_2,U_2') \in T(E_2,\eff_2,\eff^C \cup \eff_1^C,\eff^C \cup \eff_1^C \cup \eff_2')$ and $\eff_1 \perp \eff_2$ and $\eff_1 \perp \eff$ and $\eff_2 \perp \eff$, then 
\[
\begin{array}{c}
(\mypar{U_1}{U_2},\{(t_1't_2'(v_1',v_2'))\mid 
(t_1',v_1') \in U_1', (t_2',v_2') \in U_2'\}^\dagger) \in \\
  T(E_1 \times E_2,\eff_1^C \cup \eff_2^C, \eff, \eff \cup \eff_1' \cup \eff_2')
\end{array}
\]
\end{itemize}
\end{theorem}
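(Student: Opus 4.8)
The plan is to exhibit a winning strategy for the game defining $T(E_1\times E_2,\eff_1^C\cup\eff_2^C,\eff,\eff\cup\eff_1'\cup\eff_2')$ by running the two winning strategies $S_1,S_2$ supplied by the hypotheses $(U_1,U_1')\in T(E_1,\eff_1,\eff^C\cup\eff_2^C,\eff^C\cup\eff_2^C\cup\eff_1')$ and $(U_2,U_2')\in T(E_2,\eff_2,\eff^C\cup\eff_1^C,\eff^C\cup\eff_1^C\cup\eff_2')$ \emph{in sequence}. By Theorem~\ref{main}(\ref{zwei}) it suffices to treat a pilot trace that is a genuine shuffle, so fix $\tup{t,(a,b)}\in U_1\semparallel U_2$ with $\textit{inter}(t_1,t_2,t)$, $\tup{t_1,a}\in U_1$, $\tup{t_2,b}\in U_2$, together with a start heap $\h_1'$ satisfying $\rloc{\h_1}{\h_1'}{\rds(\eff\cup\eff_1'\cup\eff_2')}$. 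We may further assume every gap $\k_j\to\h_{j+1}$ of $t$ is an $\gloc{}{}{\eff}$-step, since otherwise the opponent of the combined game cannot legally continue and we win outright.

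The crucial observation is that in the conclusion the internal effect is the fully chaotic $\eff_1^C\cup\eff_2^C$, so clause (iii) of the tiling $[\eff_1^C\cup\eff_2^C]$ is vacuous (all its writes are chaotic: $\wrs(\eff_1^C\cup\eff_2^C)=\concs{\eff_1^C\cup\eff_2^C}$). Hence each proponent move need only be a legal $\gloc{}{}{\eff_1^C\cup\eff_2^C}$-step on the response side, \emph{decoupled} from the challenge pair occupying that position; this is exactly what lets the response be the sequential concatenation $t_1't_2'$ even though the challenge interleaves $t_1$ and $t_2$. Concretely, I feed $S_1$ the pilot $t_1$, synthesising its $\eff^C\cup\eff_2^C$-environment moves from the combined opponent's $\gloc{}{}{\eff}$-moves together with the intervening $t_2$-pairs: since $\eff_1\perp\eff_2$ and $\eff_1\perp\eff$, a $t_2$-pair is an $\eff_2$-step and a gap is an $\eff$-step, so by Lemma~\ref{tillem}(\ref{tilmcon},\ref{tilmon}) each is a legal $\eff^C\cup\eff_2^C$-transition for the $U_1$-game; its start heap $\h_1'$ agrees with $\h_1$ on $\rds(\eff^C\cup\eff_2^C\cup\eff_1')=\rds(\eff_1')$ as required. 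Running $S_1$ to completion yields $t_1'\in U_1'$, a value $a'$ with $a\mathrel{E_1}a'$, and the global tile $[\eff^C\cup\eff_2^C\cup\eff_1'](\ldots)$. I then launch the $U_2$-game, choosing the initial heap of $t_2'$ to agree with $\h_1$ on $\rds(\eff_2')$ — maintained from $\h_1'$ because the $\eff_1$- and $\eff$-activity of the first phase is silent on the $\eff_2$-component by independence — and run $S_2$ symmetrically (using Lemma~\ref{tillem}(\ref{tilrd})) to obtain $t_2'\in U_2'$ with $b\mathrel{E_2}b'$. Concatenating gives a response trace $t_1't_2'$ with value $(a',b')$ lying in $\{(t_1't_2',(v_1',v_2'))\mid (t_1',v_1')\in U_1',(t_2',v_2')\in U_2'\}^\dagger$; each of its moves tiles as $[\eff_1^C\cup\eff_2^C]$, and the opponent's $[\eff]$-moves are replayed verbatim on both sides (they never touch $\eff_1$- or $\eff_2$-locations), discharging the per-step obligations.

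The main obstacle is the end-to-end tile $[\eff\cup\eff_1'\cup\eff_2'](\h_1,\h_1',\k_p,\k_p')$, i.e.\ recovering the \emph{clean committed} effects $\eff_1',\eff_2'$ from the chaotic internal ones used above. I would argue location-wise using the three $\perp$-hypotheses: on $\wrs(\eff)$-locations only the replayed gaps act, giving a clean $\gloc{}{}{\eff}$-step; on $\wrs(\eff_1)$-locations only $t_1,t_1'$ act and the $U_1$-game's global tile already certifies a $\gloc{}{}{\eff_1'}$-step there (the $\eff^C,\eff_2^C$ summands being silent on $\eff_1$ by independence), and symmetrically for $\wrs(\eff_2)$; stitching these by Lemma~\ref{tillem}(\ref{tiltrans}) gives challenge- and response-side membership in $\gloc{}{}{\eff\cup\eff_1'\cup\eff_2'}$, while clause (iii) for the non-chaotic locations of $\eff_1'$ (resp.\ $\eff_2'$) descends from the corresponding clause of the $U_1$-game's (resp.\ $U_2$-game's) global tile together with read-preservation. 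The delicate point is that the step relations need not commute, so this decomposition must be justified directly from the frame conditions of the abstract locations rather than from the relational characterisation; this is precisely where $\eff_1\perp\eff_2$, $\eff_1\perp\eff$ and $\eff_2\perp\eff$ are indispensable, and where the careful read-bookkeeping for the start of the second sub-game has to be made fully rigorous.
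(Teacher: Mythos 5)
Your Parallelization argument is, in substance, the paper's own proof of that item: the paper likewise runs the two supplied strategies \emph{in sequence}, feeding $S_1$ the pilot $t_1$ with the interleaved $t_2$-steps and the genuine $[\eff]$-gaps folded into single $[\eff^C\cup\eff_2^C]$-environment moves, then launching the $U_2$-game on $t_2$ after the next environment move, and finally arguing the end-to-end tile $[\eff\cup\eff_1'\cup\eff_2']$ location by location from the three $\perp$-hypotheses. Your explicit observation that clause (iii) of the tiling $[\eff_1^C\cup\eff_2^C]$ is vacuous (all writes chaotic), so that challenge and response positions decouple and a sequential response $t_1't_2'$ can legally answer an interleaved challenge, is exactly what the paper uses implicitly when it remarks that the side-game moves ``are legal moves in the global game as we have $[\eff_1^C\cup\eff_2^C]$ tiles for the player moves and $[\eff]$ tiles for the environment moves''; making it explicit is a genuine improvement in presentation. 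One point needs repair: you justify the read-agreement on $\rdsin{\eff_2'}$ at the start of the second sub-game ``by independence'', but the hypotheses $\eff_1\perp\eff_2$, $\eff_1\perp\eff$, $\eff_2\perp\eff$ say nothing about the primed (end-to-end) effects. The paper instead derives $\rdsin{\eff_2'}\cap(\wrsin{\eff}\cup\wrsin{\eff_1})=\emptyset$ from \emph{well-formedness} of the effect specification $\eff^C\cup\eff_1^C\cup\eff_2'$ (reads are disjoint from chaotic writes in any well-formed effect), combined with the world condition that a step on a location $\loc'$ is strictly silent, hence $\rrloc{}{}{\loc}$-preserving, on every other $\loc$; the same mechanism is what underwrites your final paragraph's location-wise stitching, where you correctly flag that the step relations need not commute.

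The genuine gap is coverage: the theorem has five items, and you prove only Parallelization. The other four do not follow from the chaotic-decoupling construction and each requires its own strategy in the paper's appendix: \textbf{Commuting} plays the two side games in swapped order against the concatenated pilot, using $\eff_1'\perp\eff_2'$ to restart the second game; \textbf{Duplicated} and \textbf{Pure} run the side game with \emph{stuttering} environment moves so that the resulting contiguous response block can be mumbled into a single step (Pure additionally consumes the arbitrary witness trace $((q_1,k_1)\dots(q_n,k_n),v)\in U$ from the hypothesis, and Duplicated replays the same pilot twice, using $\rdsin{\eff'}\cap\wrsin{\eff'}=\emptyset$ to re-establish the read precondition); \textbf{Dead} answers by pure stuttering and discharges the end-to-end tile from $\wrsin{\eff_1'}=\emptyset$. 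As it stands, then, your proposal is a correct (modulo the bookkeeping fix above) and faithfully paper-style proof of one bullet, but an incomplete proof of the stated theorem.
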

\begin{proof}
(Sketch) \textbf{Parallelization.} 

Assume w.l.o.g.\ that the pilot trace takes the form $(t,(v_1,v_2))$ where $\textit{inter}(t_1,t_2,t)$ and $(t_i,v_i)\in U_i$. Just as in the commuting case we set up two side games $U_i$ vs.\ $U_i'$ on $t_i,v_i$. Unlike, in that case, however, these games are running simultaneously and along with the main game. Moves by the environment in the main game are forwarded to the side game we are currently in, i.e., the one to which the current portion of $t$ being played on belongs. At each change of control, we switch between the two side games making last sequence of moves of the other game into a single environment move. It is here that the resilience against chaotic modification is needed. Once the play is over we then assert the claims about the end-to-end effect $\eff\cup\eff_1'\cup\eff_2'$ location by location using the definition of tiling. 
\end{proof}

\begin{theorem}
\label{eqthm}
Suppose that $\Gamma\vdash v\leq v':\tau$ and $\Gamma\vdash e\leq e': \myety{\tau}{\eff_1}{\eff_2}{\eff_3}$ and assume that for each 
 axiom $(v,v',\tau)$ it holds that $(v,v')\in\sem{\tau}^+$.
 Then $(\eta,\eta')\in\sem{\Gamma}^+$ (interpreting a context as a cartesian product) implies $(\semV{v}\eta,\semV{v'}\eta')\in\sem{\tau}^+$ and 
$(\semC{e}\eta,\semC{e'}\eta')\in T(\sem{\tau},\eff_1,\eff_2,\eff_3)^+$. 
\end{theorem}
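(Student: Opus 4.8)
The plan is to prove the statement by simultaneous rule induction on the two derivations $\Gamma\vdash v\leq v':\tau$ and $\Gamma\vdash e\leq e':\myety{\tau}{\eff_1}{\eff_2}{\eff_3}$, where $(-)^+$ denotes the transitive closure of a relation. The appearance of $(-)^+$ is forced by the transitivity rule of Figure~\ref{tycrule} together with the fact that axioms are only assumed to be $^+$-sound: the single-step relations $\sem{\tau}$ and $T(\sem{\tau},\eff_1,\eff_2,\eff_3)$ need not themselves be transitive, since composing two winning strategies for $(U,U')$ and $(U',U'')$ does not obviously yield one for $(U,U'')$ (the response heaps of the first game need not supply the read-agreement that starts the second), so we work throughout with their transitive closures.

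The bulk of the congruence cases reduce to Theorem~\ref{main} and the effect-dependent cases to Theorem~\ref{mainzwei}, modulo one uniform lifting argument. Each of those theorems shows that a semantic combinator ($\textit{rtn}$, $\textit{bnd}$, $\semparallel$, $\textit{at}$, application, pairing, projection) carries inputs related by the single-step relations to outputs related by the single-step relation. To promote this to the closures I would prove, once and for all, that each such combinator preserves $(-)^+$: given chains witnessing $(-)^+$ in each argument, one threads them one argument at a time, holding the remaining arguments at a fixed endpoint. Consecutive steps of the resulting chain are single-step related provided the held-fixed endpoints are diagonal-related, and this diagonal reflexivity is exactly what Theorem~\ref{tysound} supplies for the denotations of well-typed terms (it is also how the earlier observational-approximation theorem already exploited $T(\dots)^+$). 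As a preliminary I would establish reflexivity of $\sem{\Gamma}^+$ on the relevant environments, so that a pair $(\eta,\eta')\in\sem{\Gamma}^+$ can be split into its constituent chain, passed through the induction hypotheses, and recombined.

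With this lifting in hand the remaining cases are routine. The $(\mathrm{Sem})$ rules are immediate, since $\sem{e}=\sem{e'}$ forces $\semC{e}=\semC{e'}$ as denotations. Subeffecting follows from the inclusion $T(\sem{\tau},\eff_1,\eff_2,\eff_3)\subseteq T(\sem{\tau},\eff_1',\eff_2',\eff_3')$ when $\eff_1\subseteq\eff_1'$, $\eff_2'\subseteq\eff_2$ and $\eff_3\subseteq\eff_3'$, which comes from monotonicity of tiling (Lemma~\ref{tillem}(\ref{tilmon}))---the larger internal and end-to-end effects ease the proponent's task while the smaller rely constrains the opponent---and any inclusion is preserved by $(-)^+$. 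The Atom rule is Theorem~\ref{main}(\ref{acht}); the three axiom rules use the hypothesis that axioms are $^+$-sound together with Assumption~\ref{assi} for their reflexive instances; and the transitivity rule is discharged directly by transitivity of $(-)^+$, using the reflexivity of $\sem{\Gamma}^+$ just mentioned to supply a common intermediate environment.

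I expect the main obstacle to be precisely the interaction between $(-)^+$ and the binary congruence rules: verifying that every combinator preserves the transitive closure, and in particular that the intermediate points appearing in a closure chain are themselves diagonal-related so that the one-argument-at-a-time threading is legitimate. This is where Theorem~\ref{tysound} is indispensable, and where the asymmetric (left-down-closed, right-up-closed, non-symmetric) nature of value and computation specifications must be handled with care; by contrast, the effect-dependent rules plug in directly from Theorem~\ref{mainzwei} once the lifting machinery is in place.
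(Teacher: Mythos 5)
Your overall skeleton --- induction on derivations, Theorem~\ref{main} for the congruence rules, Theorem~\ref{mainzwei} for the effect-dependent transformations, and padding of $(-)^+$-chains with diagonal steps supplied by Theorem~\ref{tysound} --- matches the paper, but your plan omits the one idea the paper's proof actually turns on: the induction hypothesis must be \emph{strengthened} before the transitive closure is taken. You keep the pointwise statement (for all $(\eta,\eta')\in\sem{\Gamma}^+$, $(\semC{e}\eta,\semC{e'}\eta')\in T(\sem{\tau},\eff_1,\eff_2,\eff_3)^+$), and this is too weak in two places. First, the transitivity rule: you propose to supply ``a common intermediate environment'' via reflexivity of $\sem{\Gamma}^+$, but that reflexivity is not establishable. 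At function types, $(f,f')\in\sem{\tau_1}{\rightarrow}T(\sem{\tau_2},\eff_1,\eff_2,\eff_3)$ says nothing about $(f,f)$: two arguments each related to a common $x'$ need not be related to each other, so self-relatedness of an environment component does not follow from its membership in the field of $\sem{\Gamma}^+$, and the theorem quantifies over \emph{all} such pairs. Second, and more fundamentally, the pointwise induction hypothesis yields, for each pair of (extended) environments, a $T(\dots)^+$-chain whose \emph{length may depend on that pair}. For the let, application and fixpoint cases you must feed Theorem~\ref{main}(\ref{fuenf}) a chain of functions related stepwise in $E_1{\rightarrow}T(E_2,\eff_1,\eff_2,\eff_3)$, and varying-length pointwise chains cannot be assembled into such a uniform function chain; your ``combinators preserve $(-)^+$'' lemma therefore cannot even be applied in the higher-order cases.

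The paper's fix is precisely the ingredient you are missing: it defines relations $\sem{\Gamma\vdash\tau}$ and $\sem{\Gamma\vdash\tau\&(\eff_1,\eff_2,\eff_3)}$ on \emph{open denotations} (functions from environments) and proves by induction that $(\sem{e},\sem{e'})$ lies in the transitive closure of these, so that the chains are uniform in the environment. Transitivity then becomes literal concatenation of chains, with no environment diagonal needed; the binary congruences are handled as in the paper's representative let case, by padding the two chains to a common length $n_1=n_2$ using the diagonal instances $(\sem{e_i},\sem{e_i})$ from Theorem~\ref{tysound} and applying Theorem~\ref{main}(\ref{fuenf}) $n_1$ times --- simultaneous stepping that subsumes your one-argument-at-a-time threading. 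A further small correction: Dead and Pure do not ``plug in directly.'' As the paper notes, for pure lambda hoist the Pure case is invoked per pilot trace: the trace determines the value $v$ bound to $x$, and only then is $(\textit{rtn}(v),U')$ related for the subsequent occurrences of $e_1$ on the right-hand side --- an extra, trace-dependent step that your sketch elides.
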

\begin{proof}[Sketch]
In essence the proof is by induction on derivations of inequalities. However, we need to slightly strengthen the induction hypothesis as follows: 

Define 
\[
\begin{array}{l}

\sem{\Gamma\vdash \tau}=\{(f,f')\mid\forall (\eta,\eta')\in\sem{\Gamma}.(f(\eta),f'(\eta'))\in\sem{\tau}\}\\
\sem{\Gamma\vdash \tau\&(\eff_1,\eff_2,\eff_3)}=\{(f,f')\mid\forall (\eta,\eta')\in\sem{\Gamma}.\\ \qquad \qquad \qquad \qquad \quad (f(\eta),f'(\eta'))\in T(\sem{\tau},\eff_1,\eff_2,\eff_3)\}
\end{array}
\]
We now show by induction on derivations that 
$\Gamma\vdash v\leq v':\tau$ implies $(\sem{v},\sem{v'})\in \sem{\Gamma\vdash \tau}^+$ and that  $\Gamma\vdash e\leq e': \myety{\tau}{\eff_1}{\eff_2}{\eff_3}$  implies 
$(\sem{e},\sem{e'})\in \sem{\Gamma\vdash \tau\&(\eff_1,\eff_2,\eff_3)}^+$. 

The various cases now follow from earlier results in a straightforward manner. Namely, we use Theorem~\ref{main} for the congruence rules and 
Theorem~\ref{mainzwei} for the effect-dependent transformations. 

As a representative case we show the case where $e\equiv\letin{x}{e_1}{e_2}$ and $e'\equiv \letin{x}{e_1'}{e_2'}$. Inductively, we know 
$(\sem{e_1},\sem{e_1'})\in\sem{\Gamma\vdash \tau_1\&(\eff_1,\eff_2,\eff_3)}^{n_1}$ and $(\sem{e_1},\sem{e_1'})\in\sem{\Gamma,x{:}\tau_1\vdash \tau\&(\eff_1,\eff_2,\eff_3)}^{n_2}$ for some $n_1,n_2>0$. By Theorem~\ref{tysound}, we also have
$(\sem{e_1},\sem{e_1})\in\sem{\Gamma\vdash \tau_1\&(\eff_1,\eff_2,\eff_3)}$ and analogous statements for $e_1',e_2,e_2'$. We can, therefore, assume, w.l.o.g.\ that $n_1=n_2$ and then use Theorem~\ref{main} (\ref{fuenf}) repeatedly ($n_1$ times) so as to conclude $(\sem{e},\sem{e})\in\sem{\Gamma\vdash \tau\&(\eff_1,\eff_2,\eff_3)}^{n_1}$. 

The rules for dead code and pure lambda hoist rely on 
 the cases ``Dead'' and ``Pure'' of Thm~\ref{mainzwei} in a slightly indirect way. We sketch the argument for pure lambda hoist. The pilot trace begins with a trace belonging to $e_1$ and yielding a value $v$ for $x$. We can then invoke case ``Pure'' on subsequent occurrences of $e_1$ in the right hand side. 
\end{proof}

\begin{theorem}
Suppose that $\vdash v:\tau$ and $\vdash v':\tau$ and that $(\sem{v},\sem{v'})\in\sem{\tau}^+$ where $(-)^+$ denotes transitive closure. 
Then $\vdash v\leq_{\mathit{obs}} v':\tau$. 
\end{theorem}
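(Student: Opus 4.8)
The plan is to strengthen the observational-soundness result already proved in Section~\ref{obseq} for $\sem{\tau}$ to its transitive closure $\sem{\tau}^{+}$. The tempting shortcut---invoke transitivity of $\leq_{\mathit{obs}}$ at the syntactic level---fails, because $(\sem{v},\sem{v'})\in\sem{\tau}^{+}$ only supplies a finite chain of \emph{semantic} values $d_{0}=\sem{v},d_{1},\dots,d_{k}=\sem{v'}$ in $\Values$ with $(d_{i},d_{i+1})\in\sem{\tau}$, and the intermediate $d_{i}$ need not be denotations of closed terms. I would therefore replay the observation game of Definition~\ref{defn:crucial} directly along this chain, entirely in the model.

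Fix any observation $f$ with $\vdash f:\tau\myeffto{\eff_{1}}{\eff}{\eff_{3}}\inttype$. By Theorem~\ref{tysound} we have $(\sem{f},\sem{f})\in\sem{\tau\myeffto{\eff_{1}}{\eff}{\eff_{3}}\inttype}=\sem{\tau}\rightarrow T(\sem{\inttype},\eff_{1},\eff,\eff_{3})$. Feeding each edge $(d_{i},d_{i+1})\in\sem{\tau}$ of the chain into this relatedness yields $(\sem{f}(d_{i}),\sem{f}(d_{i+1}))\in T(\sem{\inttype},\eff_{1},\eff,\eff_{3})$ for every $i$, the two endpoints being $\sem{f\,v}=\sem{f}(d_{0})$ and $\sem{f\,v'}=\sem{f}(d_{k})$.

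Now assume $((\hinit,\k),n)\in\sem{f\,v}$ with $n\in\mathbb{Z}$. I would prove by induction on $i$ that there is a heap $\heapp_{i}$ with $((\hinit,\heapp_{i}),n)\in\sem{f}(d_{i})$; the base case $i=0$ takes $\heapp_{0}=\k$. For the step I play the game witnessing $(\sem{f}(d_{i}),\sem{f}(d_{i+1}))$: the opponent offers the (mumbled) one-step pilot trace $(\hinit,\heapp_{i})$---legal since $\hinit\models\w$ by Assumption~\ref{assi}---together with the start heap $\hinit$, which satisfies $\rloc{\hinit}{\hinit}{\rds(\eff_{3})}$ because $\hinit\models\w$ makes $\rloc{\hinit}{\hinit}{\loc}$ hold for every $\loc\in\w\supseteq\rds(\eff_{3})$. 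The proponent's winning response is a trace $((\hinit,\heapp_{i+1}),n')\in\sem{f}(d_{i+1})$ with $(n,n')\in\sem{\inttype}$, hence $n'=n$; crucially it again begins at $\hinit$, re-establishing the game's start condition for the next edge. At $i=k$ this delivers a heap $\heapp_{k}$ with $((\hinit,\heapp_{k}),n)\in\sem{f\,v'}$, which is exactly what $\vdash v\leq_{\mathit{obs}}v':\tau$ requires.

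The one delicate point---the main obstacle---is that $T(\sem{\inttype},\dots)$ is the admissible closure $\textit{Adm}(T_{0})$, so strictly speaking the game lives at the level of $T_{0}$ rather than of $T$. This causes no trouble here precisely because the observed result $n$ is a compact (flat) integer: expressing $(\sem{f}(d_{i}),\sem{f}(d_{i+1}))$ as the supremum of a chain drawn from $T_{0}$, the terminating pair $((\hinit,\heapp_{i}),n)$ already belongs to one member of that approximating chain, and the $T_{0}$-game applied to that member returns a response lying inside $\sem{f}(d_{i+1})$. This is the same compactness observation implicitly used in the proof of the non-transitive soundness theorem of Section~\ref{obseq}; combined with the fact that $\hinit\models\w$ is preserved throughout (Lemma~\ref{tillem}(\ref{tilwf})), so that the reflexive read condition is available at every step, it lets the finite chain be traversed and the result follows.
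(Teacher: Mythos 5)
Your proof is correct and takes essentially the same route as the paper's own: apply Theorem~\ref{tysound} to the observation $f$, transport the chain witnessing $(\sem{v},\sem{v'})\in\sem{\tau}^{+}$ to $(\sem{f\ v},\sem{f\ v'})\in T(\sem{\inttype},\eff_1,\eff_2,\eff_3)^{+}$, and iterate the one-step game from $\hinit$, using $\hinit\models\w$ (Assumption~\ref{assi}) to secure the reflexive read condition $\rloc{\hinit}{\hinit}{\rdsin{\eff_3}}$ at every stage. You merely spell out two points the paper leaves implicit---the induction along the finite chain and the descent from $T=\textit{Adm}(T_0)$ to a $T_0$-member, justified by flatness of the integer result---and both are handled correctly.
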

\begin{proof}
If $\vdash f:\tau_1\myeffto{\eff_1}{\eff_2}{\eff_3}\inttype$ then by Thm~\ref{tysound} we have $(\sem{f},\sem{f})\in \sem{\tau\myeffto{\eff_1}{\eff_2}{\eff_3} \inttype}$, so 
$(\sem{f\ v},\sem{f\ v'})\in T(\sem{\inttype},\eff_1,\eff_2,\eff_3)^+$. 

Let $((\hinit,\k),v)\in \sem{f\ v}$. We have $\hinit\models\w$ and
thus in particular
$\rloc{\hinit}{\hinit}{{\rdsin{\eff_3}\cup\rdsin{\eff_1}}}$. There must
therefore exist a matching heap $\k'$ and a value $v'$ such that
$((\hinit,\k'),v')\in \sem{f\ v'}$ and $v=v'\in\mathbb{Z}$.
\end{proof}


 
 
 

We now return to the examples that we discussed in Section~\ref{examples} and demonstrate how to prove using our denotational semantics the properties that have been discussed informally.

\paragraph{Overlapping References}
With this example, we illustrate the parallelization rule. In
particular, the functions declared in Section~\ref{examples} have the
following type, where $\eff$ does not read nor write $\cloc$:

\begin{small}
\(
\begin{array}{l}
\mathsf{readFst} : \unittype \myeffto{\emptyset}
            {\eff^C, \cEff{\locHi(\cloc)}}
            {\eff^C, \cEff{\locHi(\cloc)}, \rEff{\locLo(\cloc)}} \inttype\\
\mathsf{writeFst} : \inttype \myeffto{\wEff{\locLo(\cloc)}}
            {\eff^C, \cEff{\locHi(\cloc)}}
            {\eff^C, \cEff{\locHi(\cloc)}, \wEff{\locLo(\cloc)}} \unittype 
\end{array}\)
\end{small}

The obvious and analogous typings for $\mathsf{readSnd}$ and
$\mathsf{writeSnd}$ are elided.  We justify this typing semantically
as described in Theorem~\ref{main}. To illustrate how this is done,
consider the function $(\mathsf{writeSnd}~17)$. We show how the game
is played against itself using the typing shown above. We start with a
``pilot trace'', say:
\(
([2|3],[2|3]),([2|17],[2|17]),(())
\)

\noindent 
where $[x|y]$ denotes a store with $X=p(x,y)$ and other components left out for simplicity.
The first step corresponds to our reading of $X$ and in the second step
-- since there was no environment intervention -- we write $17$ into the first
component.

We now start to play: Say that we start at the heap $[13|12]$. We answer $[13|12]$. If the environment does not change $X$, then we write $17$ to its first component resulting in the following trace, which is possible for $\mathsf{writeFst}(17)$.

\(
([13|12],[13|12]),([13|12],[17|12]),  (())
\)

\noindent
If, however, the environment plays $[18|21]$ (a modification of both
components of X has occurred), then we answer $[17|21]$. Again,

\(
([13|12],[13|12]),([18|21],[17|21]), (())
\)

\noindent
is a possible trace for $\mathsf{writeFst}(17)$. It is easy to check that there is a strategy that justifies the typing given above.

\noindent
Now, consider a program, $e_1$, that only calls $\mathsf{readFst},\mathsf{writeFst}$, and another program, $e_2$, that only calls $\mathsf{readSnd},\mathsf{writeSnd}$. Since the former functions have disjoint effects to the latter ones, $e_1$ and $e_2$ will have effect specifications, respectively, of the form 
$(\eff_1, \eff^C \cup \eff_2^C, \eff^C \cup \eff_2^C \cup \eff_1)$ and $(\eff_2, \eff^C \cup \eff_1^C, \eff^C \cup \eff_1^C \cup \eff_2)$, 
where $\eff_1 \cap \eff_2 = \eff_1 \cap \eff = \eff_2 \cap \eff = \emptyset$. Thus we can use the parallelization rule shown in Figure~\ref{eqth} to conclude that 
the behavior of $\mypar{e_1}{e_2}$ is the same as executing these programs sequentially, although they read and write to the same concrete location.

\paragraph{Loop Parallelization}
We show that the function $\mathsf{map}$ is equivalent to $\mathsf{map2Par}$. It is easy to see that the function $\mathsf{map}$ is equivalent to the program $\mathsf{map2Seq}$, which is the program obtained from $\mathsf{map2Par}$ by replacing the underlined parallel operator `$\|$' in $\mathsf{map2Par}$ by a sequential operator `$;$'. The proof goes simply by unfolding $\mathsf{map}$. 

We then proceed by showing $\mathsf{map2Seq}$ and $\mathsf{map2Par}$ are equivalent using our equations and the abstract locations $\locListOdd(\cloc)$ and $\locListOdd(\cloc)$ defined above. The piece of code that applies $f$ first, namely $e_1 = n.ele := f(n.ele)$, has global effects $\eff_1' = \rEff{\locListOdd(\cloc)},\wEff{\locListOdd(\cloc)}$, while the second application, namely, $e_2 = n.next.ele := f(n.next.ele)$, has effects $\eff_2' = \rEff{\locListEven(\cloc)},\wEff{\locListEven(\cloc)}$. Notice that $\eff_1'\perp\eff_2'$. Therefore, provided that the environment does not read nor modify the list, we can apply the parallelization equation to justify running $e_1$ and $e_2$ parallel is equivalent to running them in sequence.

\paragraph{Michael-Scott Queue}
We now show that the $\mathsf{enqueue}$ and $\mathsf{dequeue}$ functions described in Section~\ref{examples} for the Michael-Scott Queue have the same behavior as their atomic versions. 
We only show the case for $\mathsf{dequeue}$, as the case for $\mathsf{enqueue}$ is similar. More precisely, we now justify the axiom
\[
(\mathsf{dequeue},\myatomic{\mathsf{dequeue}},
\unittype\myeffto{\textit{MSQ}}{\textit{MSQ}}{\textit{MSQ}}\inttype)
\]
\noindent
where $\textit{MSQ}= \{\rEff{\locMSQ(\cloc)},
\wEff{\locMSQ(\cloc)}\}$.  That is, they approximate each other
 at a type where the environment is
allowed to operate on the queue as well. We also note that the
converse of the axiom is obvious by stuttering and mumbling.
After consuming a dummy argument $()$ let the resulting pilot trace be
$(\h_1, \k_1)\ldots(\h_i,\k_i)\ldots(\h_n,\k_n)a$ and $\h_1'$ be the
start heap to match. We can now assume that the passages from $\k_i$
to $\h_{i+1}$ are according to the protocol,
i.e.\ $\gloc{\k_i}{\h_{i+1}}{\locMSQ(\cloc)}$. Namely, should this not
be the case we are free to make arbitrary moves and still win the
game by default of the environment player. Therefore, there must exist
$i$ such that in the move $(\h_i, \k_i)$ the element $a$ is dequeued
and $\h_j=\k_j$ holds for $j\neq i$.  We can thus match this trace by
a trace in the semantics of $\myatomic{\mathsf{dequeue~()}}$ by
stuttering until $i$:

\(
(\h_1', \h_1')\ldots(\h_i',\ldots
\)

\noindent
where $\h_j$ and $\h_j'$ have the same content, but not necessarily the exact same layout. Given the environment's allowed effects it is then clear that also $\h_i$ and $\h_i'$ have the same content, but not necessarily the same as $\h_1$ and $\h_1'$ because in the meantime other operations on the queue might have succeeded. We then dequeue the corresponding element from $\h_i'$ leading to $\k_i'$ and continue by stuttering. 

\(
\ldots,\k_i')(\h_{i+1}',\h_{i+1}')\ldots (\h_n', \h_n')a'
\)

\noindent
It is now clear that this is a matching trace and that $a=a'$ so we are done. 

Notice that the congruence rules now allow us to deduce the equivalence of 
$\textit{op}_1\parallel \cdots \parallel \textit{op}_n$ and 
$\myatomic{\textit{op}_1}\parallel \cdots \parallel \myatomic{\textit{op}_n}$ 
for $\textit{op}_i$ being enqueues or dequeues, which effectively amounts to linearizability. 
\section{Discussion}
We have shown how a simple effect system for stateful computation and
its relational semantics, combined with the notion of abstract
locations, scales to a concurrent setting. The resulting type system
provides a natural and useful degree of control over the otherwise
anarchic possibilities for interference in shared variable languages,
as demonstrated by the fact that we can delineate and prove the
conditions for non-trivial contextual equivalences, including
fine-grained data structures.

The primary goal of this line of work is not so much to find reasoning
principles that support the most subtle equivalence arguments for
particular programs, but rather to capture more generic properties of
modules, expressed in terms of abstract locations and relatively
simple effect annotations, that can be exploited by clients (including
optimizing compilers) in external reasoning and transformations. But
there are of course, particularly in view of the fact that we allow
deeper reasoning to be used to establish that expressions can be
assigned particular effect-refined types, very close connections with
other work on richer program logics and models. 

Rely-guarantee reasoning is widely used in program logics for concurrency, including
relational ones \cite{liangfengpopl12}, whilst our abstract locations
are very like the \emph{islands} of Ahmed et al
\cite{DBLP:conf/popl/AhmedDR09}. Recent work of Turon et al
\cite{dreyer} on relational models for fine-grained concurrency
introduces richer abstractions, notably state transition systems
expressing inter-thread protocols that can involve ownership
transfer. These certainly allow the verification of more complex
fine-grained algorithms than can be dealt with in our setting, and it
would be natural to try defining an effect semantics over such a
model. Indeed, one might reasonably hope that effects could provide
something of a `simplifying lens', with refined types capturing things
that would otherwise be extra model structure or more complex
invariants, such that the combination does not lead to further
complexity. The use of Brookes's trace model (also used
by, for example, Turon and Wand \cite{DBLP:conf/popl/TuronW11}) already seems to
bring some simplification compared to transition systems or
resumptions. 

Birkedal et al \cite{birkedal} have also studied relational semantics
for effects in a concurrent language. The language considered there has dynamic allocation via regions and higher-order store, neither of which we have here. On the other hand, their invariants are based on simply-typed concrete locations and thus do not allow to capture effects at the level of whole datastructures as abstract locations do. As a result, the examples in \cite{birkedal} are of a simpler nature than ours. Furthermore, we offer a subtler parallelization rule, distinguish transient and end-to-end effects, and validate other effect-dependent equivalences like commuting, lambda
hoist, deadcode and duplication. Our  use of denotational methods and in
particular the extension of Brookes' trace semantics to higher-order
functions does result in a rather simpler and more intuitive 
definition of the logical relation by
comparison with \cite{birkedal}. While some of the complications are  due to the
dynamic allocation and typed locations, others like the explicit step
counting, the need for effect-instrumented operational semantics, and the separation of
branches in the definition of safety are not. We thus see our work also as a proof-of-concept for
denotational semantics in the realm of higher-order concurrent
programming.

The `RGSim' relation proposed by Liang \etal\ for proving concurrent
refinements under contextual assumptions also has many similarities
with our logical relation~\cite[Def.4]{liangfengpopl12}. The focus of
that work is on proving particular equivalences and refinements,
whereas we encapsulate general patterns of behaviour in a refined type
system and can show the soundness of generic program transformations
relying only on effect types (which combine smoothly with hand proofs
of particular equivalences).

There are many directions for further work. Most importantly, we would
like to add dynamic allocation of abstract locations following
\cite{DBLP:dblp_conf/popl/Benton0N14}. In addition to relieving us
from having to set up all data structures in the initial heap this
would, as we believe, also allow us to model and reason about
lock-based protocols in an elegant way. Other possible extension
include higher-order store and weak concurrency models.

\newpage

\bibliography{bib}

\newpage 

 
\appendix

\section{Proof of Theorem~\ref{main}}

\begin{proof}
In each case, using Corollary~\ref{funad} and Lemma~\ref{func} (for
case \ref{fuenf}), we can in fact assume w.l.o.g.\ that the assumed
pairs are in $T_0(\dots)$ rather than $T(\dots)$. 

\medskip 

 Ad~\ref{eins}.  Let $(t,a)\in q_i(U)$, i.e.\ $a=p_i(a_0)$ where
 $(t,a_0)\in U$.  By down-closure ([Down]) we also have $(t,a)\in
 U$. We can now play the strategy guaranteed by the assumption
 $(U,U')\in T(E,\eff_1,\eff_2,\eff_3)$ which will yield (depending
 on the opponent's moves) a trace $t'$ and a value $a'$ such that
 $(t',a')\in U'$ and $(p_i(a),a')\in E$. Now, since $E$ is a
 specification we get $(p_i(a),p_i(a'))\in E$ noting that $p_i$ is
 idempotent. So, we modify the strategy so as to return $p_i(a')$
 rather than $a'$ and thus obtain a winning strategy asserting the
 desired conclusion.

Ad~\ref{einsa} This is an easy consequence from \ref{eins}. 

Ad~\ref{zwei} Pick $(U,U')\in
T_0(E,\eff_1,\eff_2,\eff_3)$. Since $T(E,\eff_1,\eff_2,\eff_3)$ is
closed under suprema it suffices to show that
$(q_j(U^\dagger),q_j({U'}^\dagger))\in T(E,\eff_1,\eff_2,\eff_3)$ for
  each $j$. Fix such $j$ and pick $(t,p_j(a))\in q_j(U^\dagger)$, thus
  $(t,a)\in U^\dagger$.

By induction on the closure process we can assume w.l.o.g.\ that $(t,a)$
arises from $(t_1,a)\in U$ by a single mumbling or stuttering step
or that $(t,a_1)\in U$ for some $a_1\geq a$ or else that $(t,a_i)\in
U$ where $\sup_i a_i=a$. 

In the former two cases fix a strategy for the original element of
$U$. We will use this strategy to build a new one demonstrating that $(t,a)\in U'$, hence $(t,p_j(a))\in q_j(U')$ as required.

If $(t,a)$ arises by stuttering, so $t=u(\h,\h)v$ and $t_1=uv$ we play
the strategy until $u$ is worked off. If the opponent then produces a
heap $\heap'$ to match $\heap$ we answer $\heap'$.

\smallskip 

Now $[\eff_1](\h,\h',\h,\h')$ is always true (Lemma~\ref{tillem}) so this is
a legal move. Thereafter, we continue just as in the original
strategy. In the special case where $v$ is empty, we must also show
that $[\eff_3](\h_1,\h_1',\h,\h')$ knowing $[\eff_3](\h_1,\h_1',\k_n,\k_n')$ 
where $u=(\h_1,\k_1)\dots(\h_n,\k_n)$ and $u'=(\h_1',\k_1')\dots(\h_n',\k_n')$ is the matching trace. We have $[\eff_2](\k_n,\k_n',\h,\h')$ for otherwise opponent's playing $\h'$ would have been illegal. Since, by assumption $\eff_2\subseteq\eff_3$, we can conclude $[\eff_3](\k_n,\k_n',\h,\h')$ and then $[\eff_3](\h_1,\h_1',\h,\h')$ by Lemma~\ref{tillem}(\ref{tilmon}\&\ref{tiltrans}). 

\smallskip 

If $(t,a)$ arises by mumbling then we must have $t=u(\h_1,\h_3)v$ 
and $t_1 = u(\h_1,\h_2)(\h_2,\h_3)v$. We play until the
strategy has produced a match $\h_2'$ for $\h_2$. So far, the play has
produced a trace $u'$ matching $u$, and a state $\h_1'$ so that
$[\eff_1](\h_1,\h_1',\h_2,\h_2')$. Now, we can ask what the original
strategy would produce if we gave it (temporarily assuming opponent's
role) the state $\h_2'$ as a match for $\h_2$. Note that this is legal
because $[\eff_2](\h_2,\h_2',\h_2,\h_2')$. The strategy will then
produce $\h_3'$ such that $[\eff_1](\h_2,\h_2',\h_3,\h_3')$ and our
answer in the play on the new trace against the challenge $\h_1'$ will
be this very $\h_3'$. Indeed, by composing tiles (Lemma~\ref{tillem}) we
have $[\eff_1](\h_1,\h_1',\h_3,\h_3')$ as required. Thereafter, the
play continues according to the original strategy.

\smallskip 

For down-closure, we play the strategy against $(t,a_1)$ yielding a
match $(t',a_1')\in U'$ where $a_1 E a_1'$. That same strategy also
wins against $(t,a)$ because $a E a_1'$ since $E$ is a value
specification.
\smallskip 

For closure under [Sup], finally, pick $i$ so that $a_i\geq p_j(a)$ recalling that $a=\sup_i a_i$. Since we have a winning strategy for $(t,a_i)$, we also have one (by down-closure which was already proved) for $(t,p_j(a))$ as required. 

\medskip

Ad \ref{drei}. Suppose $aEa'$. By \ref{zwei} which we have just proved we only need to match elements of the form $((\heap,\heap)a)$. The opponent plays $\h'$ where $\rloc{\h}{\h'}{\rdsin{\eff_3}}$ and we answer with $\h'$ itself and $a'$. This is always a legal move (Lemma~\ref{tillem}) and  $aEa'$, so we win the game. 

\medskip 
 
Ad \ref{vier}.  Again, we only need to match traces of the form $((\h,\h_1),a)$ where $c(\h)=(\h_1,a)$. In this case, suppose that the opponent plays $\h'$ where $\rloc{\h}{\h'}{\eff_3}$. The assumption gives $(\h_1',a')$ such that $c'(\h')=(\h_1',a')$ and $[\eff_1](\h,\h',\h_1,\h_1')$ and $aEa'$. We thus play $\h_1'$ and $a'$ and indeed $[\eff_{1/3}](\h,\h',\h_1,\h_1')$ and $aEa'$ hold so this is a winning move. 

Ad \ref{fuenf}. Suppose 
$(f,f')\in E_1{\rightarrow} T(E_2,\eff_1,\eff_2,\eff_3)$ and $(U,U')\in T(E_1,\eff_1,\eff_2,\eff_3)$. Suppose that $(uv,b)\in ap(\textit{f},{U})$ where $(u,a)\in U$ and $(v,b)$ in $f(a)$ (note that we can ignore the $\dagger$-closure). We need to produce a trace $(u'v',b') \in ap(\textit{f'},{U'})$ such that $(u',a')\in U'$ and $(v',b')$ in $f'(a')$ and $b E_2 b'$. Assume that:
\[
  u = (\h_1,\k_1) \cdots (\h_n,\k_n) \textrm{ and } v = (\h_{n+1},\k_{n+1}) \cdots (\h_{n+m},\k_{n+m})
\]
We are given a heap $\h_1'$, such that $\rloc{\h_1}{\h_1'}{\rds(\eff_3)}$. We can use the strategy $S_1$ from $(U,U')\in T(E_1,\eff_1,\eff_2,\eff_3)$ for $(u,a)$. We play according to $S_1$ to work off the $u$-part. This results in a matching trace $u' \in U'$:
\[
    u' = (\h_1',\k_1') \cdots (\h_n',\k_n')
\]
where $[\eff_3](\h_1,\h_1',\k_n,\k_n')$ and $(a,a')\in E_2$. 
We get $(f(a),f(a'))\in T(E_2,\eff_1,\eff_2,\eff_3)$. Now, we are given a heap $\h_{n+1}'$ that is an environment move forming the tile $[\eff_2](\k_n,\k_n',\h_{n+1}\h_{n+1}')$. From the fact that $\eff_2 \subseteq \eff_3$ and Lemma~\ref{tillem}(\ref{tilrd}) we can conclude $\rloc{\h_{n+1}}{\h_{n+1}'}{\rds(\eff_3)}$. 


Thus we can continue our play by using the strategy $S_2$ from
$(f(a),f(a'))\in T(E_2,\eff_1,\eff_2,\eff_3)$ which yields a
continuation $v'$ of our trace and a final answer $b'$. It is then
clear that $(u'v',b')\in \textit{bnd}(f',U')$ so this combination of
strategies does indeed win.


\medskip 

Ad \ref{sechs}. Suppose that $(U_1,U_1')\in
T(E_1,\eff_1,\eff\cup\eff_2,\eff\cup\eff_2\cup\eff')$ and $(U_2,U_2')\in T(E_2,\eff_2,\eff\cup\eff_1,\eff\cup\eff_1\cup\eff')$ and let $(t,(a,b))\in U_1\semparallel U_2$,
thus $\textit{inter}(t_1,t_2,t)$ (ignoring $\dagger$ by item
\ref{zwei}) where $(t_1,a)\in U_1$ and $(t_2,b)\in U_2$. Let $S_1$,
$S_2$ be corresponding winning strategies.  The idea is to use $S_1$
when we are in $t_1$ and to use $S_2$ when we are in $t_2$. Supposing
that $t$ starts with a $t_1$ fragment we begin by playing according to $S_1$. Let $t$ be of the form:
\[
\begin{array}{ll}
  t = & (\h_1,\k_1) \cdots (\h_n,\k_n) (\h_{n+1},\k_{n+1}) \cdots (\h_{n+m},\k_{n+m})\\ & (\h_{n+m+1},\k_{n+m+1}) \cdots (\h_{n+m+k},\k_{n+m+k})  \cdots (\h_p,\k_p)  
\end{array}
\]
composed of pieces of the traces $t_1$ and $t_2$. Assume w.l.o.g. that the first piece $(\h_1,\k_1) \cdots (\h_n,\k_n)$ is a part of $t_1$. We are given a initial heap $\h_1'$ such that $\rloc{\h}{\h'}{\rds(\eff\cup\eff'\cup(\eff_1\sqcup\eff_2))}$. Since $\rds(\eff_1\sqcup\eff_2)=\rds(\eff_1)\cup\rds(\eff_2)$, we can apply strategy $S_1$ to guide us through the first part of the game, obtaining:
\[
  (\h_1',\k_1') \cdots (\h_n',\k_n')
\]
Moreover, we have an environment move which forms the tile $[\eff](\k_n,\k_n',\h_{n+1},\h_{n'+1})$. Thus, we have the tile $[\eff \cup \eff_1](\h_1,\h_1',\h_{n+1},\h_{n+1}')$ which can be seen as an environment move for $t_2$. Therefore, we can use strategy $S_2$ for the $U'$ and continue the game, obtaining the trace piece:
\[
  (\h_{n+1}',\k_{n+1}') \cdots (\h_{n+m}',\k_{n+m}')
\]
Now, we can return to the $S_1$ game as the trace above is seen as an environment move for $U$. Alternating these strategies, we get a trace $t$ which is in $(U \semparallel U')$. Let $(a',b')$ be the final values reached at the end. It is clear that $[\eff\cup\eff'\cup\eff_1\cup\eff_2](\h,\h',\h_p,\h_p')$ and also $aE_1a'$ and $bE_2b'$. 

It remains to assert the stronger statement  $[\eff\cup\eff'\cup(\eff_1\sqcup\eff_2)](\h,\h',\h_p,\h_p')$. To see this suppose that $\wEff\loc\in\eff_1\setminus \eff_2\setminus\eff\setminus\eff'$. 
Since the entire game can be viewed as an instance of the game $U_1$ vs $U_1'$ with interventions by $U_2$ vs.\ $U_2'$ regarded as environment interactions we have $[\eff\cup\eff_2\cup\eff'](\h,\h',\h_p,\h_p')$ so that in fact 
$\rrloc{\h}{\h_p}{\loc}$ and $\rrloc{\h'}{\h_p'}{\loc}$. The case of $\cEff{\loc}$ and $\eff_1$,$\eff_2$ interchanged is analogous.

Ad \ref{acht}. This is direct from the definition of atomic and appealing on the fact that $(U,U') \in T(E, \eff_1,\emptyset, \eff_3)$.
\end{proof}

\section{Proof of Theorem~\ref{mainzwei}}

\begin{proof}
\textbf{Commuting.}
By Theorem~\ref{main}(\ref{zwei}) we can assume our pilot trace $t$ to be 
of the form:
\[
  (\h_1,\k_1) (\h_2,\k_2) \cdots (\h_n,\k_n)\quad (\h_{n+1},\k_{n+1}) \cdots (\h_{n+m},\k_{n+m})\ (a,b)
\]
where
\[
\begin{array}{l}
  t_1  = (\h_1,\k_1) (\h_2,\k_2) \cdots (\h_n,\k_n)\ v_1 \in U_1\\
  t_2 = (\h_{n+1},\k_{n+1}) \cdots (\h_{n+m},\k_{n+m})\ v_2 \in U_2
\end{array}  
\]
We make similar use of Theorem~\ref{main}(\ref{zwei}) in the subsequent cases without explicit mention. 

We are also given a heap $\h_1'$ such that $\rloc{\h_1}{\h_1'}{\rds(\eff \cup \eff_1' \cup \eff_2')}$. Because $\eff_1' \perp \eff_2'$, $\h_1$ and $\h_{n+1}$ agree on the reads of $\eff_2'$. Thus we can start a game $U_2$ vs.\ $U_2'$ using $\h_1'$ and $t_2$. We forward all environment's moves from the main game to the side game and use the responses from the side game to answer in the main game. Suppose that the side game leads to the valid $U_2$-trace 
\[
  (\h_1',\k_1') (\h_2',\k_2') \cdots (\h_m',\k_m') \ v_2'
\]
where $v_2 E_2 v_2'$ and (1) $[\eff^C \cup \eff_2'](\h_{n+1},\h_1',\k_{n+m},\k_m')$. 
Notice that in the global game these are legal responses as $[\eff_1^C \cup \eff_2^C](\h_i,\h_i',\k_{i},\k_{i}')$ for $1 \leq i \leq m$.

We now have an environment move $[\eff](\k_m,\k_{m}',\h_{m+1},\h_{m+1}')$. Since $\eff_1' \perp \eff$ and $\eff_2' \perp \eff_1'$, the heaps $\h_{1}'$ and $\h_{m+1}'$ agree in the reads of $\eff_1'$. Therefore, we can run a game $U_1$ vs.\ $U_1'$  using $\h_{m+1}'$ and $t_1$, obtaining the trace: 
\[
  (\h_{m+1}',\k_{m+1}') (\h_{m+2}',\k_{m+2}') \cdots (\h_{m+n}',\k_{m+n}') \ v_1'
\]
where $v_1 E_1 v_1'$ and (2) $[\eff^C \cup \eff_1'](\h_{1},\h_{m+1}',\k_{n},\k_{m+n}')$. The reasoning is similar to the use of the previous game.

Thus we have that $(v_1,v_2) (E_1 \times E_2)  (v_1',v_2')$.

Now, we need to conclude that $[\eff^C \cup \eff_1' \cup \eff_2'](\h_{1},\h_1',\k_{n+m},\k_{m+n}')$. This follows from the fact that $\eff_1' \perp \eff_2'$ and (1) and (2). In particular, from (1) and $\eff_1' \perp \eff_2'$, we get that $\k_{m+n}$ and $\k_{m+n}'$ agree on the locations in $\eff_2'$, while from (2), we get that $\k_{m+n}$ and $\k_{m+n}'$ agree on the locations in $\eff_1'$. This finishes the proof.

\textbf{Duplicated.}
Assume given a trace in $U$:
\[
  t = (\h_1,\k_1) \cdots (\h_n,\k_n) \ v
\]
and a heap $\h_1'$ such that $\rloc{\h_1}{\h_1'}{\rdsin{\eff_2 \cup \eff'}}$. Since $\eff_2 \perp \eff_1$ and $\rdsin{\eff'} \cap \wrsin{\eff'} = \emptyset$, we have that $\h_1$ and $\k_n$ agree on the reads of $\eff'$.

We start by simply stuttering:
\[
 t' =  (\h_1',\h_1') (\h_2',\h_2') \cdots (\h_n',??)
\]
where $[\eff] (\k_i,\h_{i+1},\k_i',\h_{i+1}')$ for $1 \leq i \leq n+m$. Notice that for $1 \leq i \leq n-1$, we have $[\eff_1^C](\h_i,\h_i',\h_{i+1},\h_i')$. So the stuttering moves are valid responses. 

We will now play $U_1$ vs.\ $U_1'$ to construct the missing heap ``??''. We first run a game using $\h_n'$ and $t$, where the environment moves are simply stutter moves:
\[
  (\h_n',\q_1) (\q_1,\q_2) \cdots (\q_{n-1},\q_n)\ v_1'
\]
such that $v E v_1'$ and $[\eff^C \cup \eff'](\h_1,\h_n',\k_n,\q_n)$. Notice that using stuttering environment moves are valid as $[\eff^C](\k_i,\q_i,h_{i+1},\q_i)$ for $1 \leq i \leq n-1$.

Since $\h_1$ and $\k_n$ agree on the reads of $\eff'$ and $\q_n$ and $\k_n$ agree on $\rdsin{\eff'}$ from $[\eff^C \cup \eff'](\h_1,\h_n',\k_n,\q_n)$, we can run the game $U_1$ vs.\ $U_1'$ again on $\q_n$ and $t$ with stutter environment moves:
\[
  (\q_n,\q_{n+1}) (\q_{n+1},\q_{n+2}) \cdots (\q_{n+m-1},\q_{n+m})\ v_2'
\]
where $v E v_2'$ and $[\eff^C \cup \eff'](\h_1,\q_n,\k_n,\q_{n+m})$. Thus, $(v, v) (E \times E) (v_1', v_2')$. 

We now put $??:=q_{m+n}$ which leads to a valid trace due to repeated mumbling.  Finally, 
$[\eff \cup \eff_2'](\h_1, \h_1', \k_n, \q_{n+m})$ follows
from $[\eff^C \cup \eff'](\h_1,\q_n,\k_n,\q_{n+m})$ and $\eff \perp \eff'$.

\textbf{Pure.}
We start with a trace from $\textit{rtn}(v)$, for example $(\h_1, \h_1),v$ and an arbitrary heap $\h_1'$. We now consider the game involving $U$ vs.\ $U'$ on $t,v$ and $\h_1'$:
\[
\begin{array}{l}
  t = (\q_1,\k_1)(\q_2,\k_2) \cdots (\q_n,\k_n), v\\
  t'= (\h_1',\k_1') (\k_1',\k_2') \cdots (\k_{n-1}',\k_n'), v'
\end{array}
\]
We have that $v E v'$ and $[\eff_3](\q_1,\h_1',\k_n,\k_n')$. By mumbling, $(\h_1',\k_n') \in U'$. We can reply with $\k_n'$ in the main game.

\textbf{Dead.} 
Assume given a trace of the form:
\[
  (\h_1,\k_1) \cdots (\h_n,\k_n)  \ v
\]
and $\h_1'$ such that $\rloc{\h_1}{\h_1'}{\rdsin{\eff_3}}$. We now
initiate a side game $U$ vs.\ $U'$ on this trace and respond in the
main game by stuttering. Thus, we obtain traces $(\h_1', \h_1') \cdots
(\h_n',\h_n') \ ()$ in the main game and $(\h_1', \k_1') \cdots
(\h_n',\k_n') \ v'$ in the side game.

The main trace is in $\textit{rtn}(\unitval)$. The side game tells us that 
 $v = \unitval$ and that $\gloc{\h_i}{\k_i}{\eff_1}$ and therefore
$[\eff_1^C](\h_i,\h_i',\k_i,\h_i')$. It remains to
show that $[\eff \cup \eff_1'\cup \eff_2'](\h_1,\h_1',\k_n,\k_n')$. This follows from the
fact that $\eff_1$ has only reads as $\h_i$ and $\k_i$ agree on all
locations.

\paragraph{Parallelization.}




We start with a trace in $\mypar{U_1}{U_2}$. Assume that the trace is of the following form:
\[
t_{1,1} t_{2,1} t_{1,2} t_{2,2} \ldots t_{1,n} t_{2,n} \ (v_1,v_2)
\]
where each $t_{i,j}$ is a possibly empty sequence of moves of the form $(\h_{i,j}^1,\k_{i,j}^1) \cdots (\h_{i,j}^{m_{i,j}},\k_{i,j}^{m_{i,j}})$ and
\[
  \begin{array}{l}
  t_1 = t_{1,1} \cdots t_{1,n} \ v_1 \in U_1\\
  t_2 = t_{2,1} \cdots t_{2,n} \ v_2 \in U_2
  \end{array}
\]
are traces from $U_1$ and $U_2$, respectively. We are also given a heap $\h_1'$ such that $\rloc{\h_{1,1}^1}{\h_1'}{\rdsin{\eff \cup \eff_1' \cup \eff_2'}}$. We also have $\rloc{\h_{1,1}^1}{\h_1'}{\rdsin{\eff^C \cup \eff_2^C \cup \eff_1'}}$. We run a side game $U_1$ vs.\ $U_1'$ 
using $\h_1'$ and $t_1$, yielding:
\[
 t_{1,1}' \cdots t_{1,n}' \ v_1'  
\]
 Assume that $(\h_1',\k_1')$ and $(\h_o',\k_o')$ are, respectively, the first and last moves of this trace. We have $v_1 E_1 v_1'$ and (1) $[\eff^C \cup \eff_2^C \cup \eff_1'](\h_{1,1}^1,\h_1',\k_{1,n}^m,\k_o')$. Notice that these are legal moves in the global game as we have $[\eff_1^C \cup \eff_2^C]$ tiles for the player moves and $[\eff]$ times for the environment moves.

 Now, assume there is an environment move $(\k_o,\h_{o+1}')$.  Since $\eff_1 \perp \eff_2$ and $\eff \perp \eff_2$, the heaps $h_{1,1}^1$ and $h_{2,1}^1$ agree on the reads of $\eff_2'$ and $\h_1'$ and $\h_{o+1}'$ also agree on the reads of $\eff_2'$. (Notice as well that $\wrsin{\eff_1} \cap \rdsin{\eff_2'} = \emptyset$ as $\eff^C \cup \eff_1^C \cup \eff_2$ is a valid effect.) Therefore, we can invoke an $U_2$ game using $\h_{o+1}'$ and $t_2$, obtaining the trace:  
\[
 t_{2,1}' \cdots t_{2,n}' \ v_2'  
\]
Assume that $(\h_{o+1}',\k_{o+1}')$ and $(\h_{o+p}',\k_{o+p}')$ are, respectively, the first and last moves of this trace. We have $v_2 E_2 v_2'$ and (2) $[\eff^C \cup \eff_1^C \cup \eff_2'](\h_{2,1}^1,\h_{o+1}',\k_{2,n}^m,\k_{o+p}')$. For the same reasons as above, these are legal moves in the global game.

Therefore $(v_1,v_2) (E_1 \times E_2) (v_1',v_2')$. 

We need now to prove that $[\eff \cup \eff_1' \cup \eff_2'](\h_{1,1}^1,\h_1',\k_{2,n}^m,\k_{o+p})$. From (1) and $\eff_1 \perp \eff_2$ and $\eff \perp \eff_1$, we have that $\k_{2,n}^m$ and $\k_{o+p}$ agree on the locations of $\eff_1$. Similarly, $\k_{2,n}^m$ and $\k_{o+p}$ agree on the locations of $\eff_2$. Since there are only $\eff$ tiles and $\eff \perp \eff_1$ and $\eff \perp \eff_2$, $\k_{2,n}^m$ and $\k_{o+p}$ agree on the locations of $\eff$. This finishes the proof.

\end{proof}

\end{document}